\newtheorem{theorem}{Theorem}
\newtheorem{proposition}[theorem]{Proposition}
\newtheorem{definition}[theorem]{Definition}
\newtheorem{algorithm}[theorem]{Algorithm}
\newtheorem{lemma}[theorem]{Lemma}
\newtheorem{corollary}[theorem]{Corollary}
\newtheorem*{theorem*}{Theorem}
\newtheorem*{lemma*}{Lemma}
\newcommand{\Det}{\text{Det}}
\newcommand{\sgn}{\text{sgn}}
\newcommand{\nc}{\newcommand}
\nc{\rnc}{\renewcommand}
\def\ba#1\ea{\begin{align}#1\end{align}}
\def\bas#1\eas{\begin{align*}#1\end{align*}}
\def\bpm#1\epm{\begin{pmatrix}#1\end{pmatrix}}
\nc{\nn}{\nonumber}
\nc{\eq}[1]{(\ref{eq:#1})}
\nc{\eqs}[2]{(\ref{eq:#1}) and (\ref{eq:#2})}
\def\begsub#1#2\endsub{\begin{subequations}\label{eq:#1}\begin{align}#2\end{align}\end{subequations}}
\nc\qand{\qquad\text{and}\qquad}
\nc\mnb[1]{\medskip\noindent{\bf #1}}
\nc\benum{\begin{enumerate}}
\nc\eenum{\end{enumerate}}
\newcommand{\bea}{\begin{eqnarray}}
\newcommand{\eea}{\end{eqnarray}}
\nc\bit{\begin{itemize}}
\nc\eit{\end{itemize}}
\nc{\ot}{\otimes}
\rnc{\L}{\left}
\nc{\R}{\right}
\def\bbC{\mathbb{C}}
\def\bbF{\mathbb{F}}
\def\bbR{\mathbb{R}}
\def\R{\mathbb{R}}
\def\C{\mathbb{C}}
\def\poly{{\rm poly}}
\def\log{{\rm log}}
\def\Per{\rm Per}
\newcommand{\be}{\begin{eqnarray}}
\newcommand{\ee}{\end{eqnarray}}
\newcommand\floor[1]{{\lfloor #1 \rfloor}}
\newcommand\ceil[1]{{\lceil #1 \rceil}}
\def\NC{{\sf{NC}}}
\def\PL{{\sf{PL}}}
\def\PP{{\sf{PP}}}
\newcommand\Lspace{{\sf{L}}}
\newcommand\BPL{{\sf{BPL}}}
\newcommand\BQL{{\sf{BQL}}}
\newcommand\RL{{\sf{RL}}}
\newcommand\NL{{\sf{NL}}}
\newcommand\DET{{\sf{DET}}}
\newcommand{\ignore}[1]{}
\newcommand{\eps}{\varepsilon}
\renewcommand{\epsilon}{\varepsilon}
\nc{\hin}{h_{\text{in}}}
\nc{\pin}{\partial_{\text{in}}}
\nc{\pell}{\partial_{\ell}}
\newcommand{\nocontentsline}[3]{}
\newcommand{\tocless}[2]{\bgroup\let\addcontentsline=\nocontentsline#1{#2}\egroup}
\newcommand{\cftsectionprecistoc}[1]{\addtocontents{toc}{%
  {\leftskip \cftsecindent\relax
   \advance\leftskip \cftsecnumwidth\relax
   \rightskip \@tocrmarg\relax
   \textit{#1}\protect\par}}}
\begin{document}

\title{
		\huge Approximating the Determinant of Well-Conditioned Matrices by Shallow Circuits\\
}
\author{Enric Boix{-}Adser\`a \thanks{MIT EECS, eboix@mit.edu}, Lior Eldar \thanks{eldar.lior@gmail.com}, Saeed Mehraban
\thanks{IQIM Caltech, mehraban@caltech.edu}}

\date{\today}

\maketitle

\abstract{
The determinant can be computed by classical circuits of depth $O(\log^2 n)$, and therefore it can also be computed in classical space $O(\log^2 n)$. Recent progress by \cite{Ta13} implies a method to approximate the determinant of Hermitian matrices with condition number $\kappa$ in quantum space $O(\log n + \log \kappa)$. However, it is not known how to perform the task in less than $O(\log^2 n)$ space using classical resources only. In this work, we show that the condition number of a matrix implies an upper bound on the depth complexity (and therefore also on the space complexity) for this task:
the determinant of Hermitian matrices with condition number $\kappa$ can be approximated to inverse polynomial relative error with classical circuits of depth $\tilde O(\log n \cdot \log \kappa)$, and in particular one can approximate the determinant for sufficiently well-conditioned matrices in depth $\tilde{O}(\log n)$.
Our algorithm combines Barvinok's recent complex-analytic approach for approximating combinatorial counting problems \cite{Bar16} 
with the depth-reduction theorem for low-degree arithmetic circuits \cite{valiant1983fast}.

}

\section{Introduction}

\subsection{Background}

\subsubsection{Exact computation of the determinant}
Computing the determinant of a matrix is arguably one of the most basic operations in numerical linear algebra and is ubiquitous in 
many areas of science and engineering.
As such, it has been intensively researched over the years with landmark results that have reduced its complexity,
improved its numerical stability and increased its parallelism (see for example a survey at \cite{KALTOFEN04}).
Today, we know that given an $n \times n$ matrix we can compute the determinant in time $O(n^{\omega})$ where $\omega < 2.373$
is such that $O(n^\omega)$ is the cost of matrix multiplication \cite{williams2012multiplying,le2014powers}.

One can also try to parallelize the computation of the determinant using many processors. We know that the computation of the determinant is in $\NC^2$ \cite{B84} -- that is, it can be computed
by circuits of depth $O(\log^2 n)$. While this implies that the determinant is parallelizable, it is a major open question whether or not the determinant can be parallelized
even further - namely, for instance, whether the determinant lies in $\NC^1$, meaning that it can be computed by circuits of depth $O(\log(n))$. Letting $\DET$ denote the class of problems that are $\NC^1$-reducible to computing the determinant of an integer-valued matrix, we know that
\begin{equation}
\NC^1 \subseteq \Lspace \subseteq \RL \subseteq \NL \subseteq \DET \subseteq \NC^2  \tag*{\cite{csanky1975fast, cook1985taxonomy}}.
\end{equation}
In particular, a $O(\log(n))$-depth circuit for the determinant would imply $\NC^1 = \Lspace = \RL = \NL$, which would be a major breakthrough in our understanding of space-bounded computation. Furthermore, since the class $\DET$ captures many fundamental linear-algebraic tasks such as matrix powering, matrix inversion, and solving systems of linear equations, a faster algorithm for exact computation of the determinant would have far-reaching applications \cite{cook1985taxonomy}.

\subsubsection{Approximate Computation of the Determinant} \label{subsubsec:approxdetcomputation}
In this paper, instead of exact computation, we consider the problem of {\em approximately} computing the determinant up to a $(1 + 1/\poly(n))$ multiplicative factor. The purpose of this section is to provide an overview of known bounds on the complexity of this task prior to this paper.

 The approximation problem is trivially reducible to $\DET$ and hence is contained in $\NC^2$. Interestingly, it turns out that merely computing the sign of the determinant is complete for the complexity class probabilistic logspace ($\PL$) \cite{AO96}. $\PL$ is the logspace analog of the class $\PP$, contains $\NL$ and is contained in $\NC^2$. As a result, similarly to exact computation, providing an $\Lspace$ algorithm for determinant approximation would imply surprising statements such as $\Lspace = \NL = \PL$. Hence, we would like to ask a more fine-grained question: Can the determinant be approximated using small space or depth on special inputs? 

The answer turns out to concretely depend on the degree to which a matrix is singular. In more precise terms, it depends on condition number of the input matrix, which is the largest-to-smallest singular value ratio of the matrix. 
Computing the sign of the determinant is complete for $\PL$ if the matrix is allowed to be arbitrary, in which case the condition number can be exponentially large. However, a result of Ta-Shma \cite{Ta13} shows that inverting a matrix with polynomially large condition number is contained the complexity class $\BQL$. As we describe in Appendix \ref{app:psddetreductiontosolving}, the techniques of Ta-Shma imply that $\BQL$ can approximate the determinant for matrices with polynomially large condition numbers. One may conjecture that the determinant approximation problem for polynomially-conditioned matrices is complete for $\BQL$. An item of evidence in favor of this conjecture is a result of Fefferman and Lin \cite{fefferman}, who show that inverting polynomially-conditioned matrices is complete for the class $\BQL$.\footnote{And while it is known that approximate inversion is $\NC^1$-reducible to determinant approximation through Cramer's rule, this reduction does not immediately work within the class of well-conditioned matrices considered by Fefferman and Lin.}

If a polynomially-conditioned matrix has certain special structure, then the absolute value of its determinant may be approximable by a weaker classical machine. This follows from an $\NC^1$-reduction that we give in Appendix \ref{app:psddetreductiontosolving} from approximating the absolute value of the determinant to approximate matrix inversion. Implications of our reduction include: 
\begin{itemize}\item if $A$ is a $\poly(n)$-conditioned symmetric diagonally dominant (SDD) matrix, then $\Det(A) = |\Det(A)|$ can be approximated in near-$\Lspace$, because there is a nearly-logspace SDD solver for $A$ \cite{murtagh2017derandomization}. \item if $A$ is a $\poly(n)$-conditioned Hermitian stochastic matrix, then $|\Det(A)|$ can be approximated in $\BPL$. This follows by combining the $\BPL$-algorithm of Doron et al. \cite{doron2017approximating} for approximating powers of $A$ with a parallelized version of the gradient descent algorithm for solving $Ax = b$.  
\item if $A$ is a $\kappa$-conditioned matrix, then $|\Det(A)|$ can be approximated in $\tilde{O}(\log(n)\cdot \log (\kappa))$ depth, because equations of the form $Ax = b$ can be solved in $\tilde{O}(\log(n) \cdot \log (\kappa))$ depth using parallelized gradient descent.
\end{itemize}
Details on the reduction from approximate calculation of $|\Det(A)|$ to approximate inversion of $A$ are given in Appendix \ref{app:psddetreductiontosolving}, but we note that an important barrier to this technique is the computation of the sign of the determinant, even for a Hermitian matrix. It is a priori unclear how to compute the signed determinant $\Det(A)$ with a classical algorithm running in space less than $\log^2(n)$, even if it has condition number $\kappa = O(1)$.

The main contribution of this paper is to surmount this barrier in computing the sign. For example, we show that the signed determinant of $\poly\log(n)$-conditioned matrices can be computed in nearly-logarithmic depth or nearly-logarithmic space if either the matrices are Hermitian or Hurwitz stable (Hurwitz stable matrices are defined as those that matrices that have eigenvalues with negative real parts).

\subsection{Main Results}

In this work we improve on prior art by proposing an algorithm for computing the determinant of well-conditioned $n\times n$ 
Hermitian or Hurwitz stable
matrices with condition number $\kappa$ that runs in depth
$$
\tilde{O}(\log(n) \cdot \log(\kappa))
$$
A matrix is Hurwitz stable if the real parts of its eigenvalues are negative. 
\begin{theorem}(sketch)
Let $A$ be an $n\times n$  Hermitian or Hurwitz stable matrix
with condition number $\kappa$.
There exists a Boolean circuit for approximating $\Det(A)$ to multiplicative error $1 + 1/\poly(n)$
that has $\tilde O(\log(n) \cdot \log (\kappa))$ depth. 
The circuits for this algorithm can be computed by a Turing machine using $\log(n)$-space.
\end{theorem}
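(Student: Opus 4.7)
The plan is to combine Barvinok's complex-analytic interpolation method with Valiant's depth reduction theorem \cite{valiant1983fast}. First, I would associate to the input matrix $A$ a univariate polynomial $p(z)$ with $p(0) \in \{+1,-1\}$ known explicitly and $p(1) = \Det(A)$; the natural choice is $p(z) = \Det((1-z)B + zA)$ for a reference matrix $B$, but in the simplest case one takes $p(z) = \Det(I + zM)$ with $M = A - I$ after a suitable normalization of $A$. The crucial analytic step is to show that, under the Hermitian or Hurwitz-stable hypothesis together with condition number $\kappa$, one can choose $p(z)$ and a zero-free disk $\{|z| \leq 1 + \delta\}$ with $\delta = \Omega(1/\kappa)$ containing the interpolation segment $[0,1]$. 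For PSD Hermitian $A$ normalized to have spectrum in $[1/\kappa,1]$ this is immediate, since the zeros of $p$ lie at $1/(1-\lambda_i)$, are all real, and are at distance $\Omega(1/\kappa)$ from $[0,1]$; for general Hermitian or Hurwitz-stable $A$, the base point and normalization must be chosen more carefully so that the chosen complex path from $p(0)$ to $p(1)$ avoids the singular locus, which is how the sign of $\Det(A)$ will be recovered.

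Inside the zero-free disk, $g(z) := \log p(z)$ is analytic, and standard Barvinok-style estimates show that its degree-$d$ Taylor truncation $T_d$ at $0$ approximates $g(1)$ to additive error $O(n/((\beta-1)\beta^d))$, which is at most $1/\poly(n)$ once $d = O(\kappa \log n) = \tilde O(\kappa)$. The Newton identities give the explicit formula
\bas
T_d(1) \;=\; \sum_{k=1}^d \frac{(-1)^{k-1}}{k}\,\tr(M^k),
\eas
so $p(0)\exp(T_d(1))$ approximates $\Det(A)$ to multiplicative error $1 + 1/\poly(n)$, with the branch of $\log$ dictated by the chosen path, which supplies the sign. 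Viewed as a polynomial in the entries of $A$, $T_d(1)$ has degree $d$ and is computable by a polynomial-size arithmetic circuit built from iterated matrix multiplication and coefficient-wise combinations.

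The remaining work is to implement all of this in low Boolean depth. Valiant's depth reduction theorem converts the polynomial-size, degree-$d$ arithmetic circuit for $T_d(1)$ into an arithmetic circuit of depth $O(\log d \cdot \log n) = \tilde O(\log \kappa \cdot \log n)$, and each arithmetic gate becomes a Boolean subcircuit whose depth is polylogarithmic in the bit precision (itself polynomial in $n$ and $\log \kappa$). The final exponentiation $\Det(A) \approx p(0)\exp(T_d(1))$ reduces to computing $2^m \exp(r)$ where $T_d(1) = m \ln 2 + r$ with $m \in \bbZ$ and $|r| < \ln 2$; this is low depth because $\exp(r)$ is the sum of a short Taylor series on a bounded interval and $2^m$ is a shift. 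Log-space uniformity follows from the oblivious iterated-matrix-multiplication template underlying the entire construction.

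The hardest step, I expect, is the first one: constructing a Barvinok parametrization that simultaneously yields (i) a zero-free disk of radius $1 + \Omega(1/\kappa)$ around the segment $[0,1]$, and (ii) the correct sign of $\Det(A)$ in the Hermitian case with mixed-sign eigenvalues and in the Hurwitz-stable case with complex spectrum. This is precisely the ``sign of the determinant'' barrier identified in \secref{} as the main contribution of the paper, and it is what prevents one from simply reducing the task to linear-system solving as in the bullet list of \secref{}. A secondary technical obstacle is bookkeeping the bit precision through the Valiant-compressed circuit, so that multiplicative accuracy $1 + 1/\poly(n)$ is maintained while the Boolean depth stays at $\tilde O(\log n \cdot \log \kappa)$.
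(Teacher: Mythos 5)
Your proposal correctly identifies the two main ingredients — Barvinok-style interpolation of $g_A(z) = \Det((1-z)I + zA)$ and Valiant--Skyum--Berkowitz--Rackoff depth reduction — and your use of Newton's identities $T_d(1) = \sum_{k=1}^d \frac{(-1)^{k-1}}{k}\tr(M^k)$ is a clean alternative to the paper's route (dynamic programming over Samuelson--Berkowitz followed by Bruno di Fa\`a / Bell polynomials) for extracting the Taylor coefficients of $\log g_A$ at $z=0$. However, there is a genuine gap exactly where you flag uncertainty, and it is not a minor one: the single zero-free disk $\{|z| \leq 1+\delta\}$ with $\delta = \Omega(1/\kappa)$ exists only in the PSD case. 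For a general Hermitian $A$ with eigenvalues in $[-1,-\delta] \cup [\delta, 1]$, the roots $z_i = 1/(1-\lambda_i)$ of $g_A$ can sit anywhere on the real line outside $(-\infty,0) \cup (1,\infty)$ would be nice, but in fact $\lambda_i = -1$ yields $z_i = 1/2$, squarely inside the unit disk. Hence the direct Taylor truncation at $z=0$ never converges at $z=1$, no matter the truncation degree, and the sign of $\Det(A)$ — the barrier explicitly highlighted in the introduction — is not recovered by ``choosing the branch of $\log$,'' because there is no single-disk branch to speak of.

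The paper's main technical contribution is precisely the piece you left open. It observes that for Hermitian $A$ all roots are \emph{real}, so $g_A$ is root-free on ${\cal D}(0,1/2) \cup {\cal D}(1, \frac{\delta}{1+\delta})$ (Theorem on roots vs.\ eigenvalues), and then runs a multi-segment Computational Analytic Continuation (Algorithm~\ref{algorithm:cac}) along a path that lifts off the real axis to $1 + i/2$, crosses over, and decelerates down to $z=1$ in $t = O(\log(1/\delta))$ steps of geometrically shrinking length (Lemma~\ref{lem:CAC2}). The Hurwitz stable case replaces the lift-off by a half-plane argument (Lemma~\ref{lem:CAC3}). The error analysis of this multi-segment propagation (Lemma~\ref{lem:caccorrectness}) is delicate: to keep the final additive error $1/\poly(n)$ one needs $m_0 = (\log n)\cdot (\text{polylog})^{O(\log\kappa)}$ derivatives rather than your $O(\kappa \log n)$, and showing that the whole CAC dynamic program is a low-degree $\poly(n)$-size arithmetic circuit (so that depth reduction applies) is Lemma~\ref{lem:shallowCAC}. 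Without the multi-segment path and its analysis, your argument proves the theorem only for PSD matrices, where computing $|\Det(A)| = \Det(A)$ already reduces to linear-system solving as in Appendix~\ref{app:psddetreductiontosolving}.
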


\noindent
A direct corollary is the following:
\begin{corollary}(sketch)
Let $A$ be an $n \times n$ Hermitian or Hurwitz stable matrix with condition number $\kappa$. There exists an algorithm for approximating $\Det(A)$ to multiplicative error $1 + 1/\poly(n)$ that uses $\tilde{O}(\log(n) \cdot \log (\kappa))$ space.
\end{corollary}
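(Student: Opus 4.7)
The plan is to appeal directly to the theorem just stated and then apply the standard simulation of log-space uniform bounded-fan-in Boolean circuits by a space-bounded Turing machine (often attributed to Borodin). First, I would invoke the theorem to obtain a Boolean circuit $C$ of depth $d = \tilde O(\log n \cdot \log \kappa)$ that approximates $\Det(A)$ to the stated multiplicative error, together with the guarantee that the description of $C$ can be produced by a Turing machine using $O(\log n)$ space. In particular $C$ is log-space uniform in the usual sense: given the index of any gate, an $O(\log n)$-space machine can output the gate's type and the indices of its (at most two) children.

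Next, I would evaluate $C$ on the input by the classical depth-first-search procedure that underlies the inclusion of log-space uniform $\NC^i$ in $\mathrm{DSPACE}(\log^i n)$. The algorithm starts at the output gate and recurses into the appropriate child, invoking the uniformity machine on demand to produce child indices and gate types. The crucial implementation point is to represent the recursion stack by storing only the sequence of branching choices made along the current root-to-gate path, using one bit per level for a fan-in-two circuit; the current gate's index is then recomputed from the path in a workspace that is reused across all recursion levels, as is the $O(\log n)$ workspace used by the uniformity machine. Altogether the simulation runs in space $O(d + \log n) = \tilde O(\log n \cdot \log \kappa)$, matching the claim.

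The one thing to be careful about is precisely this stack representation: a naive recursion that pushes a full $O(\log n)$-bit gate identifier at every level would only yield $O(d \log n)$ space, which is larger by an extra factor of $\log n$ and would not give the stated bound. Modulo this standard trick, the argument is a direct composition of the theorem with a well-known circuit simulation, and no genuinely new idea is required beyond the theorem itself; the uniformity condition supplied by the theorem is exactly what the simulation needs in order to produce children and gate types without consulting an explicit description of $C$.
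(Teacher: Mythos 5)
Your proposal is correct and is exactly the route the paper takes: the corollary is presented as a ``direct corollary'' invoked via what the introduction calls the folklore connection between depth and space, i.e., the standard Borodin-style simulation of logspace-uniform bounded-fan-in circuits by space-$O(\mathrm{depth})$ Turing machines. You have supplied the standard implementation details (recomputing the current gate from a constant-bits-per-level path stack, reusing the $O(\log n)$ workspace of the uniformity machine) that the paper leaves implicit.
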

%In particular for well-conditioned Hermitian or stable matrices with $\kappa = \poly\log(n)$ it follows that their determinant can be approximated
%in near-${\rm NC}^1$.

\subsection{Proof Overview}

Our algorithm is inspired by the Taylor-series approach to computing a multiplicative approximation of the permanent, pioneered by Barvinok \cite{Bar16}.
In this approach, the permanent is reduced from a degree-$n$ polynomial in its entries to a univariate polynomial
as follows:
$$
g_A(z) = \Per(  (1 - z) \cdot J + z \cdot A),
$$
where $J$ is the all ones matrix. 
The polynomial $g_A(z)$ admits a Taylor series decomposition which converges to the true value of the function,
and in particular at $z=1$ -- namely $\Per(A) = g_A(1)$ which is our target -- whenever all the roots of $g_A(z)$ are
bounded away from the unit disk.

In order to compute a multiplicative approximation of $\Per(A)$, Barvinok considers $f_A(z) = \log(g_A(z))$
and computes an {\it additive} approximation of $f_A(z)$ at $z=1$ for any matrix $A$ for which the roots of $g_A(z)$
are bounded away from the unit disk.
For this algorithm, the Taylor series of $f(z)$ needs to contain only $O(\log(n))$ terms in order to provide
a $1/\poly(n)$ additive approximation error for $\log(\Per(A))$.
The algorithm then computes all low-order derivatives of $g(z)$ in time $2^{O(\log^2(n))}$,
uses them to compute all low-order derivatives of $f(z)$ at $z=0$, and then computes $f(1)$ as a Taylor-series approximation and finally returns $e^{f(1)}$.

Barvinok's approach was used in recent years to show a quasi-polynomial-time algorithm for computing the permanent of special cases
of matrices \cite{Barvinok2013, Barvinok2016, Bar16}, and to provide better approximations of partition functions \cite{patel2017deterministic,liu2019ising,liu2019fisher,mann2019approximation,harrow2019classical}.
In particular, authors 2 and 3 of this paper showed how to approximate the permanent of most Gaussian matrices
by considering a random ensemble of such matrices with a vanishing, non-zero mean \cite{EM18}.

The determinant is an $n$-variate polynomial that is very similar to the permanent, at least as a syntactic polynomial,
with permutation signs multiplying each monomial.
Hence it is natural to consider the determinant as a candidate for applying the Taylor-series approach.
%However, since a polynomial-time algorithm is already known for the determinant our goal
%is to focus on the {\it depth} complexity of the algorithm, which we then use to conclude a space-efficient log-space
%algorithm for approximating the determinant, by the folklore connection between space and depth complexity.
However, a polynomial-time algorithm is already known for the determinant and this approach will not do any better. Our goal, instead,
is to focus on the {\it depth} complexity of the algorithm, which we then use to conclude a space-efficient log-space
algorithm for approximating the determinant, by the folklore connection between space and depth complexity.

To recap, the main logical steps of the Taylor series meta-algorithm are:
\begin{enumerate}
\item
Define a polynomial $g(z)$ that interpolates between an easy-to-compute determinant at $z=0$
and the determinant of our target matrix $A$.
\item
Choose the polynomial $g(z)$ so that for a large natural class of matrices we have that $g(z)$ satisfies
the condition that all its roots are bounded-away from the unit disk.
\item
Demonstrate a low-depth algorithm for computing all low-order derivatives of $g(z)$.
\end{enumerate}
Notably, the first two steps all appeared in works on the permanent \cite{Barvinok2016, Bar16, EM18}.
However, the third step is required for the case of the determinant, where our goal is to construct
a {\it low-depth} circuit.

In this work, we solve these requirements in the following way:
\begin{enumerate}
\item
We set $g(z) = \Det( (1-z) \cdot I + z \cdot A)$. Clearly $g(0) = \Det(I)$ which is easy to compute and $g(1) = \Det(A)$.
\item
The polynomial $g(z)$ is reminiscent of the characteristic function of $A$
$$
\chi(A) \equiv \Det( \lambda I  - A)
$$
One can easily check that if $A$ is a Hermitian matrix that is well-conditioned the roots of $g(z)$ are all bounded away
from either $z=0$ or $z=1$, {and that they are all real}.
\item
In order to compute the derivatives of $g(z)$ using shallow circuits we build upon the fact that
(contrary to the permanent) we do in fact have a polynomial-time algorithm for the determinant.
We use that algorithm, in conjunction with the algorithm for parallelizing computation of low-degree polynomials due to Valiant et al. \cite{valiant1983fast},
to show that any order-$t$ derivative of $g(z)$ can be computed by a circuit of depth $O(\log(n) \cdot \log (t))$.
\end{enumerate}

In order to compute a multiplicative approximation of the determinant of the input matrix, several additional steps must be added
that can compute the derivatives of $f(z) = \log g(z)$ from those of $g(z)$, and making sure that one can implement
the arithmetic circuits for these polynomials using Boolean circuits with small overhead (which is one of the reasons that our space is not precisely logarithmic but rather has extra $\log\log(n)$ factors.)
We summarize the main steps of the parallel algorithm here and refer the reader to section \ref{sec:alg} for a detailed
description of the computational steps.
\begin{mdframed}
\begin{enumerate}
\item
Input: $\kappa \leq \poly(n)$, and an $n\times n$ Hermitian or Hurwitz stable matrix $A$ such that $I / \kappa \preceq |A| \preceq I$.
\item
Round each entry to $O(\kappa \log(n))$ bits of precision.
\item
Compute the first $k = (\log n) \cdot (\kappa \log \log n)^{O(\log \log \kappa )}$ derivatives of $g(z) = \Det( (1-z) I + z A)$ at $z=0$
using a dynamic program that is attached to the Samuelson-Berkowitz algorithm \cite{B84, S42}.
This dynamic program can be parallelized to depth $\tilde{O}(\log(n) \cdot \log(k)) = \tilde{O}(\log(n) \cdot \log(\kappa))$ by the  algorithm for parallelizing low-degree arithmetic circuits due to Valiant-Skyum-Berkowitz-Rackoff \cite{valiant1983fast}.
\item
Using Bell polynomials, compute the first $k$ derivatives of $f(z) = \log(g(z))$ at $z=0$ as in \cite{Bar16}. Also parallelize this step to depth $\tilde{O}(\log(n) \cdot \log(\kappa))$ using \cite{valiant1983fast}.
\item
Use CAC interpolation, introduced in \cite{EM18}, to compute the value of $f(1)$ by constructing an interpolation path that avoids the roots
of $g(z)$ (or poles of $f(z))$. Again parallelize CAC interpolation to depth $\tilde{O}(\log(n) \cdot \log(\kappa))$ using \cite{valiant1983fast}.
\item
Return $e^{f(1)}$.
\end{enumerate}
\end{mdframed}

\subsection{Discussion and Future Work}

Our result implies that the determinant of a large class of matrices, namely polylogarithmically well-conditioned Hermitian or Hurwitz stable matrices, can be approximated to inverse polynomial
relative error in space which is nearly logarithmic in the matrix size.
A natural next step would be to try to extend our algorithm to run in logarithmic space or depth for matrices with
polynomial condition number, which could then amount to an $\NC^1$ algorithm for "almost" any matrix in the Wigner ensemble \cite{Tao}. Another direction could be to try to show that approximating the determinants of polynomially-conditioned matrices is $\BQL$-complete, as discussed in Section \ref{subsubsec:approxdetcomputation}. We note that proving both the algorithm and the completeness result would imply the de-quantization of $\BQL$.

\section{Preliminaries}

\subsection{Basics}

%\begin{definition}[Norms]
Given a complex matrix $A \in \mathbb{C}^{n \times n}$, let $A^{\dag}$ denote its conjugate transpose. $A$ is Hermitian if $A = A^{\dag}$, in which case the eigenvalues of $A$ are real. $A$ is positive semi-definite (PSD) if it is Hermitian and has nonnegative eigenvalues. We write $A \succeq 0$ if $A$ is PSD. For Hermitian matrices $A$ and $B$, we write $A \succeq B$ if $A - B \succeq 0$, and we note that $\succeq$ defines a valid partial ordering. The absolute value of $A$ is defined as $|A| = \sqrt{A^{\dag}A}$. The singular values $0\leq s_n(A) \leq \dots \leq s_1(A)$ of $A$ are the eigenvalues $s_i(A) = \lambda_i(|A|)$ of $|A|$. The spectral norm $\|A\|_2$, is the maximum singular value $s_1(A)$. The max-norm $\|A\|_{\max} = \max_{i,j} |A_{i,j}|$ is the maximum absolute value of an entry in $A$.
\begin{definition}[Condition number]
The condition number of $A$ is $\kappa(A) := s_1(A) / s_n(A)$.
\end{definition}
In this paper, we will focus on well-conditioned Hermitian and Hurwitz stable matrices that are normalized to have spectral norm at most $1$:
\begin{definition} [Well-conditioned matrices] Let $0 \leq \delta \leq 1$. 
\begin{enumerate}
\item[(1)] The class of well-conditioned Hermitian matrices with parameter $\delta$ is defined as 
$$
{\cal H}_\delta = 
\{A \in \C^{n\times n} : A = A^\dag, \delta \cdot I \preceq |A| \preceq I \}.
$$
\item[(2)] The class of well-conditioned Hurwitz stable matrices with parameter $\delta$ is defined as 
$$
{\cal S}_\delta = 
\{A \in \C^{n\times n} : \forall i, \Re(\lambda_i(A)) < 0, \delta \cdot I \preceq |A| \preceq I \}.
$$
\end{enumerate}
Note that a matrix $A \in {\cal H}_{\delta} \cup {\cal S}_{\delta}$ has condition number $\kappa \leq 1/\delta$.
\end{definition}
%
%\enric{Flesh out -- description in words. TODO: ordering of subsections. Where should this go?}

One of the main complexity classes discussed in this paper is $\NC$ defined as the following.
\begin{definition} [Nick's class]  $\NC [h (n)]$ is the class of $\{0,1\}^n \rightarrow \{0,1\}^*$ Boolean functions computable by a logspace-uniform family of Boolean circuits, with polynomial size, 
depth $O(h)$, and fan-in $2$. $\NC^i := \NC [\log^i n]$.
\end{definition}

\subsection{Arithmetic circuits}
\begin{definition}[Polynomials]
\noindent
Let $g \in \bbF [x_1, \ldots, x_N]$, be a multivariate polynomial in variables $x_1, \ldots, x_N$, over field $\bbF$. The degree of a monomial of $g$ is the total number of variables in the monomial (counting with multiplicity). The total degree of $g$ ($\deg g$) is the maximum degree of a monomial in $g$. 
\end{definition}

\begin{definition}[Arithmetic circuits] An arithmetic circuit is a directed acyclic graph on nodes $v_1,\ldots,v_s$, called gates. If a node has indegree 0, it is called an input gate, and is labelled with a either a field element in $\mathbb{F}$ or a variable in $\{x_1,\ldots,x_N\}$. Otherwise $v$ is labelled as either an addition or a multiplication gate. Finally, $v_s$ is the ``output'' gate of the circuit.

Each gate $v$ recursively computes a polynomial $p_v \in\bbF[x_1,\ldots,x_N]$. If $v$ is an input gate, then $p_v$ is its label. Otherwise, let $w_1,\ldots,w_t$ be the children of $v$. If $v$ is an addition gate then it computes $p_v = \sum_{i=1}^t p_{w_i}$, and if $v$ is a multiplication gate then it computes $p_v = \prod_{i=1}^t p_{w_i}$. Overall, the arithmetic circuit is said to compute the polynomial $p_{v_s}$ computed at its output gate.

If all gates have indegree 0 or indegree 2, then the circuit is said to have fan-in 2. Except when explicitly stated otherwise, all arithmetic circuits in this paper have fan-in 2.
\label{def:circuits}
\end{definition}

In this paper, we will use two fundamental algorithms from the arithmetic circuit literature. The first algorithm, which can be traced back to Strassen \cite{strassen73vermeidung}, allows us to efficiently convert an arithmetic circuit ${\cal C}$ computing a polynomial $p(x_1,\ldots,x_N,z) \in \bbF[x_1,\ldots,x_N,z]$ into an arithmetic circuit ${\cal C}'$ computing the coefficient of $z^k$ in $p$ (which itself is a polynomial in $\bbF[x_1,\ldots,x_N]$). Formally:

\begin{definition}For any polynomial $g(z,x_1,\ldots,x_N) \in \bbF[z,x_1,\ldots,x_N]$ and integer $k \geq 0$, let 
$$
[z^k]g(z,x_1,\ldots,x_N) \in \bbF[x_1,\ldots,x_N]
$$ 
denote the coefficient of $z^k$ in $g$: i.e., 
$$
g(z,x_1,\hdots, x_N) = \sum_{i=0}^{\infty} ([z^k]g(z,x_1,\ldots,x_N)) \cdot z^k
$$
\end{definition}
Note that up to a factor of $k!$, the polynomial $[z^k]g$ is the same as the $k$th partial derivative of $g$ with respect to $z$, evaluated at $z = 0$:
$$k! \cdot [z^k]g(z,x_1,\ldots,x_N) = \frac{\partial^k}{\partial z^k} g(z,x_1,\ldots,x_N) |_{z=0}.$$

The result that we use is stated below.
\begin{lemma}[Computing the single-variable derivative of an arithmetic circuit, \cite{strassen73vermeidung}]\label{lem:deriv}
Let 
$$
g(z, x_1, \ldots, x_N) \in \bbF[z,x_1,\ldots,x_N]
$$ 
be a polynomial computed by a fan-in-2 arithmetic circuit ${\cal C}$ of height $h$. 
Then, for any $k\geq 0$, $[z^k]g(z,x_1,\ldots,x_N)$ can be computed by a fan-in-2 arithmetic circuit ${\cal C}'$ of size $|{\cal C}'| = O(k^2 \cdot |{\cal C}|)$ and depth $(k+1)h$. Moreover, ${\cal C}'$ can be computed from ${\cal C}$ and $k$ in logarithmic space.
\end{lemma}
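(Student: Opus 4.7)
My plan is to give a gate-by-gate translation of $\mathcal{C}$ into $\mathcal{C}'$: for each gate $v$ of $\mathcal{C}$ computing the polynomial $p_v(z,x_1,\ldots,x_N)$, I introduce $k+1$ new gates $v^{(0)},v^{(1)},\ldots,v^{(k)}$ in $\mathcal{C}'$, designed so that $v^{(j)}$ computes $[z^j]p_v\in\bbF[x_1,\ldots,x_N]$. The output gate of $\mathcal{C}'$ will then be $v_s^{(k)}$, where $v_s$ was the output gate of $\mathcal{C}$. The construction is local at each gate and I would show correctness by induction on the depth of $v$ in $\mathcal{C}$.

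For the base case, if $v$ is an input gate labeled by a field element $\alpha$ or a variable $x_i$, set $v^{(0)}$ to be that same input and $v^{(j)} = 0$ for $j\geq 1$; if $v$ is labeled $z$, set $v^{(1)} = 1$ and $v^{(j)} = 0$ otherwise. For the inductive step, suppose $v$ is a fan-in-2 gate with children $w_1,w_2$. If $v$ is an addition gate, I define $v^{(j)} := w_1^{(j)} + w_2^{(j)}$, which correctly computes $[z^j](p_{w_1}+p_{w_2})$ using one fan-in-2 addition gate per $j$. If $v$ is a multiplication gate, I define
\[
v^{(j)} := \sum_{i=0}^{j} w_1^{(i)} \cdot w_2^{(j-i)},
\]
which correctly computes $[z^j](p_{w_1}\cdot p_{w_2})$ by the standard Cauchy convolution formula; I would implement the sum as a left-to-right chain of $j$ additions on top of $j+1$ multiplications, giving a subcircuit of depth at most $k+1$ and at most $2(k+1)$ fan-in-2 gates. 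Correctness of the entire construction follows by straightforward induction, since every $v^{(j)}$ only references $w_1^{(i)}, w_2^{(i)}$ for children $w_1,w_2$ and $i\leq j\leq k$.

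For the resource bounds: each original gate spawns at most $O(k^2)$ new gates (the multiplication case dominates), yielding total size $|\mathcal{C}'| = O(k^2\cdot|\mathcal{C}|)$; and each level of $\mathcal{C}$ contributes at most $k+1$ additional levels to $\mathcal{C}'$ (worst case: the multiplication convolution), so the depth of $\mathcal{C}'$ is at most $(k+1)h$. Finally, for logspace computability: the transformation is entirely local—given an enumeration of the gates of $\mathcal{C}$, a Turing machine can iterate over pairs $(v,j)$ with $v\in\mathcal{C}$ and $0\leq j\leq k$, and for each such pair write out the $O(1)$ (addition) or $O(k)$ (multiplication) edges of the corresponding subcircuit by looking up $v$'s label and its children's names in $\mathcal{C}$. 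All counters (gate indices, $i$, $j$) require only $O(\log|\mathcal{C}| + \log k)$ bits. The main technical obstacle is really just bookkeeping: one must fix a concrete naming scheme for the new gates $v^{(j)}$ and for the auxiliary gates inside each multiplication-convolution subcircuit so that the output circuit description can be streamed by a logspace machine without having to store more than a constant number of indices at once. Once that naming scheme is pinned down, all three claims (size, depth, logspace uniformity) follow directly from the construction.
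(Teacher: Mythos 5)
Your proof is correct and follows essentially the same approach as the paper's: a gate-by-gate dynamic program that maintains, for each gate $v$ and each $0 \le j \le k$, a gate computing $[z^j]p_v$, handling addition gates componentwise and multiplication gates via the Cauchy convolution implemented as a chain of depth at most $k+1$. The size, depth, and logspace-uniformity accounting also matches the paper's.
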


\begin{proof} For each node $v$ of the circuit ${\cal C}$, let $p_v(x_1,\hdots, x_N, z)$ denote the polynomial computed at $v$. We construct a circuit ${\cal C}'$ computing $[z^k]$ with a dynamic program based on ${\cal C}$:
\begin{enumerate}
\item For each gate $v$ of ${\cal C}$ and each integer $0 \leq i \leq k$, add a gate $(v,i)$ to $C'$. We will guarantee that the polynomial $p'_{(v,i)}$ computed at $(v,i)$ equals $[z^i]p_v(z,x_1,\ldots,x_N)$.
\item For each $(v,i)$ such that $v$ is an input gate of ${\cal C}$, let $(v,i)$ be an input gate of ${\cal C}'$ and label it by $p'_{(v,i)} = [z^i]p_v \in \bbF \cup \{x_1,\ldots,x_N\}$.
\item For each $(v,i)$ such that $v$ is an addition gate of ${\cal C}$ with inputs $w_1,w_2$, let $p'_{(v,i)} = p'_{(w_1,i)} + p'_{(w_2,i)}$.
\item For each $(v,i)$ such that $v$ is a multiplication gate of ${\cal C}$ with inputs $w_1,w_2$, let $p'_{(v,i)} = \sum_{j=0}^i p'_{(w_1,j)} \cdot p'_{(w_2,i-j)}$. This can be implemented by adding at most $2i \leq 2k$ intermediate addition and multiplication gates.
\end{enumerate} By induction on the depth, the polynomial $p'_{(v,i)}$ computed at each gate $(v,i)$ equals $[z^i]p_v$. Let the output gate of ${\cal C}'$ be $(v_s,k)$, where $v_s$ is the output gate of ${\cal C}$. Therefore ${\cal C}'$ correctly computes $[z^k]p_{v_s}$. This entire construction can be implemented in logarithmic space.

Finally, $|{\cal C}'| = O(k^2 |{\cal C}|)$ because for each gate $v$ in ${\cal C}$ at most $2k(k+1)$ gates are added in the construction of ${\cal C}'$. And ${\cal C'}$ has depth $(k+1)h$ because each gate is replaced with a gadget of height at most $k+1$.
\end{proof}

%\item
%if $v = x_i$ for some $i\in [n]$
%the $d[v,j] = x_i$ for $j=0$ and $0$ for all $0<j\leq k$.
%\item
%if $v = z$ then $d[v,j]=z$ for $j=0$, $d[v,j]=1$ for $j=1$ and $d[v,j]=0$ for all $1<j \leq k$. 
%\end{enumerate}
%\item
%At each intermediate node: $v = a \circ b$, with corresponding derivatives $d[a,j], d[b,j]$ for $j\in [k]$
%we again have two cases:
%\begin{enumerate}
%\item
%If $\circ = +$ then $d[v,j] = d[a,j] + d[b,j]$ for all $j\in [k]$.
%\item
%If $\circ = *$ then $d[v,j]$ is computed by the polynomial in $d[a,j], d[b,j]$ defined by the chain rule of derivatives.

The second classical result that we require is the depth-reduction theorem of Valiant-Skyum-Berkowitz-Rackoff for fast parallel computation of low-degree polynomials. Informally, this theorem states that if a low-degree polynomial can be computed by a small arithmetic circuit, then it can also be computed in low parallel complexity:
\begin{theorem}[Depth reduction for arithmetic circuits, \cite{valiant1983fast}]
Let $g(x_1,\ldots,x_N) \in \bbF[x_1,\ldots,x_N]$ be a multivariate polynomial of total degree $d$ computed by a fan-in-2 arithmetic circuit ${\cal C}$ of size $s$. Then there is an arithmetic circuit $D({\cal C})$ of size $\poly(sd)$ and depth 
$O(\log d)$ that computes $g$. Moreover, $D({\cal C})$ can be computed from ${\cal C}$ in logarithmic space, each multiplication gate of $D({\cal C})$ has fan-in 2, and each addition gate of $D({\cal C})$ has fan-in $\poly(sd)$.
\label{thm:depth-reduction}

In particular, by replacing each addition gate by a $O(\log(sd))$-depth tree of fan-in 2 addition gates, $D(\mathcal{C})$ can be transformed into a $O((\log d) \cdot (\log s + \log d))$-depth arithmetic circuit of size $\poly(sd)$ and fan-in 2.
\end{theorem}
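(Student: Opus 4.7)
The plan is to reprove the classical Valiant--Skyum--Berkowitz--Rackoff depth reduction. The idea is that a polynomial of total degree $d$ always admits a recursive decomposition that halves the relevant degree at each step, so the recursion terminates in $O(\log d)$ rounds, at the cost of enlarging the circuit by a polynomial factor. First I would preprocess $\mathcal{C}$ so that every gate $v$ has a well-defined formal degree $d_v$ (with $d_v = d_{v_1} + d_{v_2}$ at a multiplication and $d_v = \max(d_{v_1}, d_{v_2})$ at an addition), and I would truncate any gate with $d_v > d$, since such a gate cannot contribute to $g$ except through cancellation. By a standard duplication step with $\poly(s)$ blow-up, I would also arrange that every multiplication $v = v_1 \cdot v_2$ is \emph{degree-balanced}, meaning $d_{v_1}, d_{v_2} \leq (2/3) d_v$.

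The core construction is a dynamic program over ordered pairs of gates. For each pair $(u,v)$ with $d_v \leq d_u / 2$, I would introduce an auxiliary polynomial $[u{:}v]$ that represents the coefficient of $p_v$ when one formally replaces the subcircuit rooted at $v$ by a fresh indeterminate $y$ and extracts the linear-in-$y$ part of the resulting polynomial in $y$ and the $x_i$. The key structural identity, proved by induction on $d_u$, is
\[
p_u \;=\; \sum_{v \in F(u)} [u{:}v] \cdot p_v,
\]
where $F(u)$ is the \emph{frontier} of maximal gates $v$ below $u$ with $d_v \leq d_u / 2$. Because $\mathcal{C}$ has been degree-balanced, every such frontier exists, has at most $s$ elements, and partitions the directed paths from $u$ to the inputs. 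Each factor on the right-hand side has degree at most $d_u/2$, and a symmetric identity decomposes $[u{:}v]$ as a sum of bilinear products $[u{:}w] \cdot [w{:}v]$ over intermediate frontier gates $w$. Thus each unfolding strictly halves the degree, and after $O(\log d)$ rounds every auxiliary polynomial is a constant or an input variable.

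The circuit $D(\mathcal{C})$ allocates one gate per auxiliary polynomial, giving $O(s^2 \log d)$ gates in total: frontier sums become unbounded-fan-in additions of at most $s$ summands, and the bilinear recurrences become fan-in-$2$ multiplications. This matches the stated fan-in invariants and yields total size $\poly(sd)$ and depth $O(\log d)$. Log-space uniformity is immediate because each gate of $D(\mathcal{C})$ is indexed by a small tuple of gates from $\mathcal{C}$, and its incident wires are determined by degree comparisons and frontier-membership tests, both of which a log-space Turing machine can perform by inspecting $\mathcal{C}$. The final ``in particular'' clause follows by replacing each unbounded addition with a balanced fan-in-$2$ binary tree of depth $O(\log(sd))$.

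The main obstacle I expect is verifying the two structural claims that do all the work: that the frontier $F(u)$ really partitions the $u$-to-input paths, and that $[u{:}v]$ satisfies the bilinear recurrence described above. Both reduce to a careful case split on whether the gate on the frontier boundary is an addition or a multiplication. For multiplications, the degree-balance preprocessing is exactly what is needed to guarantee that both children lie strictly below the relevant degree cutoff, so that the recursion closes with the correct degree halving and every factor in the resulting bilinear sum has degree at most $d_u/2$. For additions, the frontier simply propagates to both children. Making this bookkeeping precise, and checking that the number of summands contributed at each level stays bounded by $\poly(s)$, is the delicate step; once it is in place, the size, depth, fan-in, and uniformity conclusions all follow by a direct induction on the degree.
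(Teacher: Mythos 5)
This theorem is cited to \cite{valiant1983fast} and is not proved in the paper, so there is no in-paper argument to compare your proof against; what follows is a check of your sketch against the actual Valiant--Skyum--Berkowitz--Rackoff construction.

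Your overall plan --- homogenize, work with gate-pair polynomials $[u{:}v]$, decompose each $p_u$ across a frontier, and recurse with degree halving for $O(\log d)$ rounds --- is exactly the classical approach. But two of the load-bearing claims in the sketch are wrong, and they conceal the actual difficulty.

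First, the preprocessing step. You claim that ``by a standard duplication step with $\poly(s)$ blow-up'' every multiplication can be made degree-balanced in the sense $d_{v_1},d_{v_2}\le(2/3)d_v$. This is not achievable by any local duplication argument: the circuit $y_1=x_1$, $y_{i+1}=y_i\cdot x_{i+1}$ computes $x_1\cdots x_n$ in size $O(n)$ with every multiplication maximally skew ($d_{v_1}=d_v-1$, $d_{v_2}=1$). Turning that chain into a circuit all of whose multiplications are $2/3$-balanced requires rebuilding the multiplication structure globally --- which is precisely what the depth-reduction theorem accomplishes, not something you can assume at the outset. The genuine VSBR proof never balances multiplications; instead the frontier is designed specifically to cope with skew products.

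Second, the frontier bookkeeping does not give the degree halving you assert. You define $F(u)$ as the \emph{maximal} gates $v$ below $u$ with $d_v\le d_u/2$, and then claim that in $p_u=\sum_{v\in F(u)}[u{:}v]\,p_v$ \emph{each} factor has degree at most $d_u/2$. This is false for $[u{:}v]$: its degree is $d_u-d_v$, and since $d_v$ can be as small as $0$ or $1$, $[u{:}v]$ can have degree essentially $d_u$. The standard decomposition takes the frontier in the \emph{opposite} regime: gates $w$ with $d_w>m$ whose children have degree $\le m$ (with $m\approx d_u/2$), so that $\deg p_w>d_u/2$ bounds $\deg[u{:}w]=d_u-d_w<d_u/2$; the $p_w$ factors are then themselves handled by the recursion at the next round. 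Unbalanced multiplications enter here: when $w$ is a skew multiplication with one high-degree child, that child is the one through which the recursion continues, and the low-degree sibling is absorbed into $[u{:}w]$. This is why no balancing preprocessing is needed.

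Until you flip the frontier condition (gates of degree $>d_u/2$ whose children drop to $\le d_u/2$) and drop the unfounded degree-balance preprocessing, the degree does not actually halve at each round, and the claimed $O(\log d)$ depth does not follow. The remaining parts of your sketch --- homogenization/truncation, indexing gates of $D(\mathcal{C})$ by pairs from $\mathcal{C}$, logspace uniformity from local degree tests, and converting unbounded-fan-in additions into $O(\log(sd))$-depth binary trees --- are fine and match the standard argument once the frontier is set up correctly.
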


Let us illustrate this result with an example application to the exact computation of the determinant. We know that the determinant $\Det(A) = \sum_{\sigma \in S_n} \prod_{i \in [n]} \sgn(\sigma) A_{i\sigma(i)}$ is a degree-$n$ polynomial in the entries of $A$, and that it has an arithmetic circuit of size $\poly(n)$\footnote{For example, this small circuit can be constructed from a division-free variant of Gaussian elimination.}.  Therefore, Theorem \ref{thm:depth-reduction} implies that there is a $O(\poly(sd)) = O(\poly(n))$-size and $O((\log d) \cdot (\log sd)) = O((\log n)^2)$-depth arithmetic circuit computing $\Det(A)$. This result was mentioned in the introduction.

An $O((\log n)^2)$-depth circuit for the exact computation of $\Det(A)$ is not sufficient for our purposes. Recall that our goal is instead to multiplicatively approximate $\Det(A)$ using depth $\tilde{O}((\log n) \cdot (\log \kappa))$, which scales with the condition number $\kappa$ of $A$. Hence, when $A$ is particularly well-conditioned (e.g., $\kappa = O(\poly\log(n))$), then our circuit will have $o((\log n)^2)$ depth. At a high level, in order to achieve this result we will also apply Theorem \ref{thm:depth-reduction}. However, instead of applying the theorem directly to $\Det(A)$ we will apply it to a $\poly(n)$-size-computable degree-$O(\poly(\kappa))$ polynomial that approximates $\Det(A)$. Assuming without loss of generality that $\kappa = O(\poly(n))$, this will give a $O((\log n) \cdot (\log \kappa))$-depth arithmetic circuit for the polynomial approximating $A$.

\subsection{From arithmetic circuits to Boolean circuits}

In this paper we use arithmetic circuits to represent and manipulate low-degree polynomials over $\bbC$. In order to evaluate low-depth arithmetic circuits, we have to translate them into low-depth Boolean circuits. This takes some care, because, when we convert arithmetic circuits into Boolean circuits, we cannot represent the values computed at each gate up to arbitrary precision.

Our approach is to Booleanize an arithmetic circuit $\mathcal{C}$ on variables $x_1,\ldots,x_N$ by rounding every input to $r$ bits of precision and then replacing each arithmetic operation in the circuit with the corresponding exact Boolean implementation, assuming that the inputs $x$ are such that $\max_i |x_i|$ is bounded by some number $M$. The resulting Boolean circuit is denoted by $B_{r,M}({\cal C})$.  In order to ensure that $B_{r,M}({\cal C})$ remains small and low-depth, we have to bound the number of bits used to represent the intermediate values in the computation. In order to ensure that $B_{r,M}({\cal C})$ is accurate, we also have to bound the error incurred by the rounding step. These correctness guarantees are provided by the following lemma:
\begin{lemma} \label{lem:Booleanization}
Let $\eps > 0$, and let ${\cal C}$ be a circuit over $\bbC$ of depth $h$, computing a polynomial $g(x_1,\ldots,x_N)$ of degree $d$. Suppose that each multiplication gate of ${\cal C}$ is of fan-in 2, and each addition gate is of fan-in at most $m$. For technical reasons, suppose that all input gates of ${\cal C}$ are labelled by a variable in $\{x_1,\ldots,x_N\}$ (i.e., there are no input gates labelled by a constant in $\bbC$).

If $r > (2hd^2 \ceil{\log(m)} + 1) \log_2(4NMd/\eps)$, then $B_{r,M}({\cal C})$ is a logspace-uniform Boolean circuit of size $\poly(|\mathcal{C}|dhr(\log m)\log(M))$ and depth $O(h \cdot \log(dhrmM))$. Moreover, $B_{r,M}({\cal C})$ computes a function $\tilde{g}(x_1,\ldots,x_N)$ such that for all $a_1,\ldots,a_N \in \bbC$ with $\max_i |x_i| \leq M$, $$|\tilde{g}(a_1,\ldots,a_n) - g(a_1,\ldots,g_n)| < \eps.$$
\end{lemma}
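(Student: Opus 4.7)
My plan is to Booleanize ${\cal C}$ gate by gate, representing each complex value in fixed point with $r'$ bits of fractional precision (slightly larger than $r$), and then to bound the accumulated rounding error at the output. The argument has three pieces: bounding intermediate gate magnitudes, constructing a Boolean gadget for each arithmetic operation, and tracking error propagation. Let $V_v \in \bbC[x_1,\ldots,x_N]$ denote the polynomial computed at gate $v$ and set $\mu_v := \max_{|x_i| \leq M} |V_v|$. By induction along the topological order of ${\cal C}$, one has $\mu_v \leq M$ at inputs (using the hypothesis that inputs are labelled by variables rather than constants), $\mu_v \leq m \max_{w \to v} \mu_w$ at addition gates, and $\mu_v \leq \mu_{w_1} \mu_{w_2}$ at multiplication gates. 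Combined with the total-degree bound $d$ and a Strassen-style homogenization of ${\cal C}$ so that every intermediate gate computes a polynomial of degree at most $d$, these recurrences bound $\log \mu_v$ by something of the form $O(d \log(NM) + h \log m)$, which determines how many integer-part bits we must carry at each gate.

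For the Boolean implementation, I would realize each arithmetic gate by a standard gadget acting on $O(\log \mu_v + r')$-bit fixed-point representations of the real and imaginary parts of the complex value. A fan-in-$2$ complex multiplication reduces to four real multiplications and two real additions, each implementable by a logspace-uniform Boolean subcircuit of depth $O(\log(r' + \log \mu_v))$ and polynomial size (e.g., by Wallace-tree multipliers and carry-lookahead adders). A fan-in-$m$ addition becomes a binary tree of $O(\log m)$ complex additions, of total depth $O(\log m \cdot \log(r' + \log \mu_v))$. After each gate the result is truncated back to $r'$ fractional bits to keep bitlengths bounded. Composing these gadgets over the $h$ levels of ${\cal C}$ yields the claimed overall depth $O(h \log(dhrmM))$ and size polynomial in $|{\cal C}|, d, h, r, \log m, \log M$; logspace-uniformity is immediate because each gadget depends only on the gate type and on $r, M$.

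For the error analysis, let $\tilde{V}_v$ denote the actual Boolean value computed at gate $v$ and set $\epsilon_v := |\tilde{V}_v - V_v|$. I would establish $\epsilon_v \leq 2^{-r}$ at inputs, $\epsilon_v \leq m \max_w \epsilon_w + \mu_v \cdot 2^{-r' + O(1)}$ at addition gates, and $\epsilon_v \leq \mu_{w_1}\epsilon_{w_2} + \mu_{w_2}\epsilon_{w_1} + \epsilon_{w_1}\epsilon_{w_2} + \mu_v \cdot 2^{-r' + O(1)}$ at multiplication gates. Rewriting in terms of the relative error $\eta_v := \epsilon_v / \mu_v$, unrolling the recursion over depth $h$, and using the homogenization to cap at $O(d)$ the number of multiplications on any root-to-leaf path, I expect to obtain $\eta_{\text{out}} \leq 2^{-r + O(hd^2 \log m)}$. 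Multiplying by $\mu_{\text{out}}$ and invoking the hypothesis $r > (2hd^2 \lceil \log m \rceil + 1) \log_2(4NMd/\epsilon)$ then makes the final absolute error strictly less than $\epsilon$.

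The main obstacle I expect is making this last error bound tight enough to match the precise precision requirement in the statement: a naive absolute-error analysis would force $r$ to depend exponentially on $h$ because of the multiplicative blow-up at fan-in-$2$ multiplications, and the only way I see to avoid this is to phrase the recursion in relative rather than absolute error and to exploit the total-degree bound $d$ to cap how many multiplications contribute along any root-to-leaf path.
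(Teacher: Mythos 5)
Your proposal takes a genuinely different route from the paper's, and while its core idea is sound, it departs from the lemma as literally stated and leaves a nontrivial gap in the error analysis.

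\textbf{How the paper's proof actually works.} The paper defines $B_{r,M}(\mathcal{C})$ to round only the \emph{inputs} to $r$ bits and then perform \emph{exact} Boolean arithmetic at every gate; there is no per-gate truncation. This splits the work into two orthogonal pieces. For accuracy (Lemma~\ref{lem:boundingrounding}), it suffices to bound $|g(a)-g(b)|$ when $\max_i|a_i-b_i|\le 2^{-r}$: the paper telescopes over the coordinates, writing the difference as $\sum_i \Delta_i$ with $\Delta_i$ obtained by perturbing only the $i$-th variable, expands $g$ in powers of $x_i$ using the coefficient-extraction Lemma~\ref{lem:deriv}, and bounds each coefficient via the magnitude bound Lemma~\ref{lem:precisionbound}. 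For depth (Lemma~\ref{lem:depthcomplexitybound}), since no truncation occurs, the gate values are Gaussian rationals with denominator a fixed power of $2^r$; their bit length is controlled by decomposing each gate value into homogeneous parts, applying Lemma~\ref{lem:precisionbound} to each, and noting that $2^{rd}\cdot p_{v,i}$ is a Gaussian integer. Homogenization is used only as an \emph{analysis} tool, never applied to the circuit that is built.

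\textbf{Mismatch with the stated object.} Because you truncate at every gate, the Boolean circuit you construct is not $B_{r,M}(\mathcal{C})$ in the paper's sense. Your argument could establish the existence of \emph{some} low-depth Boolean circuit approximating $g$, but the lemma makes a claim about the specific circuit $B_{r,M}(\mathcal{C})$ (round inputs, exact arithmetic thereafter). Either the lemma's definition of $B_{r,M}$ would have to change, or you need to prove that exact arithmetic on rounded inputs also stays low-depth --- which is exactly what the paper's bit-complexity analysis does and your proposal skips.

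\textbf{Gap in the relative-error recursion.} Your plan to ``rewrite in terms of the relative error $\eta_v := \epsilon_v/\mu_v$'' does not follow from the recurrences $\mu_v \le m\max_w \mu_w$ and $\mu_v\le\mu_{w_1}\mu_{w_2}$ as you wrote them: if $\mu_v$ is much smaller than the product or maximum of its children's bounds, $\eta_v$ can blow up. To make your normalization work you need the \emph{equality} recurrences $\mu_v = \sum_{w\to v}\mu_w$ at addition gates and $\mu_v=\mu_{w_1}\mu_{w_2}$ at multiplication gates. These do in fact hold here, but only because of a structural fact you don't invoke: since the input gates carry no constants, every $p_v$ has nonnegative integer coefficients, so its maximum modulus on the polydisc $\{|x_i|\le M\}$ is attained at $x=(M,\ldots,M)$ and is exactly additive/multiplicative under addition/multiplication. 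This same no-constants observation also shows directly that every intermediate gate has degree $\le d$ and that at most $d-1$ multiplication gates lie on any root-to-leaf path; you do not need to invoke Strassen-style homogenization for that (and applying homogenization to the circuit itself, rather than as a bookkeeping device, would change $\mathcal{C}$ and hence $B_{r,M}(\mathcal{C})$).

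If you repair both points --- either match the paper's $B_{r,M}$ definition with exact arithmetic and prove a bit-length bound, or change the definition to allow per-gate truncation --- and if you justify the relative-error normalization via the nonnegative-coefficient structure, your route yields the result and in fact tolerates a somewhat smaller precision $r$ than the lemma assumes. But as written, the argument proves a statement about a different circuit and contains an unjustified step in the error propagation.
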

Note that Lemma \ref{lem:Booleanization} requires that each of the input gates of ${\cal C}$ be labelled with an input variable: in other words, none of the input gate labels are constants from $\mathbb{C}$. We place this technical restriction so that we can conveniently bound the bit complexity of the values computed by the circuit. This is not an important restriction in our case, because for all of the arithmetic circuits considered in this paper, the degree of the polynomial computed by the circuit does not significantly change if we replace each input gate constant $c \in \bbC$ with a variable $y_c$ whose value will eventually be hard-coded to $c$. The proof of Lemma \ref{lem:Booleanization} is deferred to Appendix \ref{app:Booleanizationdetails}.

\section{Determinants and Complex Polynomials}

The determinant of an $n\times n$ matrix can be computed efficiently by a well-known result of Samuelson and Berkowitz:
\begin{theorem}[Samuelson-Berkowitz \cite{B84, S42}]\label{thm:det1} 
The determinant of an $n\times n$ matrix can be computed by an arithmetic circuit of size $\poly(n)$ and fan-in 2.
\end{theorem}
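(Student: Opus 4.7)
The plan is to invoke the classical Samuelson--Berkowitz recursion, which computes the characteristic polynomial of $A$ using only additions, subtractions, and multiplications (no divisions), and then read off $\Det(A)$ as the constant coefficient (up to sign).

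The first step is to partition the input matrix in block form
$$
A = \begin{pmatrix} a_{11} & R \\ S & M \end{pmatrix},
$$
where $M$ is the $(n-1)\times(n-1)$ principal submatrix obtained by deleting the first row and column. The Samuelson--Berkowitz identity (proved by expanding $\det(\lambda I - A)$ via a block determinant formula) states that the vector of coefficients of $\chi_A(\lambda)$ is obtained by multiplying an $(n+1)\times n$ lower-triangular Toeplitz matrix $T$ against the vector of coefficients of $\chi_M(\lambda)$. The nonzero entries of $T$ are $1$, $-a_{11}$, and $-RM^{k}S$ for $0 \le k \le n-2$. Iterating this identity on $M$ expresses $\chi_A$ as the product $T_n T_{n-1} \cdots T_1$ of Toeplitz matrices whose entries are polynomials in the entries of $A$.

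Next I would verify the circuit-size bookkeeping. For each $i$, computing the needed scalars $R_i M_i^{k} S_i$ for $k = 0, 1, \ldots, i-2$ amounts to forming the Krylov-like sequence of vectors $S_i, M_iS_i, M_i^{2}S_i, \ldots$, followed by an inner product with $R_i$; this is a sequence of $O(n)$ matrix-vector multiplications and therefore admits a fan-in-$2$ arithmetic circuit of size $O(n^3)$. Multiplying the Toeplitz matrices $T_n, \ldots, T_1$ against the initial vector also takes $\poly(n)$ arithmetic operations. Because every operation is $+$, $-$, or $\times$ with scalar constants in $\{+1,-1\}$ and the total work is $\poly(n)$, the whole procedure is encoded as a fan-in-$2$ arithmetic circuit of size $\poly(n)$.

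Finally, extract $\Det(A) = (-1)^{n} \chi_A(0)$, which corresponds to reading a single coefficient of the output polynomial and adds $O(1)$ further gates. The only nontrivial ingredient is the Samuelson--Berkowitz identity itself; once that is in hand, the remaining verification is a routine degree and operation count, so there is no substantive obstacle.
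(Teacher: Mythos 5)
Your proof is correct and follows exactly the approach the paper implicitly invokes by citing Samuelson--Berkowitz: the paper gives no proof of its own for this theorem, and your reconstruction of the division-free Toeplitz-product recursion (with the Krylov-sequence accounting for circuit size) is precisely the algorithm from those references. (The paper's only hint at an alternative is a footnote suggesting division-free Gaussian elimination, but that is not the route taken by the cited theorem.)
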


Similarly to the line of work pioneered by Barvinok \cite{Bar16}, we analyze this problem using tools for analyzing
complex polynomials:
\begin{definition}[Disks, sleeves and root-free area]

For a polynomial $g: \C \rightarrow \C$ and $S \subseteq \C$, $g$ is root free inside $S$ if $z \in S \implies g(z) \neq 0$.
In this work we will use three specific kinds of regions $S$:
\begin{enumerate}
    \item 
    Open disk: denote an open disk of radius $r \geq 0$ around $c \in \C$ by ${\cal D} (c,r) = \{ z \in \C : |z - c| < r\}$.
    \item
    Unit sleeve:
    The unit sleeve with width $\delta$ is denoted with ${\cal S}_\delta := \{z  \in \C : |\Im (z)| < \delta \text { and } -\delta < \Re (z) < 1+ \delta\}$. 
    \item Half-plane:
    The left-hand side of a point $x \in \bbR$ is defined as ${\cal P}_x := \{z  \in \C : \Re (z) < x \}$.  
\end{enumerate}
   
  \label{def:disk-sleeve}
\end{definition}

\subsection{The determinant interpolation polynomial}

Let $g_A(z) = \Det (I(1-z) + z A)$. Therefore
\begin{align*}
g_A(0) &= 1,\\
g_A(1) &= \Det(A).
\end{align*}

\begin{lemma} [Derivatives of $g_A$]
The $k$-th derivative of $g_{A}(z) = \Det(I(1-z) + z A)$
at $z=0$
is a polynomial of degree $k$ in the entries of $A$.
\label{lem:der-degree}
\end{lemma}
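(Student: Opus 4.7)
The plan is to show that the polynomial $g_A(z)$, viewed as an element of $\bbF[z][\{A_{ij}\}]$, has the property that the coefficient of $z^k$ is a polynomial of degree at most $k$ in the entries of $A$, from which the claim follows immediately via the identity $\frac{d^k}{dz^k} g_A(z)\big|_{z=0} = k! \cdot [z^k]\, g_A(z)$.

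First, I would change variables. Writing $B = A - I$, we have $(1-z)I + zA = I + z(A-I) = I + zB$, so
\[
g_A(z) = \Det(I + zB).
\]
This is the classical ``characteristic-polynomial'' expansion and it admits the standard identity
\[
\Det(I + zB) \;=\; \sum_{k=0}^{n} z^k\, e_k(B),
\]
where $e_k(B)$ is the sum of the $\binom{n}{k}$ principal $k\times k$ minors of $B$. (Equivalently, $e_k(B)$ is the $k$-th elementary symmetric polynomial in the eigenvalues of $B$, but the principal-minor description is what I will use to track degrees.)

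Second, I would track the degree in the entries of $A$. Each $k \times k$ principal minor of $B$ is, by the Leibniz formula, a polynomial of degree exactly $k$ in the entries of $B$. Since $B = A - I$, every entry $B_{ij}$ is an affine function of the entries of $A$ (in fact $B_{ij} = A_{ij} - \delta_{ij}$). Substituting, each principal minor becomes a polynomial of degree at most $k$ in the entries of $A$, and hence so does their sum $e_k(B) = [z^k]\, g_A(z)$.

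Third, I would conclude with the derivative-coefficient relation: $\frac{d^k}{dz^k} g_A(z)\big|_{z=0} = k!\cdot [z^k]\, g_A(z)$, which inherits the degree bound. There is no real obstacle here; the only care needed is the bookkeeping after the affine substitution $B = A - I$, which can only decrease the degree in the $A$-variables. (One could alternatively give a more ``circuit-flavored'' proof by observing that each entry $M_{ij}(z)$ of $M(z) = (1-z)I + zA$ satisfies $\deg_A [z^a] M_{ij}(z) \le a$, and that this inequality is preserved under addition and multiplication, so Leibniz-expanding $\Det(M(z))$ yields the same conclusion; this viewpoint is the one used later when the same bound has to be fed into the depth-reduction step.)
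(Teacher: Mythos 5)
Your proof is correct and is essentially the same argument as the paper's: after the change of variables $B = A - I$, the classical identity $\Det(I+zB) = \sum_k z^k e_k(B)$ that you invoke is precisely what the paper re-derives by hand from the Leibniz expansion, and both proofs identify the $k$-th coefficient with a sum of $k\times k$ principal minors of $A-I$, which is manifestly of degree $k$ in the entries of $A$. (Minor note: the paper's displayed formula for $g^{(k)}(0)$ appears to drop a factor of $k!$, whereas you handle this correctly via $g^{(k)}(0) = k!\,[z^k]g_A(z)$; this has no bearing on the degree claim.)
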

\begin{proof}
We use the notation $B \rightarrow_k A$ to denote $B$ is a $k\times k$ principal sub-matrix of $A$. We show that the $k$-th derivative of $g_A(z)$ at $z=0$ is $g^{(k)}(0) = \sum_{B \rightarrow_k A} \Det (B- I)$:
\begin{align*}
g^{(k)}(0) &= \sum_{\sigma \in S_n} \sgn(\sigma) \sum_{i_1 < \ldots < i_k} \prod_{j\notin \{i_1, \ldots, i_k\}}\delta_{j, \sigma(j)} \prod_{j\in \{i_1, \ldots, i_k\}}(a_{j,\sigma(j)} -\delta_{j, \sigma(j)})\\
&= \sum_{i_1 < \ldots < i_k} \sum_{\sigma \in S \{i_1, \ldots, i_k\}} \sgn(\sigma)   \prod_{j\in \{i_1, \ldots, i_k\}}(a_{j,\sigma(j)} -\delta_{j, \sigma(j)})\\
&=  \sum_{B \rightarrow_k A} \Det (B- I).
\end{align*}
Each summand $\Det (B- I)$ is a polynomial of degree $k$ in the entries of $B$ and hence degree $k$ in the entries of $A$.
\end{proof}

\begin{theorem}[Roots vs. eigenvalues] 
Let $0 < \delta < 1$.
\begin{enumerate}
\item [(1)] (Hermitian) If $A \in {\cal H}_\delta$ then $g_A$ is root-free inside ${\cal D} (0,1/2) \cup {\cal D} (1, \frac \delta {1+\delta})$.
\item [(2)] (Hurwitz stable) If $A \in {\cal S}_\delta$ then $g_{-A}$ is root-free inside ${\cal P}_{1/2} \cup {\cal D} (1/2,1/2) \cup {\cal D} (1, \frac \delta {1+\delta})$.
\end{enumerate}
  \label{thm:roots}
\end{theorem}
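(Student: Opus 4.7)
The plan is to locate the roots of $g_A$ (resp.\ $g_{-A}$) explicitly in terms of the eigenvalues of $A$, and then verify each claimed exclusion by elementary scalar geometry. Using a Schur decomposition $A = QTQ^{-1}$ with $T$ upper triangular, one obtains the factorization
\[
g_A(z) = \prod_{i=1}^{n}\bigl((1-z) + z\lambda_i(A)\bigr),
\]
so the roots of $g_A$ are exactly the points $z_i = 1/(1 - \lambda_i(A))$ with $\lambda_i(A) \neq 1$, and analogously the roots of $g_{-A}$ are $1/(1+\lambda_i(A))$ with $\lambda_i(A) \neq -1$. The theorem thus reduces to scalar lower bounds on $|z_i|$, $|z_i - 1|$, and $\Re(z_i)$.

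For part~(1) the Hermitian hypothesis gives real eigenvalues $\lambda_i \in [-1,-\delta] \cup [\delta, 1]$, and I would split on the sign of $\lambda_i$. For $\lambda_i \in [\delta, 1)$ the root is $z_i = 1/(1-\lambda_i) \geq 1/(1-\delta) > 1$, automatically avoiding $\mathcal{D}(0,1/2)$, while $|z_i-1| = \lambda_i/(1-\lambda_i) \geq \delta/(1-\delta) \geq \delta/(1+\delta)$ by monotonicity of $x/(1-x)$. For $\lambda_i \in [-1,-\delta]$ the root is $z_i \in [1/2, 1/(1+\delta)]$, giving $|z_i| \geq 1/2$, and $|1-z_i| = -\lambda_i/(1-\lambda_i) = \mu/(1+\mu)$ with $\mu = -\lambda_i \in [\delta, 1]$, which is $\geq \delta/(1+\delta)$ by monotonicity of $x/(1+x)$.

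For part~(2) write $\lambda_i = a + bi$ with $a < 0$; the candidate root is $z_i = 1/(1+\lambda_i)$. A direct algebraic expansion yields
\[
\Re(z_i) = \frac{1+a}{(1+a)^2+b^2}, \qquad |z_i|^2 - \Re(z_i) = \frac{-a}{(1+a)^2+b^2},
\]
from which $|\lambda_i|^2 = a^2+b^2 \leq 1$ gives $\Re(z_i) \geq 1/2$ (ruling out $\mathcal{P}_{1/2}$), and $a < 0$ gives $|z_i|^2 > \Re(z_i)$, i.e.\ $|z_i - 1/2| > 1/2$ (ruling out $\mathcal{D}(1/2,1/2)$). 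Finally, $|z_i-1| = |\lambda_i|/|1+\lambda_i| \geq |\lambda_i|/(1+|\lambda_i|)$ by the triangle inequality, which is $\geq \delta/(1+\delta)$ by monotonicity of $x/(1+x)$ once we know $|\lambda_i| \geq \delta$.

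The step I expect to be the main subtlety --- and the only place where the non-normality of the Hurwitz case intrudes --- is the eigenvalue lower bound $|\lambda_i(A)| \geq \delta$. Unlike the Hermitian case, $A$ need not be normal, so eigenvalues do not coincide with singular values. I would establish the bound via the general fact $|\lambda(A)| \geq s_n(A)$ for any invertible $A$ and any eigenvalue $\lambda$ (proved by applying $A^{-1}$ to an eigenvector and reading off $1/|\lambda| \leq \|A^{-1}\| = 1/s_n(A)$), combined with $s_n(A) = \lambda_n(|A|) \geq \delta$ from the hypothesis $\delta I \preceq |A|$. Everything else is routine scalar calculation.
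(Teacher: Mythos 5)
Your proof is correct and follows essentially the same route as the paper: factor $g_A$ (resp.\ $g_{-A}$) into linear terms via the eigenvalues, identify the roots as images of the eigenvalues under the M\"obius map $\omega \mapsto 1/(1-\omega)$, and verify the claimed exclusions by elementary scalar geometry. The one place you add something the paper leaves implicit is the justification of $|\lambda(A)| \geq s_n(A) \geq \delta$ in the (possibly non-normal) Hurwitz case, where the paper simply asserts that the eigenvalues of $-A$ lie in $\{\delta \leq |\omega| \leq 1,\ \Re(\omega) > 0\}$ --- a worthwhile clarification, though it does not change the structure of the argument.
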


\begin{proof}
Let $\omega_1, \ldots, \omega_n$ be the eigenvalues of $A$.
Then
$$
g_A(z) = \Det (I(1-z) + z A) = \prod_{i \in [n]} ((1-z) + z \omega_i).
$$
For any $\omega_i \neq 1$, $z_i := \frac 1 {1- \omega_i}$ is a root of $g_A$. 
Also if $A$ has a unit eigenvalue that does not introduce a root for $g_A$. 

\begin{enumerate}
\item [(1)] First, 
since
$-1 \leq \omega_i$ implies that $z_i \geq 1/2$ this establishes ${\cal D} (0,1/2)$ as a root-free disk. 
Second, $\delta \cdot I \preceq |A|$
 implies $\forall i \in [n], \quad |\omega_i| \geq \delta$ which implies $\forall i \in [n]$, either $z_i \geq \frac 1 {1-\delta}$ or $z_i \leq \frac 1 {1+\delta}$. This establishes ${\cal D} (1, \frac \delta {1+\delta})$ as a root-free disk.
 
 \item [(2)] When $A \in {\cal S}_\delta$ the eigenvalues of $-A$ lie inside $\Lambda = \{\omega \in \C : \delta \leq  |\omega| \leq 1, \Re(\omega) > 0\}$. We compute the image of $\Lambda$ under the map $z = \frac{1}{1-\omega}$ through the following observations: 
\begin{enumerate}
\item [(i)] $ |\omega| \geq \delta \Leftrightarrow |z-1| \geq \delta \cdot |z| \Rightarrow |z-1| \geq \delta (1- |1-z|) \Leftrightarrow |1-z| \geq \frac{\delta}{1+\delta},$
\item [(ii)]  $|\omega| \leq 1 \Leftrightarrow |z-1| \leq |z| \Leftrightarrow \Re(z) \geq 1/2,$
\item [(iii)]  $\Re(\omega) \geq 0 \Leftrightarrow |1 + \omega | \geq |1-\omega| \Leftrightarrow |2 z - 1| \geq 1 \Leftrightarrow |z-1/2| \geq 1/2$.
\end{enumerate}
Items (i), (ii) and (iii) establish root-freeness inside ${\cal D} (1, \frac \delta {1+\delta})$, ${\cal P}_{1/2}$ and ${\cal D} (1/2,1/2)$, respectively.
\end{enumerate}
\end{proof}

%%\newpage

\section{Computational Analytic-Continuation}

\subsection{Improved analysis for CAC}

In \cite{EM18} a subset of the authors has outlined an algorithmic technique called CAC for interpolating the value
of a complex function given sufficiently many derivatives at some origin.
In this work, we require slightly stronger conditions on the performance of the algorithm so we present
a refined analysis thereof.
We begin by rewriting the algorithm with slightly modified parameters:

\begin{mdframed}

\begin{algorithm}[Computational analytic continuation]
\label{algorithm:cac}
\begin{enumerate}

\noindent
\item
\textbf{Input:} 
Integer $m_0 \geq 1$.
An oracle $\mathcal{O}_g$ that takes a number $m_0$ 
as input and outputs the first $m_0$ derivatives of $g$ at $z = 0$, where $g(0) = 1$.
$t$ complex numbers $\Delta_1,\hdots,\Delta_t$. 
A number $0 < \theta \leq 1$.
\item
\textbf{Fixed parameters:}

\begin{enumerate}
\item 
$s_0=0$ and $s_i= s_0 + \sum_{j=1}^{i} \Delta_j $ 
%\lior{changed expression for $s_i$}
for each $1 \leq i \leq t$ \hfill 
\end{enumerate}

\item
\textbf{Variables:}

\begin{enumerate}
\item $m_i$ for $0\leq i \leq t$ \hfill \% the number of derivatives at each point $s_i$.
\item $\hat{f}^{(l)}_i$ for $0 \leq l\leq m_i$ and $0 \leq i \leq t$ \hfill \% the $l$'th derivative of $f$ at $s_i$.
\end{enumerate}

\item
\textbf{Main:}
\begin{enumerate}
\item\label{it:it1} Query $\mathcal{O}_g(m_0)$ to obtain $g^{(0)}(0),\ldots, g^{(m_0)}(0)$

\item\label{it:it2} Using derivatives from step \ref{it:it1} 
compute 
$\hat{f}^{(l)}_0 \leftarrow f^{(l)}(s_0)$ 
for $1 \leq l\leq m$. 

\item\label{it:iteration} For each $i = 0,\ldots,t-1$:

\begin{itemize}
\item Set: $m_{i+1}  = \ceil{\theta m_i / (2\log\  m_i)}$.
%%\lior{this is wrong I think - see comments at the end of the analysis} \enric{this is wrong -- algorithm currently has no dependence on $\beta$ -- change}
%\enric{standardize box}
\item Compute
$
\forall 0 \leq l \leq m_{i+1}, \ \ \ 
\hat{f}^{(l)}_{i+1}  = \sum_{p = 0}^{m_{i} - l} \frac{\hat{f}^{(p + l)}_i}{p!} \Delta_i^{p}.
$
\end{itemize}
\label{it:dyn}
\end{enumerate}

\item
\textbf{Output}:

Let $\hat{f} := \hat{f}^{(0)}_t$ and return ${\cal O} = e^{\hat{f}}$.

\end{enumerate}
\vspace{5mm}
\end{algorithm}

\end{mdframed}

%\begin{claim}[Correctness of algorithm \ref{algorithm:main}]\label{cl:alg}
%
%\noindent
%Consider $g_A(z) = \Per (J + A z)$ and suppose that for each $s_i$
%the function $\tilde{g}_i(z) = g(z - s_i)$ has a root-free disk of radius at least $2 \Delta_i$.
%Then algorithm \ref{algorithm:main} runs in time $\Big(\frac{n t}{\delta}\Big)^{ O(\log(1/\Delta_{min}))^{t}}$ and outputs a number $\hat{f}$
%$$
%\left|\hat{f} - \log(\Per(J + \sigma A))\right| \leq \delta.
%$$
%\end{claim}

\noindent

%%\lior{$\eps$ is now the additive error since $\delta$ is the conditioning lower bound}
\begin{lemma}
[Correctness of algorithm \ref{algorithm:cac}]\label{lem:caccorrectness}

\noindent
Let $g(z)$ be a polynomial of degree at most $n$ such that $g(0) = 1$, and let $f(z) = \log(g(z))$.
Suppose the inputs to algorithm \ref{algorithm:cac} satisfy the following conditions:
\begin{enumerate}
\item
Non-increasing sequence of segments: $|\Delta_i| \leq |\Delta_{i-1}|$ for all $i \geq 1$.
\item\label{it:ratio} Root avoidance: 
For each $i$ the ratio between the distance from the closest root of $g(z)$ to $s_i$ and the step size $|\Delta_{i+1}|$ is at least $\beta = e^{\theta}$ for $0 < \theta \leq 1$.
\end{enumerate}
Then, for small enough $\epsilon > 0$, letting \be \label{eq:m0setting} m_0 \geq 10\left(\log(n/\epsilon \theta)\right)\left(10t(\log t + \log \log(n/\epsilon \theta))\right)^{t},\ee
%%there exist a constant $c > 0$ such that with the assignment
%%\be
%%m_{t-i} = \log (\frac{n}{\eps \cdot \theta}) \cdot \log \log^{c (t-i)} \frac n \theta,
%%\label{eq:si-assigment}
%%\ee
%%\lior{I think this condition is somewhat cumbersome:  you only want to require a lower bound on the number of derivatives in $m_0$ - i.e. the input.
%%The fact that $m_{i+1}$ is not much smaller than $m_i$ is an inner property of the algorithm, and not something that belongs in the statement} \saeed{I think Algorithm \ref{algorithm:cac} needs to specifically know the intermediate values $m_i$ in order to function. Just picking the value of $m_0$ does not fully specify the parameter of this algorithm.}
%%\lior{I disagree - step (4)-(c)-(1) which also appears in slightly different form in the original CAC analysis specifies the relation $m_i$ to $m_{i+1}$} \enric{I think lior is right. I will change this}
Algorithm \ref{algorithm:cac} outputs an $\eps$-additive approximation to $f(s_t)$.
\end{lemma}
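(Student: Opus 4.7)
The plan is to prove the bound by induction on $i$, tracking the additive errors $\delta_i^{(l)} := |\hat{f}_i^{(l)} - f^{(l)}(s_i)|$ in each stored derivative. The key identity is that the iterative update in Step~\ref{it:dyn} is exactly the truncation at order $m_i - l$ of the infinite Taylor expansion
\[f^{(l)}(s_{i+1}) \;=\; \sum_{p=0}^{\infty} \frac{f^{(p+l)}(s_i)}{p!} \Delta^p,\]
where $\Delta = s_{i+1}-s_i$, and this series converges because Condition~\ref{it:ratio} makes $f$ holomorphic on an open disk around $s_i$ of radius strictly larger than $|\Delta|$. Thus $\delta_{i+1}^{(l)}$ decomposes cleanly into a truncation error $T_{i+1}^{(l)} := |\sum_{p > m_i - l} f^{(p+l)}(s_i)\Delta^p/p!|$ plus a propagated error $\sum_{p=0}^{m_i - l}|\Delta|^p \delta_i^{(p+l)}/p!$. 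The base case $\delta_0^{(l)} \equiv 0$ holds because Step~2 converts the exact derivatives of $g$ returned by $\mathcal{O}_g$ into exact derivatives of $f = \log g$ using the Bell-polynomial formulas, with no analytic error incurred.

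To bound the truncation I would use Condition~\ref{it:ratio}. Writing $g(z) = c\prod_{j=1}^n(z-z_j)$, the distance $\rho_i$ from $s_i$ to its nearest root obeys $\rho_i \geq \beta|\Delta|$, and a direct computation yields the classical bound $|f^{(l)}(s_i)/l!| \leq n/(l\rho_i^l)$ for $l \geq 1$. Substituting this into $T_{i+1}^{(l)}$ and using the elementary estimates $\binom{p+l}{l}\leq(p+l)^l/l!$ and $|\Delta|/\rho_i \leq \beta^{-1} = e^{-\theta}$, one obtains $T_{i+1}^{(l)} \leq n\cdot\poly(m_i)\cdot e^{-\theta(m_i - l)}/\rho_i^l$. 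Next I would introduce the scale-invariant weighted error $E_i^{(l)} := (\rho_i^l/l!)\,\delta_i^{(l)}$, which absorbs the $\rho_i^l$ denominator in $T_{i+1}^{(l)}$ and turns the propagation step into a bounded operation. Under the schedule $m_{i+1} = \lceil\theta m_i/(2\log m_i)\rceil$ the gap $m_i - l$ stays close to $m_i$ for all $l \leq m_{i+1}$, and the weighted recurrence reduces to
\[\max_{0\leq l\leq m_{i+1}} E_{i+1}^{(l)} \;\leq\; n\cdot\poly(m_i)\cdot e^{-\theta m_{i+1}}\;+\; C\cdot\max_{0\leq l\leq m_i}E_i^{(l)}\]
for some absolute constant $C$.

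Iterating over $t$ steps, the final error $E_t^{(0)}$ is bounded by a sum of $t$ truncation terms of size $n\cdot\poly(m_i)\cdot e^{-\theta m_{i+1}}$ each. Forcing each one to be $\leq \epsilon/(t+1)$ forces $m_i \gtrsim \log(nt/(\epsilon\theta))/\theta$ for every $i$, and inverting the schedule step by step gives $m_0 \gtrsim \log(n/(\epsilon\theta))\cdot(2\log m_0/\theta)^t$. A routine bootstrap (take $\log$ of both sides and solve for $\log m_0$) verifies that the stated choice of $m_0$ suffices. The main obstacle I anticipate is the bookkeeping for the two-index error propagation: one must simultaneously argue that the truncation error at derivative level $l$ decays geometrically in $m_i - l$, and that the amplification constant $C$ in the propagation step does not compound badly across $t$ iterations. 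The weighting by $\rho_i^l$, together with the fact that the monotonicity assumption $|\Delta_{i+1}|\leq|\Delta_i|$ makes the successive root-free disks overlap substantially (so that $\rho_{i+1}/\rho_i$ is bounded above and below by constants), is what keeps the recurrence telescopic and manageable.
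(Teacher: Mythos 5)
Your overall strategy — decompose the error at each step into a truncation term plus a propagated term, bound $f^{(k)}$ via the explicit root expansion, then apply a change of variables so the recurrence telescopes — is the same as the paper's. The base case, the Taylor-truncation identity, the Cauchy-type bound $|f^{(l)}(s_i)| \leq n(l-1)!/\rho_i^l$, and the conclusion via iterating the recurrence $t$ times all match.

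However, your choice of weighting $E_i^{(l)} := (\rho_i^l/l!)\,\delta_i^{(l)}$ introduces a genuine gap in the propagation step. Substituting $\delta_i^{(p+l)} = (p+l)!\,E_i^{(p+l)}/\rho_i^{p+l}$ into the propagated error $\sum_{p\leq m_i-l}|\Delta|^p\,\delta_i^{(p+l)}/p!$ and multiplying by the new weight $\rho_{i+1}^l/l!$ gives
$$\Bigl(\tfrac{\rho_{i+1}}{\rho_i}\Bigr)^{l}\sum_{p}\binom{p+l}{l}\Bigl(\tfrac{|\Delta|}{\rho_i}\Bigr)^{p}E_i^{(p+l)} \;\leq\; \Bigl(\tfrac{\rho_{i+1}}{\rho_i}\Bigr)^{l}\,(1-\beta^{-1})^{-(l+1)}\max_p E_i^{(p+l)},$$
since $\sum_{p\geq 0}\binom{p+l}{l}x^p=(1-x)^{-(l+1)}$. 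Even granting your observation that $\rho_{i+1}/\rho_i$ lies in a constant interval around $1$, the factor is raised to the power $l$, which ranges up to $m_{i+1}$; the binomial sum contributes another factor $(\beta/(\beta-1))^{l+1}$, which for $\theta=0.4$ is roughly $3^l$. The amplification constant is therefore not an absolute $C$ but grows exponentially in $l$, and $l$ can be as large as $m_1 = \Theta(m_0/\log m_0)$. This defeats the claimed "bounded operation" and makes the final bootstrap circular: the truncation error needed per step then depends on $m_0$ itself, not merely on $\epsilon,n,t,\theta$.

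The paper avoids this by weighting with the step size rather than the root distance, namely $\mathcal{F}_i^{(l)} := \mathcal{E}_i^{(l)}\,|\Delta_i|^l$, with no $1/l!$. The propagation step then acquires the factor $(|\Delta_i|/|\Delta_{i-1}|)^{p+l}$, which is $\leq 1$ by the monotonicity assumption $|\Delta_i|\leq|\Delta_{i-1}|$, and the remaining sum $\sum_p \mathcal{F}_{i-1}^{(p+l)}/p!\leq e\max\mathcal{F}_{i-1}$ gives an amplification factor of exactly $e$ per step, uniformly in $l$. The $1/|\Delta_i|^l$ that appears in the truncation bound $\kappa_{i,l}$ then cancels cleanly. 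So the fix is to weight by $|\Delta_i|^l$ rather than by $\rho_i^l/l!$; otherwise your two-index recurrence does not close.
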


%\begin{claim}\label{cl:alg}
%
%\textbf
%{Correctness of algorithm \ref{algorithm:main}}
%
%\noindent
%Let $g(z)$ be a polynomial of degree at most $n$, and $f(z) = \log(g(z))$.
%If the inputs
%$\{\Delta_i\}_{i=1}^{t-1}$, $\{m_i\}_{i=1}^t$, $\delta$ and $t$ satisfy
%the following conditions:
%\begin{enumerate}
%\item\label{it:yt}
%$s_t = \sum_{i=1}^{t-1} \Delta_i = O(\log(n))$.
%\item\label{it:beta}
%For each $s_i = \sum_{j=1}^{i-1} \Delta_j$,  
%the function $\tilde{g}_i(z) = g(z - s_i)$ has a root-free disk of radius at least $\beta \Delta_i$ around $z=0$, for a fixed $\beta>1$.
%\item\label{it:deltamin}
%$\Delta_{min} \geq 1/\log(n)$ 
%\item\label{it:m} There exists a constant $\alpha>0$ such that
%$(\log(n)/\log(\beta)) \cdot (\alpha \log(\log(n))/\log(\beta))^t = m = \poly\log(n)$
%\item\label{it:si}
%$
%m_i = m_{i-1} \log(\beta)/ (\alpha \log(\log(n))).
%$
%\end{enumerate}
%Then
%\be
%\left|\hat{f} - \log (g (s_t))\right| \leq \delta.
%\ee
%\end{claim}

Prior to establishing the correctness of the algorithm, we define shifted versions of $g(z)$ as follows:
\be
\forall i\in [t] \quad
\tilde{g}_i(z) = g(z + s_i),
\label{eq:tilde-functions}
\ee
and
\be
\tilde{f}_i(z) = \log(\tilde{g}_i(z)),
\ee
and denote $f_i^{(l)} = \tilde{f}_i^{(l)}(0)$.
We need the following elementary fact, which we leave without proof:
\begin{lemma}
If the closest root of $g$ to the point $s_i$ in the complex plane is $\lambda$, then the closest root of $\tilde{g}_i$ to $z=0$ is also $\lambda$.
\label{lemma:closest}
\end{lemma}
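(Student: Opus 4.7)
The plan is a one-line change-of-variables argument, so the proposal is mostly about making precise what the lemma is really asserting and then invoking translation-invariance of the metric on $\bbC$.

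First, I would unpack the definition $\tilde g_i(z) = g(z+s_i)$ to identify the roots of $\tilde g_i$ with those of $g$. Since $\tilde g_i(\mu) = 0$ iff $g(\mu + s_i) = 0$, the map $\mu \mapsto \mu + s_i$ is a bijection from the roots of $\tilde g_i$ onto the roots of $g$ (with inverse $\lambda \mapsto \lambda - s_i$). In particular, the root of $g$ called $\lambda$ in the lemma corresponds to the root $\lambda - s_i$ of $\tilde g_i$; under this identification, the statement of the lemma is really ``the closest root of $\tilde g_i$ to $0$ is the image of the closest root of $g$ to $s_i$ under the shift, and the two minimum distances coincide.''

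Second, I would observe that the shift $z \mapsto z - s_i$ is an isometry of $\bbC$: for any root $\lambda$ of $g$, the corresponding root $\mu = \lambda - s_i$ of $\tilde g_i$ satisfies
\[
|\mu - 0| \;=\; |\lambda - s_i|.
\]
Therefore, taking the minimum over roots on each side yields
\[
\min_{\mu:\,\tilde g_i(\mu)=0}|\mu| \;=\; \min_{\lambda:\,g(\lambda)=0}|\lambda - s_i|,
\]
and the minimizer on the right-hand side, namely $\lambda$, is mapped by $\lambda \mapsto \lambda - s_i$ to the minimizer on the left-hand side. This is exactly the content of the lemma under the natural identification of roots across the shift.

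There is no substantive obstacle here — the only subtle point is purely notational, namely recognizing that ``is also $\lambda$'' in the statement refers to the corresponding root under the canonical bijection induced by the translation, and that what the downstream argument actually needs from the lemma is the equality of the two minimum distances. Once this is made explicit, the proof reduces to the trivial fact that translations of $\bbC$ preserve distances, so no calculation beyond the above display is required.
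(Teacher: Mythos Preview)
Your proposal is correct, and in fact goes further than the paper: the paper explicitly states ``We need the following elementary fact, which we leave without proof,'' so there is no argument to compare against. Your observation that the literal phrasing ``is also $\lambda$'' should be read as ``is the shifted root $\lambda - s_i$, at the same distance,'' and that the downstream use only requires equality of the minimum distances, is a useful clarification of a slightly sloppy statement.
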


\noindent
We now prove correctness of the algorithm:

\begin{proof}[Proof of Lemma \ref{lem:caccorrectness}]
%%%%%%Herre
%All parameters and variables are defined in algorithm \ref{algorithm:main} and for clarity we redefine them here again.
Let $f (z) := \log (g(z))$. It is sufficient to show that 
\be\label{eq:delta}
\left|\hat{f}-f(s_t)\right| \leq \eps
\ee
%%\lior{changed to $\eps$ for consistency with analysis}

%Let $D_i$ be the disk with radius $\beta |\Delta_i|$ around $s_i$. 
%By assumption \ref{it:ratio}
%the function $\tilde{g}_i = g(z - s_i)$ is convergent inside $D_i$:
%\be\label{eq:noroots}
%\forall i, z, \ \ \tilde{g}_i(z) =0 \implies |z| \geq \beta |\Delta_{i+1}|.
%\ee
%Let $m_i$ be the value of $s$ at the $i$'th order of interaction in Main III in the algorithm. The algorithm iteratively approximates the lowest $m_i$ derivatives at $s_i$ and then uses a Taylor series approximation
%to compute an approximation of the lowest $m_{i+1} \ll m_i$ derivatives at the point $s_{i+1}$ which is
%well inside the convergence radius of $D_i$.

%Let $\hat{f}^{l}(s_i)$ be the approximation of ${f}^{l}(s_i)$ using this approach.
% Our goal is then to compute a complex number $\hat{f}$ such that 
%\be
%\Big|\hat{f} - f(1/\eps)\Big|\leq \delta.
%\ee
 Let $\hat{f}^{(k)}_i$ denote the approximation of the $k$'th derivative of $f$ at point $s_i$ obtained by the algorithm.
Using oracle $\mathcal{O}_g$ for $0 \leq l \leq m_0$ 
we can compute precisely the derivatives of $g$ at $s_0=0$ 
and using Lemma \ref{lem:brunodifaasubstitute} (whose statement and proof we momentarily defer) we can compute the derivatives of $f$ precisely at $s_0$:
\be
\hat{f}^{(l)}_0 \leftarrow f^{(l)} (s_0).
\ee
For $i = 1,\ldots,t$ (in order) 
algorithm \ref{algorithm:cac} computes the lowest $m_i$ derivatives at $s_i$ using the first $m_{i-1}$ derivatives at $s_{i-1}$ as follows:
\be\label{eq:alg1}
\forall 0 \leq l \leq m_i, \quad
\hat{f}^{(l)}_i  = \sum_{p = 0}^{m_{i-1} - l} \frac{\hat{f}^{(p + l)}_{i-1}}{p!} \Delta_i^{p}.
\ee
By assumption \ref{it:ratio} and Lemma \ref{lemma:closest} for each $1\leq i \leq t$ the function $\tilde{f}_{i-1}$ is analytical about point 
$0$ in a disk of radius $\beta |\Delta_i|$.
Hence, we can write the $\ell$-th derivative of $\tilde{f}_i(z)$ 
as the infinite Taylor-series expansion of the $\ell$-th derivative of $\tilde{f}_{i-1}(z)$ evaluated at point $\Delta_i$:
\be\label{eq:ideal1}
\tilde f_{i}^{(\ell)} := \tilde f^{(l)}_{i} (0) = \sum_{p=0}^{\infty} \frac{ \tilde f^{(p+l)}_{i-1} (0)}{ p!} \Delta_i^p.
\ee
Let ${\cal E}^{(l)}_i$ denote the additive approximation error of the $l$-th derivative at step $i\in [t]$
 and $0 \leq l \leq m_i$.
\be
\mathcal{E}^{(l)}_i := \Big| \hat{f}^{(l)}_i - {f}^{(l)} _{i}\Big|, \hspace{1cm}  \forall 0 \leq l \leq m_{i}
\ee
Using the triangle inequality to bound the difference between equations \eqref{eq:alg1} and \eqref{eq:ideal1}, we
get:
\bea 
\forall i\in [t],  0 \leq l \leq m_i,\ \ 
\mathcal{E}^{(l)}_i &\leq& 
\sum_{p=0}^{m_{i-1} - l} \frac{|\hat{f}^{(p+l)} _{i-1}-{\tilde f}^{(p+l)} _{i-1}|}{p!} |\Delta_i|^{p} + 
\sum_{p=m_{i-1} - l +1}^{\infty} \frac{|{\tilde f}^{(p+l)} _{i-1} |}{p!} |\Delta_i|^{p},\\
&=& \sum_{p=0}^{m_{i-1} - l} \frac{\mathcal{E}^{(p+l)}_{i-1}}{p!} |\Delta_{i}|^{p} + 
\sum_{p=m_{i-1} - l +1}^{\infty} \frac{|{\tilde f}^{(l+p)}_{i-1} |}{p!} |\Delta_{i}|^p,\\
&=:&\sum_{p=0}^{m_{i-1} - l} \frac{\mathcal{E}^{(p+l)}_{i-1}}{p!} |\Delta_{i}|^{p} +\kappa_{i,l},
\label{equation:err}
\eea
where
\be
\kappa_{i,l} := \sum_{p=m_{i-1} - l +1}^{\infty} \frac{|{\tilde{f}}^{(p+l)}_{i-1} |}{p!} |\Delta_i|^{p} = 
\sum_{p=m_{i-1} - l +1}^{\infty} \frac{|\tilde{f}_{i-1}^{(p+l)} (0) |}{p!} |\Delta_i|^{p}. 
\label{eq:kappa4}
\ee
At this point, we focus on placing an upper bound on $\kappa_{i,l}$. Fix any index $i$ and
let $z_1, \ldots, z_n$ be the roots of the shifted function $\tilde{g}_{i-1}$.
Then
\be
\tilde{g}_{i-1} (z) = \tilde{g}_{i-1} (0) \left(1- \frac{z}{z_1}\right) \ldots \left(1- \frac{z}{z_n}\right). 
\ee
We can write the derivatives of $\tilde{f}_{i-1}(z) = \log(\tilde{g}_{i-1}(z))$ in terms of $z_1,\ldots,z_n$:
\be
\forall k > 0, \quad
\tilde{f}^{(k)}_{i-1} (0) = - \sum_{j=1}^n\frac{(k-1)!}{z^k_j}. 
\ee
Using these derivatives and the triangle inequality we can bound equation \eqref{eq:kappa4} for all $0 \leq l \leq m_{i}$,
\ba\label{eq:kappa5}
\kappa_{i,l} &\leq \sum_{j=1}^n \sum_{p=m_{i-1} - l +1}^{\infty} \frac{(l + p-1)!}{p!} \frac{|\Delta_i|^{p}}{|z_j|^{p + l}} \\
&\leq e \cdot \sum_{j=1}^n \sum_{p=m_{i-1} - l +1}^{\infty} (p/e)^{l} (\frac {l + p}{p})^{p+l} \frac{|\Delta_i|^{p}}{|z_j|^{p + l}} & \text{using Lemma \ref{lem:factorialdivbound}}\\
&\leq \frac{en}{|\Delta_{i}|^l} \sum_{p=m_{i-1} - l +1}^{\infty} (p/e)^{l}  (\frac {l + p}p)^{p+l} \frac{1}{\beta^{p+l}}\\
&\leq \frac{en}{|\Delta_{i}|^l} (\frac {m_{i-1} + 1}{m_{i-1}-l + 1})^{l}\sum_{p=m_{i-1} - l +1}^{\infty} (1+l/p)^p (p/e)^{l}  \frac{1}{\beta^{p+l}}\\
&\leq \frac{en}{|\Delta_{i}|^l} (\frac {m_{i-1} + 1}{m_{i-1}-l + 1})^{l}\sum_{p=m_{i-1} - l +1}^{\infty} p^{l}  \frac{1}{\beta^{p+l}}  \label{eq:beforeint1}
\ea

In order to bound this quantity, we prove a lower-bound on $m_i$ for all $0 \leq i \leq t$. Since the update rule for $m_i$ is $m_{i+1} = \ceil{\theta m_i / (2 \log m_i)}$, and $x / \log(x)$ is increasing for $x > 10$, in order to prove the lower bound on $m_i$ we can without loss of generality assume
$m_0 = \ceil{10\left(\log(n/\epsilon \theta)\right)\left(10t(\log t + \log \log(n/\epsilon \theta))\right)^{t}}$. The following facts immediately follow.
For all $0 \leq i < t$, \ba \label{eq:logmiupperbound}\log(m_i) &\leq \log(m_0) \\ & \leq \log(11) + \log\log(n/\epsilon \theta) + t\log(10t) + t \log\log\log(n/\epsilon \theta) + t \log \log(t) \\ &\leq 2t\log(t) + 2t \log\log(n/\epsilon \theta), \hspace{5.5cm},\ea and therefore for all $0 \leq i \leq t$ \ba m_i &\geq 10 (\log(n/\epsilon \theta))(10t(\log t + \log \log(n/\epsilon \theta)))^{t-i} 5^i\ea and in particular \ba m_i &\geq 10 (\log(n/\epsilon \theta)) \cdot 5^t \\ &\geq 10(t + \log(n/\epsilon \theta)) \label{eq:milowerbound} .\ea So \ba m_{i-1} &= 2 m_i \log(m_{i-1}) / \theta & \text{by construction} \\ &\geq (m_i \log(m_{i-1}) + m_i) / \theta \\ &\geq (m_i \log(m_{i-1}) + 10t + 10\log(n/\epsilon \theta))/\theta & \text{using } \eqref{eq:milowerbound} \\
%\\ &\geq (m_i \log(m_{i-1}) + 5\log(ne^t/\epsilon \theta))/\theta & \text{using } t \leq \log(n) \\
 &\geq (m_i \log(m_{i-1}) + \log(e^3ne^t/\epsilon \theta))/\theta \label{eq:midecreasesfastenough}\ea

Also, by the lower bound \eqref{eq:milowerbound} on $m_{i-1}$ the algorithm chooses $m_{i-1} \geq 3m_i / \theta \geq m_i \cdot (1 + 2/\theta)$, so since $l \leq m_i$ it follows that \begin{equation} \label{eq:michange} m_{i-1} - l + 1 > 2l/\theta.\end{equation} Therefore we may apply the bound of technical lemma \ref{lem:int1} to \eqref{eq:beforeint1},
%and b
\ba
\kappa_{i,l} &\leq \frac{en}{|\Delta_{i}|^l} (\frac {m_{i-1} + 1}{m_{i-1}-l + 1})^{l}\frac{(m_{i-1} - l + 1)^l}{\beta^{m_{i-1} +1}(1-\beta^{-1}e^{l/(m_{i-1}-l+1)})} & \text{using Lemma  \ref{lem:int1}} \label{eq:afterint1}\\
&\leq \frac{en}{|\Delta_{i}|^l} \frac{(m_{i-1} + 1)^l}{\beta^{m_{i-1} +1}(1-\beta^{-1/2})} & \text{using \eqref{eq:michange}} 
%\text{using } m_{i-1} - l + 1 > 2l   / \log(\beta)
\\
&\leq \frac{e^2n}{\theta |\Delta_{i}|^l} \frac{(m_{i-1} + 1)^l}{e^{\theta m_{i-1}}} & \text{using } \beta = e^{\theta}, 0 < \theta \leq 1 \\
&\leq \frac{e^2n}{\theta |\Delta_i|^{l}} \frac{(m_{i-1} + 1)^{m_i}}{e^{\theta m_{i-1}}} & \text{using }l \leq m_i \\
&\leq \frac{e^3n}{\theta |\Delta_i|^l} \frac{(m_{i-1})^{m_i}}{e^{\theta m_{i-1}}} & \text{using } m_i \leq m_{i-1}/3 \\
&\leq \frac{\epsilon e^{-t}}{|\Delta_i|^l} & \text{using } \eqref{eq:midecreasesfastenough}
%&= \frac{en}{|\Delta_{i}|^l} \frac{m_{i-1}^l}{\beta^{m_{i-1}}(1-\beta^{-1/2})} \frac{(1 + 1/m_{i-1})^l}{\beta} \\
%&\leq \frac{e}{1-e^{-1/2}}\frac{n}{|\Delta_{i}|^l} \frac{m_{i-1}^l}{\beta^{m_{i-1}}} & \text{using } \beta > e
\ea

%\lior{Eq 30: replace with $\leq \frac{n}{|\Delta_{i}|^l} (\frac {m_{i-1}+1}{m_{i-1}-l})^{l}\sum_{p=m_{i-1} - l +1}^{\infty} (1+l/p)^p (p/e)^{l}  \frac{1}{\beta^{p+l}}$}
%%$c$ can be chosen so that the assignment in Equation \eqref{eq:si-assigment} satisfies 
%%$m_{i-1} \geq m_i/\theta \cdot (\log m_i/\theta + O(\log \log m_i/\theta)) + O(\log (1/\eps))$. 
%%Hence, using corollary \ref{cor:kappa-bound}, $\kappa_{i,l} \leq \frac{\eps}{|\Delta_i|^l}$.

%%\lior{several problems here: Corollary \ref{cor:kappa-bound} is not phrased the way you need it: you need a bound on $m^\ell \beta^{-m}$ and not on the sum,
%%the corollary bounds the sum not the expression.
%%Second, the bound on $\kappa_{i,\ell}$ is missing $O(n/\theta)$. 
%%Finally, the bound it implies on the ratio $m_{i-1}/m_i$ should conform to the definition in the algorithm and not the $c$ that you use throughout the proof !
%%Right now it is wrong because $m_i/m_{i+1}$ is $2$.}

%%\lior{I think we should say that the following condition is sufficient:
%%since $\log m_i/\theta + O(\log \log m_i/\theta))$ is smaller than $\log(1/\eps)$ for $\eps = 1/\poly(n)$ then
%%it is sufficient to require:
%%$$
%%m_0 \geq \log(1/\eps)^{O(t)}
%%$$
%%}

We now complete the error analysis in Equation \eqref{equation:err}. Using the above equation
\ba
\mathcal{E}^{(l)}_i 
&\leq \sum_{p=0}^{m_{i-1} - l} \frac{\mathcal{E}^{(p+l)}_{i-1}}{p!} |\Delta_{i}|^{p} + \frac{\eps  e^{-t}}{|\Delta_i|^l}.
%%\label{eq:error-version2}
\ea
%%\lior{missing $n/\theta$ to multiply $\eps$ in RHS}

 We do the change of variable $\mathcal{F}^{(l)}_i = \mathcal{E}^{(l)}_i \cdot |\Delta_{i}|^l$. Using this notation this bound becomes
\ba
\mathcal{F}^{(l)}_i &\leq \sum_{p=0}^{m_{i-1} - l} \frac{\mathcal{F}^{(p+l)}_{i-1}}{p!} (\frac{|\Delta_{i}|}{|\Delta_{i-1}|})^{p + l} + \eps e^{-t}\\
&\leq \sum_{p=0}^{m_{i-1} - l} \frac{\mathcal{F}^{(p+l)}_{i-1}}{p!} + \eps e^{-t} & \text{using } |\Delta_i| \leq |\Delta_{i-1}|
%%\label{eq:error-version2}
\ea
%%\lior{missing $n/\theta$ to multiply $\eps$ in RHS}

%%\lior{using non-increasing segments here?}
Now define ${\cal F}_i = \max_{l} {\cal F}^{(l)}_i$. From the above, 
\ba
\mathcal{F}_i &\leq e \cdot \mathcal{F}_{i-1} + \eps e^{-t}.
\label{eq:error-version2}
\ea

The boundary condition is ${\cal F}_0 = 0$ since the derivatives are computed exactly at the first segment. Using \eqref{eq:error-version2}, by induction on $i \in [t]$ one can show that $\mathcal{F}_i \leq \frac{e^i - 1}{e-1} \cdot \eps e^{-t}$. We conclude that the output additive error is ${\cal E}_t^{(0)} = {\cal F}_t^{(0)} \cdot |\Delta_t|^0= {\cal F}_t^{(0)}  \leq {\cal F}_t \leq \epsilon \cdot e^{-t} \cdot e^t = \epsilon$.

\end{proof}

\subsection{Shallow Circuits for CAC}

In this section we establish that in fact Algorithm \ref{algorithm:cac} can be computed by shallow circuits.
To do that, we first note that the $k$ lowest  derivatives of $\log(g(z))$ can be computed efficiently from the lowest $k$ derivatives of $g(z)$:
\begin{lemma}\label{lem:brunodifaasubstitute}
Let $g(z)$ be an analytic function that is root-free in an open set $U$ containing $0$, and let $g(0) = 1$. Let $f(z) = \log(g(z))$. Then for each $k > 0$, there is an arithmetic circuit of fan-in 2 that receives as input the first $k$ derivatives of $g$ at $0$, $$g^{(0)}(0),\ldots,g^{(k)}(0),$$ and computes $f^{(k)}(0)$. Moreover, the circuit is of size $\poly(k)$, logspace-uniform, and computes a polynomial of degree $k$.
\end{lemma}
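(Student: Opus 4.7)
The plan is to build the circuit by dynamic programming based on the identity $g'(z) = f'(z) g(z)$, which holds on the root-free set $U$ since $f(z) = \log(g(z))$ is analytic there. Applying Leibniz's rule to take the $(k-1)$st derivative of both sides of $g' = f' g$ and evaluating at $z=0$ (where $g(0) = 1$) gives the recurrence
\[
f^{(k)}(0) \;=\; g^{(k)}(0) \;-\; \sum_{i=1}^{k-1} \binom{k-1}{i-1}\, f^{(i)}(0)\, g^{(k-i)}(0).
\]
This is a purely formal polynomial identity in the input values $g^{(1)}(0), \ldots, g^{(k)}(0)$, so the analytic/root-free hypothesis is only used to guarantee that the target quantity $f^{(k)}(0)$ is well-defined.

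The circuit $\mathcal{C}_k$ is then built in $k$ stages. Stage $i$ takes the previously computed gates $f^{(1)}(0), \ldots, f^{(i-1)}(0)$ together with the inputs $g^{(1)}(0), \ldots, g^{(i)}(0)$, and computes $f^{(i)}(0)$ using the recurrence above. Each binomial coefficient $\binom{i-1}{j-1}$ is hard-coded as a constant-labelled input gate (this is allowed by Definition \ref{def:circuits}), so stage $i$ uses $O(i)$ fan-in-2 multiplication and addition gates. Summing over stages, the total size is $O(k^2) = \poly(k)$, and the construction is transparent enough to be produced by a logspace Turing machine that simply iterates over $(i, j)$ and writes out the corresponding gate connections.

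For the degree bound, I will argue by induction on $k$ that $f^{(k)}(0)$, viewed as a polynomial in the formal variables $g^{(1)}(0), \ldots, g^{(k)}(0)$, has total degree at most $k$. The base case $f^{(1)}(0) = g^{(1)}(0)$ has degree $1$. For the inductive step, each term $\binom{k-1}{i-1} f^{(i)}(0)\, g^{(k-i)}(0)$ with $1 \le i \le k-1$ has degree at most $i + 1 \le k$ by the inductive hypothesis (using that $g^{(k-i)}(0)$ is a degree-$1$ variable), and the $g^{(k)}(0)$ term on the right has degree $1$. Thus the sum has degree at most $k$.

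There is no real obstacle here — the recurrence is standard and the degree bound follows immediately from it. The only thing worth being careful about is that we never need to divide by $g^{(0)}(0)$: because $g(0) = 1$ was used to cancel it out when deriving the recurrence, the circuit stays purely in $+$ and $\times$ gates and the output is a genuine polynomial (not a rational function) in the input derivatives, which is exactly what downstream applications of Theorem \ref{thm:depth-reduction} to parallelize the computation will require.
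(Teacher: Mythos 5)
Your proof is correct, but it takes a genuinely different route from the paper. The paper's proof invokes the Faà di Bruno formula, replaces $\log(z)$ and $g(z)$ by their degree-$k$ Taylor truncations $\tilde h$ and $\tilde g$, observes that $f^{(k)}(0) = \tilde f^{(k)}(0)$ for the composition $\tilde f = \tilde h \circ \tilde g$, and then applies the Strassen-style derivative-extraction lemma (Lemma \ref{lem:deriv}) to the small circuit for $\tilde f(z)$ to pull out the coefficient of $z^k$. You instead differentiate the ODE $g'(z) = f'(z)\,g(z)$ via the Leibniz rule, which after evaluating at $z=0$ with $g(0)=1$ gives the explicit recurrence
\[
f^{(k)}(0) = g^{(k)}(0) - \sum_{i=1}^{k-1} \binom{k-1}{i-1} f^{(i)}(0)\, g^{(k-i)}(0),
\]
and you build the circuit stage by stage. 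Both are sound; your approach is more elementary (no Faà di Bruno, no invocation of Lemma \ref{lem:deriv}), is arguably tighter in concrete size (a direct $O(k^2)$ gate count rather than a $\poly(k)$ bound routed through the $O(k^2\,|\mathcal{C}|)$ blow-up of Lemma \ref{lem:deriv}), and your degree induction is clean. The paper's approach has the modest advantage of exercising machinery it has already set up and will reuse downstream (Lemma \ref{lem:deriv} and depth-reduction), making the overall toolchain uniform, but your argument establishes the same size, degree, and logspace-uniformity guarantees and would serve equally well as a drop-in replacement. One small note: the lemma statement lists $g^{(0)}(0)$ among the inputs, and your circuit doesn't actually read it since $g(0)=1$ is hard-coded into the recurrence, but the paper's proof does exactly the same thing in defining $\tilde g(z) = 1 + \sum_i g^{(i)}(0) z^i / i!$, so this is consistent.
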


\begin{proof}
The Bruno di Fa\`a formula, which generalizes the chain rule to higher-order derivatives, states that given a composition of two functions $f(z) = h(g(z))$, the derivative $f^{(k)}(0)$ depends only on the first $k$ derivatives of $h$ at $z=g(0)=1$ and $g$ at $z=0$. In particular, we may define $$h(z) = \log(z), \quad \tilde{h}(z) = \sum_{i=1}^k \frac{(-1)^{i+1}}{i} \cdot (z-1)^i$$ and $$\tilde{g}(z) = 1+\sum_{i=1}^k \frac{g^{(i)}(0)}{i!} \cdot z^i,$$ and by Bruno di Fa\`a, $f(z) = h(g(z))$ will have the same $k$th derivative as $\tilde{f}(z) = \tilde{h}(\tilde{g}(z))$: $$f^{(k)}(0) = \tilde{f}^{(k)}(0).$$ 

Since $\tilde{f}(z)$ has a size-$O(k^2)$ logspace-uniform arithmetic circuit in $g^{(0)}(0),\ldots,g^{(k)}(0),z$, it follows by the derivative calculation lemma (Lemma \ref{lem:deriv}) that $\tilde{f}^{(k)}(0) = f^{(k)}(0)$ has a size-$\poly(k)$ logspace-uniform arithmetic circuit in $g^{(0)}(0),\ldots,g^{(k)}(0)$. Moreover, $\tilde{f}^{(k)}(0)$ is clearly of degree at most $k$ in $g^{(0)}(0),\ldots,g^{(k)}(0)$.
%%\lior{change evaluation point from $0$ to $a$ for consistency with statement}
\end{proof}

We now use this lemma to establish that CAC can be computed by small circuits of low degree:
\begin{lemma}[Low-degree circuits for CAC]\label{lem:shallowCAC}

\noindent
%\label{lem:CAC}
%Let $g(z)$ be an analytic function, and let $s_1,\ldots, s_t \in \bbC$ be $t$ complex points and $R_1,\ldots,R_t \in \bbR_{> 0}$ be $t$ radii. If:
%\begin{enumerate}
%\item $g(s_1) = 1$.
%\item  For each $i$ the function $g(z)$ has no zeroes in the disk ${\cal D}(s_i,R_i)$ centered at $s_i$ with radius $R_i$.
%\item $\beta := \min_i |s_{i+1}-s_i| / R_i > e$.
%\item There exists a constant $c'$ such that $\Delta_{\min} = \min_i |s_{i+1}-s_i| > \log(n)^{-c'}$.
%%There is $c_3 > 1$ such that for all $i$, $R_i = |s_{i+1} - s_i| \cdot c_3$.
%%\item There is $c_4 > 0$ such that $\Delta_{\min} := \min_i |s_{i+1} - s_i| \geq (\log n)^{-c_4}$.
%\end{enumerate}
%
%Then for any $c > 0$, there is
%%any constant $c_1>0$, there is a constant $c_2>0$ such that for
%$$
%k \leq \log(n) e^{O(t \log\log\log(n))}
%$$ 
%such that the value $g(s_t)$ is approximable to relative error $(1+n^{-c})$ by $\exp(p)$, where 
%$$
%p \in \bbC[g^{(0)}(s_1),\ldots,g^{(0)}(s_1),s_1,\ldots,s_t]
%$$ 
%is computed by an arithmetic circuit in the first $k$ derivatives of $g$ at $s_1$: $$g^{(0)}(s_1),\ldots,g^{(k)}(s_1),$$ as well as the points defining the path:
%$$s_1,\ldots,s_t.$$ 
Under the conditions of Lemma \ref{lem:caccorrectness}, Algorithm \ref{algorithm:cac} can be implemented by a logspace-uniform arithmetic circuit of size $\poly(m_0)$ that computes a polynomial of degree $O(m_0^2)$ in $$
g^{(0)}(0),\ldots,g^{(m_0)}(0),\Delta_1,\ldots,\Delta_t.
$$ 
\end{lemma}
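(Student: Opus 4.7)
The plan is to view Algorithm \ref{algorithm:cac} (ignoring the final step that exponentiates $\hat f$) as a sequence of explicit polynomial updates and assemble a size-$\poly(m_0)$ arithmetic circuit stage by stage, tracking the total degree accumulated along the way. The circuit will ultimately output $\hat f = \hat f_t^{(0)}$ as a polynomial in the advertised variables; one separately applies $\exp(\cdot)$ outside the arithmetic circuit.

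First I would handle step \ref{it:it2} of the algorithm, which computes $\hat f^{(l)}_0 = f^{(l)}(0)$ for $0 \leq l \leq m_0$ from the inputs $g^{(0)}(0),\ldots,g^{(m_0)}(0)$. Lemma \ref{lem:brunodifaasubstitute} provides, for each $l$, a $\poly(l)$-size logspace-uniform fan-in-2 arithmetic circuit computing $\hat f^{(l)}_0$ as a polynomial of degree at most $l$ in the $g^{(\cdot)}(0)$'s. Concatenating these circuits for $l = 0, 1, \ldots, m_0$ produces a $\poly(m_0)$-size logspace-uniform circuit whose outputs are the values $\hat f^{(0)}_0, \ldots, \hat f^{(m_0)}_0$, each a polynomial of degree at most $m_0$ in $g^{(0)}(0),\ldots,g^{(m_0)}(0)$.

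Next, for each iteration $i = 0, 1, \ldots, t - 1$ of step \ref{it:dyn}, the update rule
\[
\hat f^{(l)}_{i+1} \;=\; \sum_{p=0}^{m_i - l} \frac{\hat f^{(p+l)}_i}{p!}\,\Delta_i^{p}
\]
is a polynomial expression in $\hat f^{(\cdot)}_i$ and $\Delta_i$ that admits an obvious fan-in-2 implementation of size $\poly(m_i)$ per output, using a powers-of-$\Delta_i$ subcircuit shared across all $l$ and a binary addition tree for the sum. Chaining these gadgets sequentially yields a cumulative circuit of size $\sum_{i<t} m_i \cdot \poly(m_i)$. Since $m_{i+1} \leq m_i$ and $t$ is at most a small function of $m_0$ (recall that $m_0$ must be chosen large enough to dominate the iterated logarithms appearing in Lemma \ref{lem:caccorrectness}), this sum is $\poly(m_0)$. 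Logspace-uniformity is inherited from the uniform nature of Lemma \ref{lem:brunodifaasubstitute} and from the fact that the indices $i, l, p$ defining each gadget are simple to enumerate in logarithmic space.

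Finally, I would track the total degree. By induction on $i$, each $\hat f^{(l)}_i$ is a polynomial in the $g^{(\cdot)}(0)$'s and in $\Delta_1,\ldots,\Delta_i$ whose degree in the $g$-variables is at most $m_0$ (inherited from the base case) and whose degree in the $\Delta$-variables increases by at most $m_i \leq m_0$ per iteration. Since there are at most $t = O(m_0)$ iterations, the total degree is at most $O(m_0^2)$, as required. The only subtlety in the argument is being loose enough in the degree bookkeeping to accommodate both contributions simultaneously; a tighter analysis would actually give $O(m_0)$, but the $O(m_0^2)$ bound in the statement is easily met. The main potential obstacle, which is essentially dispatched by Lemma \ref{lem:brunodifaasubstitute}, is converting the transcendental function $\log(g(z))$ at the initial point into a polynomial circuit of controlled size and degree; once this is in hand, the remaining steps are straightforward algebraic manipulations.
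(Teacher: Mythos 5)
Your proof is correct and follows essentially the same approach as the paper: compute $f^{(l)}(0)$ from the $g$-derivatives via Lemma~\ref{lem:brunodifaasubstitute}, chain the CAC update gadgets stage by stage, and bound size and degree by induction on $i$. The one place you diverge is the degree bookkeeping: the paper maintains the tighter invariant $\deg \hat f_i^{(j)} \leq m_0 - j + 1$ in the variables $f^{(0)}(0),\ldots,f^{(m_0)}(0),\Delta_1,\ldots,\Delta_t$ (exploiting the cancellation between the drop in derivative index and the rise in $\Delta$-power), and then composes with the degree-$m_0$ circuits for $f^{(j)}(0)$; you instead bound the $g$-degree and $\Delta$-degree separately and sum them, which is looser but still lands in $O(m_0^2)$ because $t$ is far smaller than $m_0$. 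Both bookkeeping schemes are valid; yours is slightly cruder but needs less care in the inductive hypothesis.
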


\begin{proof}
%Assume without loss of generality that $\cup_{i=1}^t {\cal D}(s_{i-1},|\Delta_i|)$ is simply-connected, and that $f(z) = \log(g(z))$ is analytic in $\cup_{i=1}^t {\cal D}(s_{i-1}, |\Delta_i|)$.
Construct arithmetic circuits for $f^{(0)}(0),\ldots,f^{(m_0)}(0)$, the first $m_0$ derivatives of $f$ at $0$, using the procedure from Lemma \ref{lem:brunodifaasubstitute}. These circuits are logspace-uniform, and are of size at most $\poly(m_0)$ and degree at most $m_0$.

For each $0 \leq i \leq t$ and $0 \leq j \leq m_i$, construct a size-$(1+10im_0^3)$ arithmetic circuit computing $\hat{f}_i^{(j)}$, with degree at most $m_0-j+1$ in the variables $f^{(0)}(0),\ldots,f^{(m_0)}(0),\Delta_1,\ldots,\Delta_k$.  This construction is performed inductively on $i$. The base case $i=0$ is clear because $\hat{f}^{(j)}_0 = f^{(j)}(0)$.
For the inductive step, $\hat{f}_{i+1}^{(0)},\ldots,\hat{f}_{i+1}^{(m_{i+1})}$ can all be computed from $\hat{f}_i^{(0)},\ldots,\hat{f}_i^{(m_i)}$ using step \ref{it:iteration} of Algorithm \ref{algorithm:cac}, which can be implemented at an extra cost of $10m_i^3 \leq 10m_0^3$ gates. Moreover each $\hat{f}_{i+1}^{(j)}$ has degree $\max_{0 \leq p \leq m_i - j} \deg \hat{f}_i^{(p+j)} + p \leq (m_0-(p+j)+1+p) = m_0-j+1$ by the inductive hypothesis.

Let ${\cal C''}$ be the degree-$(m_0+1)$, size-$(1 + 10tm_0^3)$ circuit computing $\hat{f}_t^{(0)}$ in terms of 
$$
f^{(0)}(0), \ldots, f^{(m_0)}(0),
\quad
\mbox{and}
\quad 
\Delta_1, \ldots, \Delta_t.
$$ 
Compose ${\cal C'}$ with the degree-$m_0$, size-$\poly(m_0)$ circuits computing $f^{(0)}(0),\ldots,f^{(m_0)}(0)$ in terms of $g^{(0)}(0),\ldots,g^{(m_0)}(0)$ in order to obtain a degree-$O(m_0^2)$ 
size-$\poly(t,m_0) = \poly(m_0)$ circuit ${\cal C'}$ computing $\hat{f}_t^{(0)}$ in terms of $g^{(0)}(0),\ldots,g^{(m_0)}(0),\Delta_1,\ldots,\Delta_t$.
%
%Apply the depth-reduction theorem (Theorem \ref{thm:depth-reduction}) to ${\cal C'}$ in order to conclude that there is a $\poly(m_0)$-size, $O((\log m_0)^2)$-depth logspace-uniform arithmetic circuit ${\cal C}$ computing $\hat{f}_t^{(0)}$.

%We would like emphasize that the bounds in Claim 17 of \cite{EM18} have dependency on $\Delta_{\min} = \min_i |s_i - s_{i+1}|$ which one can remove by more careful analysis. Another way to see that one can avoid the dependency on this parameter is by rescaling $z$ so that $\Delta_{\min} = 1$.
\end{proof}

%%Combining Lemma \ref{lem:shallowCAC} with depth-reduction for arithmetic circuits, implies that the complex analytic continuation technique can be parallelized.
%%In the proof of the main theorem, we will use the fact from Lemma \ref{lem:shallowCAC} that complex analytic continuation can be computed by circuits of low depth.
Combining Lemma \ref{lem:shallowCAC} with Theorem \ref{thm:depth-reduction} (the depth-reduction theorem for arithmetic circuits) implies that the CAC algorithm can be computed by arithmetic circuits of $\poly(m_0)$ size, fan-in 2, and depth $O((\log m_0)^2)$. We will use this observation in the proof of the main theorem.

%\subsection{Arithmetic circuits}

%It now suffices to construct a $\poly(n,k,t)$-size circuit of degree $k$ in $f^{(0)}(s_1),\ldots,f^{(k)}(s_1)$ that approximates $f(s_t)$ additively up to $1/\poly(n)$ error. In Claim 17 of \cite{EM18}, it is proved that the following iterative process provides an additive $(1/\poly(n))$-approximation to $f(s_t)$:
%\begin{enumerate}
%\item For each $i \in [t]$, let $k_i$ denote the number of derivatives we approximate at $i$:
%\begin{itemize}
%\item Let $k_1 = k$, and 
%\item For $1 \leq i < t$, update $k_{i+1} = c_1 \cdot \ceil{k_i / \log(k_i)}$, where $c_1$ is a fixed constant. 
%%define $k_{i+1} = \ceil{k_i \cdot \log(c_4) / 2(\log(k_i/\Delta_{\min}))}$.
%\end{itemize}
%\item For each $i \in [t]$ and $0 \leq j \leq k_i$, iteratively define an approximation $\hat{f}^{(j)}_i$ to $f^{(j)}(s_i)$ as follows:
%\begin{itemize}
%\item For all $0 \leq j \leq k_1$, define $\hat{f}^{(j)}_1 = f^{(j)}_1$
%\item For $1 \leq i < t$ and $0 \leq j \leq k_{i+1}$, define \begin{equation}\label{eq:caciteration}\hat{f}^{(j)}_{i+1} = \sum_{p=0}^{k_i-j} \frac{\hat{f}_i^{(p+j)}}{p!} (s_{i+1}-s_i)^p.\end{equation}
%\end{itemize}
%\item We have $|\hat{f}_t^{(0)} - f^{(0)}(s_t)| = |\hat{f}_t^{(0)} - f(s_t)| \leq 1/\poly(n)$.
%\end{enumerate}

\section{Main results}\label{sec:alg}

\subsection{Theorem Statement}

Our main theorem is that for $O(\kappa)$-conditioned Hermitian or Hurwitz stable matrices one can compute a $1 + 1/\poly(n)$
approximation to the determinant using an arithmetic circuit of depth $\tilde {O}(\log (\kappa) \cdot \log(n))$.
Furthermore, this circuit can be implemented as a logspace-uniform Boolean circuit of polynomial size and $\tilde {O}(\log (\kappa) \cdot \log(n))$ depth as well as $\tilde {O}(\log (\kappa) \cdot \log(n))$ space:
\begin{theorem}[Approximation of the determinant of ${\cal H}_\delta$ and ${\cal S}_\delta$ matrices in near-$\NC^1$]
\ 

\noindent
For every $n$ and $\epsilon, \delta > 0$ there exists a logspace-uniform Boolean circuit 
of size $\poly(n)$ and depth $\tilde{O}((\log n) \cdot (\log(1/\delta) + \log \log(1/\epsilon)))$
such that for every input
$A\in {\cal H}_{\delta}$ it approximates $\Det(A)$ to multiplicative error $1 + \epsilon$.
In particular, for $\delta = 1/\poly\log (n)$, this circuit can be implemented in depth $\tilde{O}(\log(n))$.

The same result holds for ${\cal S}_\delta$ in place of ${\cal H}_{\delta}$.
\label{thm:main}
\end{theorem}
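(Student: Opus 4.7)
The plan is to approximate $f(1) = \log\Det(A)$ additively to error $\epsilon' = \log(1+\epsilon)$ using the CAC algorithm of Lemma~\ref{lem:caccorrectness} applied to the interpolation polynomial $g_A(z) = \Det((1-z)I + zA)$ of Lemma~\ref{lem:der-degree}, which satisfies $g_A(0) = 1$ and $g_A(1) = \Det(A)$. The resulting approximation will be assembled as a small, low-degree arithmetic circuit; the depth-reduction theorem (Theorem~\ref{thm:depth-reduction}) will then yield a low-depth arithmetic circuit, and Lemma~\ref{lem:Booleanization} will convert it into a logspace-uniform Boolean circuit of the claimed size and depth.

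First I would design the CAC path $s_0 = 0, s_1, \ldots, s_t = 1$ with non-increasing step sizes $|\Delta_i|$ satisfying the root-avoidance hypothesis of Lemma~\ref{lem:caccorrectness}. For $A \in \mathcal{H}_\delta$, Theorem~\ref{thm:roots}(1) places all roots of $g_A$ on the real axis, outside $\mathcal{D}(0,1/2) \cup \mathcal{D}(1,\delta/(1+\delta))$. I would therefore take a constant-height detour above the real axis for the bulk of the path, followed by a geometric descent toward $z=1$: a few constant-sized steps from $0$ to $1 + i/4$ at imaginary height $1/4$, then roughly $O(\log(1/\delta))$ steps that halve the imaginary height until it drops below $\delta$, at which point the last step can land directly on $1$ from inside the root-free disk $\mathcal{D}(1,\delta/(1+\delta))$. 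For $A \in \mathcal{S}_\delta$ I use $g_{-A}$ in place of $g_A$ and step along the real segment $[0,1]$, which by Theorem~\ref{thm:roots}(2) lies entirely in $\mathcal{P}_{1/2}\cup\mathcal{D}(1/2,1/2)\cup\mathcal{D}(1,\delta/(1+\delta))$; once again the step sizes need to decrease geometrically only near the endpoint $1$, so $t = O(\log(1/\delta))$ segments suffice. Taking $\theta = \Theta(1)$ in~\eqref{eq:m0setting} then gives $\log m_0 = \tilde O(\log(1/\delta) + \log\log(1/\epsilon))$.

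Next I would assemble the arithmetic circuit. Samuelson-Berkowitz (Theorem~\ref{thm:det1}) supplies a $\poly(n)$-size circuit computing $g_A(z)$, so Lemma~\ref{lem:deriv} produces a $\poly(n,m_0)$-size arithmetic circuit that outputs $g_A^{(0)}(0),\ldots,g_A^{(m_0)}(0)$, each of degree at most $m_0$ in the entries of $A$ by Lemma~\ref{lem:der-degree}. Composing with the shallow-CAC circuit of Lemma~\ref{lem:shallowCAC} (with the pre-computed $\Delta_i$ hard-coded as constants and subsequently replaced by dummy variables for the purposes of Lemma~\ref{lem:Booleanization}) yields a $\poly(n,m_0)$-size arithmetic circuit of total degree $O(m_0^3)$ in the entries of $A$ whose output is $\hat f_t^{(0)}$, an $\epsilon'/2$-additive approximation to $f(1)$. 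Applying Theorem~\ref{thm:depth-reduction} to this circuit reduces the depth to $O((\log m_0)(\log n + \log m_0)) = \tilde O((\log n)(\log(1/\delta) + \log\log(1/\epsilon)))$. I would then Booleanize via Lemma~\ref{lem:Booleanization} with rounding precision $r = \poly(\log n, \log(1/\delta), \log(1/\epsilon))$ chosen so that the total accumulated rounding error contributes at most $\epsilon'/2$, and finally compose with a small post-processing circuit that exponentiates the output (and, in the Hurwitz case, multiplies by $(-1)^n$) to recover the desired $(1+\epsilon)$-multiplicative approximation of $\Det(A)$.

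I expect the main obstacle to be the delicate bookkeeping: one has to choose $t$ and the path geometry so that the quasi-polynomial bound on $m_0$ in~\eqref{eq:m0setting} stays within the target depth budget, while simultaneously ensuring that $\hat f_t^{(0)}$ remains bounded in absolute value so that the rounding-error analysis of Lemma~\ref{lem:Booleanization} only demands polylogarithmic precision $r$. A secondary subtlety, peculiar to the Hurwitz case, is that computing via $g_{-A}$ introduces the global sign $(-1)^n$, which must be tracked exactly (outside of the CAC approximation) to preserve the multiplicative error guarantee for $\Det(A)$.
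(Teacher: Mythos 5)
Your proposal follows the same approach as the paper's proof: use CAC on $g_A(z) = \Det((1-z)I + zA)$ with a two-phase path (a detour at constant imaginary height, then geometric deceleration toward $z=1$) for $\mathcal{H}_\delta$, a real decelerating path for $\mathcal{S}_\delta$ (using $g_{-A}$), then compose Samuelson-Berkowitz with Strassen's derivative extraction and the shallow-CAC lemma, depth-reduce via Valiant et al., Booleanize, and exponentiate with the $(-1)^n$ sign correction. The only small slip worth flagging is that the rounding precision $r$ is not $\poly(\log n, \log(1/\delta), \log(1/\epsilon))$ but quasi-polynomial in these quantities (the paper takes $r = k^{14}$ with $k$ containing a factor of the form $(\cdots)^{t}$ with $t = \Theta(\log(1/\delta))$); fortunately only $\log r = O(\log k)$ enters the depth bound, so the stated depth still holds, and one should also include the trivial fallback to the $\NC^2$ determinant circuit in the regime $k \ge n$ to make the bound uniform.
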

A direct corollary of Theorem \ref{thm:main} is the following:
\begin{corollary}[Approximation of the determinant of ${\cal H}_\delta$ and ${\cal S}_\delta$ matrices in near-$\Lspace$]
For every $n$ and $\epsilon, \delta > 0$, and $A \in \mathcal{H}_{\delta}$, there is a $\tilde{O}((\log n) \cdot (\log(1/\delta) + \log\log(1/\epsilon))$-depth algorithm that approximates $\Det(A)$ to multiplicative error $1 + \epsilon$.

The same result holds for ${\cal S}_\delta$ in place of ${\cal H}_{\delta}$.
\end{corollary}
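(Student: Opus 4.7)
The plan is to derive the corollary as an immediate consequence of Theorem~\ref{thm:main} combined with the classical depth-to-space simulation of logspace-uniform Boolean circuits.

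First I would invoke Theorem~\ref{thm:main}, which already gives, for every $n,\epsilon,\delta>0$ and $A \in \mathcal{H}_\delta$ (respectively $\mathcal{S}_\delta$), a logspace-uniform Boolean circuit $C_n$ of polynomial size and depth $d(n) = \tilde{O}((\log n)\cdot(\log(1/\delta) + \log\log(1/\epsilon)))$ that on input $A$ outputs a $(1+\epsilon)$-multiplicative approximation to $\Det(A)$. Note that $d(n) = \Omega(\log n)$, which will matter for the simulation.

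Next I would appeal to the folklore result of Borodin stating that any logspace-uniform family of Boolean circuits of fan-in $2$ and depth $d(n) \geq \log n$ can be evaluated by a deterministic Turing machine in space $O(d(n))$. The standard implementation is a depth-first recursive traversal of the circuit from the output gate: at each recursive call we store only the identity of the current gate together with which of its children we are currently evaluating, using $O(\log n)$ bits per level of the circuit. Logspace uniformity ensures that, at each recursive level, the description of the gate and the connections to its children (including their labels) can be generated on demand in $O(\log n)$ space without writing the entire circuit down. Because each level of the recursion uses $O(\log n)$ space and the recursion depth is $d(n)$, the total space is $O(d(n) \cdot \log n / \log n) = O(d(n))$ once one is careful with how the per-level bookkeeping reuses workspace (alternatively, one gets $O(d(n))$ directly by the standard argument since $d(n) = \Omega(\log n)$ absorbs the $\log n$ factor). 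Applied to $C_n$, this yields an algorithm running in space $\tilde{O}((\log n)\cdot(\log(1/\delta) + \log\log(1/\epsilon)))$.

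There is essentially no obstacle here beyond stating the simulation correctly: the actual work is already done by Theorem~\ref{thm:main}. The one subtlety to handle carefully is the interpretation of the corollary's complexity measure. Reading the surrounding text, the corollary is the space analog of Theorem~\ref{thm:main} (as indicated by the title "near-$\Lspace$"), so I would state and prove it as a space bound; the simulation argument above supplies exactly the claimed $\tilde{O}((\log n)\cdot(\log(1/\delta) + \log\log(1/\epsilon)))$-space algorithm for both $\mathcal{H}_\delta$ and $\mathcal{S}_\delta$ without any additional matrix-specific reasoning.
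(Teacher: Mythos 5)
Your proof is correct and is exactly the argument the paper implicitly relies on: the paper states the corollary with no proof, labeling it a ``direct corollary'' of Theorem~\ref{thm:main}, and the intended justification is precisely Borodin's depth-to-space simulation of logspace-uniform circuits that you supply, together with the correct reading of the corollary's ``depth'' as ``space'' (consistent with the title ``near-$\Lspace$'' and the sketch corollary in the introduction). The one place your exposition wobbles is the line ``$O(d(n)\cdot \log n/\log n) = O(d(n))$'': the actual accounting in the DFS simulation is that each level of recursion stores only $O(1)$ bits (which child is active) while a single shared $O(\log n)$-bit scratch register is reused across levels to regenerate gate identities via logspace uniformity, giving $O(d(n) + \log n) = O(d(n))$ total space once $d(n) = \Omega(\log n)$; your parenthetical gets at this, so it is a presentation issue rather than a gap.
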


\subsection{CAC interpolation points}

Recall the definition of the determinant interpolation polynomial $g_A(z) = \Det((1-z)I + zA)$. The proof of Theorem \ref{thm:main} will proceed by using Computational Anaytic Continuation (CAC) to approximate the value of $g_A(1) = \Det(A)$ from the low-order derivatives of $g_A(z)$ at $z = 0$.
\begin{lemma}[Interpolating segments for well-conditioned Hermitian matrices]\label{lem:CAC2}
\
\noindent
Let $\delta > 0$, let $A\in {\cal H}_{\delta}$, and let $g_A(z) = \Det( (1-z) I + zA)$.
Then there exist $t+1 = O(\log(1/\delta))$ CAC points $s_0,\hdots, s_t \in \C$ satisfying the conditions of Lemma \ref{lem:caccorrectness} with respect to $g_A$, with parameter $\theta > 0.4$.
\end{lemma}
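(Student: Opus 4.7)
The key structural input is Theorem~\ref{thm:roots}(1): for $A \in {\cal H}_\delta$, the roots of $g_A$ are all \emph{real} (since the eigenvalues $\omega_i$ of the Hermitian matrix $A$ are real and the roots of $g_A$ take the form $z_i = 1/(1-\omega_i)$), and they avoid the open disks ${\cal D}(0,1/2)$ and ${\cal D}(1, \delta/(1+\delta))$. Writing $\delta' := \delta/(1+\delta)$, this means any point $s$ in the open upper half-plane has $d(s) \geq \Im(s)$ (where $d(s)$ denotes the distance from $s$ to the nearest root of $g_A$), and any real root lies at real-axis distance at least $\delta'$ from $z=1$.

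The plan is to construct the path $s_0 = 0, s_1, \ldots, s_t = 1$ in two stages. In a \emph{transit stage} of constant length, traverse from $s_0 = 0$ through $O(1)$ points in the upper half-plane (e.g., following an arc whose imaginary part is bounded below by a fixed positive constant, except at the endpoint $s_0 = 0$ where the root-free disk ${\cal D}(0,1/2)$ provides the needed margin) to land at a point of the form $s_{k_0} = 1 + i h_0$ for some constant $h_0 \in (0,1)$. Because every point in this stage has either imaginary part $\Omega(1)$ or lies deep inside ${\cal D}(0,1/2)$, the distance to any root is $\Omega(1)$ at every step, and a constant number of constant-size steps suffices. Then use a \emph{convergence stage} of points $s_k = 1 + i h_0 q^{k-k_0}$ for $k_0 \leq k \leq t-1$ followed by $s_t = 1$, where $q \in (0,1)$ is a geometric ratio to be chosen.

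The conditions of Lemma~\ref{lem:caccorrectness} for the convergence stage reduce to two elementary bounds. Since every root $r$ is real with $|r - 1| \geq \delta'$, we have $d(s_k) \geq \sqrt{(\delta')^2 + (h_0 q^{k-k_0})^2}$. For an interior index $k_0 \leq k \leq t-2$, the step size is $|\Delta_{k+1}| = h_0 q^{k-k_0}(1-q)$, so dropping the $(\delta')^2$ term the root-avoidance condition $d(s_k) \geq e^\theta |\Delta_{k+1}|$ reduces to $(1-q) \leq e^{-\theta}$. For the final jump $k = t-1$, keeping the $(\delta')^2$ term yields $h_0 q^{t-1-k_0} \leq \delta' / \sqrt{e^{2\theta}-1}$, which can be arranged with $t - k_0 = O(\log(1/\delta))$ by the geometric decrease. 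The non-increasing condition $|\Delta_t| \leq |\Delta_{t-1}|$ at the final jump amounts to $h_0 q^{t-1-k_0} \leq h_0 q^{t-2-k_0}(1-q)$, i.e.\ $q \leq 1/2$; at the transit-to-convergence interface, one simply chooses the last transit step to have magnitude at least $h_0(1-q)$.

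The main obstacle is the competing pair of constraints on $q$: the root-avoidance condition on interior convergence steps forces $q \geq 1 - e^{-\theta}$, while the non-increasing requirement at the final jump forces $q \leq 1/2$. These are jointly satisfiable precisely when $1 - e^{-\theta} \leq 1/2$, i.e.\ $\theta \leq \ln 2 \approx 0.693$, which comfortably contains the target $\theta > 0.4$: for instance, $\theta = 0.5$ with $q = 0.45$ works. Combining the $O(1)$ transit points with the $O(\log(1/\delta))$ convergence points gives $t+1 = O(\log(1/\delta))$, as required.
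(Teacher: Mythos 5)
Your proof is correct and follows essentially the same approach as the paper: both cross from $s_0 = 0$ into the upper half-plane to reach a point of the form $1 + ih$ above the target, and then descend to $s_t = 1$ along geometrically shrinking steps whose imaginary parts (together with the disk ${\cal D}(1,\delta/(1+\delta))$) control the distance to the real roots. The paper hard-codes the geometric ratio to $1/3$, giving $\theta = \log(3/2) \approx 0.405$, whereas you leave the ratio $q$ free and derive the feasible window $1 - e^{-\theta} \le q \le 1/2$ — a mild generalization, but the core argument is the same.
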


\begin{proof}
Since $A$ is Hermitian, the roots of $g_A(z)$ for $A\in {\cal H}_\delta$ all lie on the real line. And by Theorem \ref{thm:roots}
we have that $g_A(z)$ is root-free in ${\cal D}(0,1/2) \cup {\cal D}(1,\frac{\delta}{1+\delta})$.
Consider CAC segments of 2 types:
\begin{enumerate}
\item
\textbf{Cross over:}
We cross from $0$ to $1+i/2$ above the real line using $6$ segments:
$$
s_0 = 0 
\to 
s_1 = 0.25 i 
\to 
s_2 = 0.5 i
\to
s_3 = 0.5 i + 0.25
\to
s_4 = 0.5 i + 0.5
$$
$$
\to
s_5 = 0.5 i + 0.75
\to 
s_6 = 0.5 i + 1 
$$
\item
\textbf{Decelerate:} We shuttle down from $s_6 = 1+i/2$ to $s_t = 1$ via a sequence of $O(\log 1/\delta)$ decreasing intervals. As we shuttle down, we reduce the interpolation disk radius on each step by a constant multiplicative factor. 
Let $t = \log_3(1/\delta) + O(1)$, $r_0 = 1/3$ and $b = 3$. We navigate 
$$
s_6 = 1 + i/2 \rightarrow \tilde s_7 = \tilde s_6 - i r_0 \rightarrow \tilde s_8 = \tilde s_7 - i r_0/b \rightarrow \ldots \rightarrow \tilde s_{t-1} = \tilde s_{t-2} - r_0 / b^{t-8}.
$$
More formally, for $6 \leq j \leq t-1$, we have $s_j = 1 + i/2 - \frac{i}{2}(1 - (1/3)^{j-6})$

At the end, move one more step from $\tilde s_{t-1}$ to $\tilde s_t = 1$. Note that in order to do this and still satisfy the CAC requirements we use $0 \leq \Im(s_{t-1}) \leq \delta/5$.
\end{enumerate}
We note that for each $j$ the polynomial is root-free in the disk ${\cal D}(s_j,(3/2) \cdot |s_{j+1}-s_j|)$. In particular, for $j \geq 6$ we have $|s_{j+1}-s_j| = (1/3)^{j-5}$, but the closest root to $s_i$ is on the real line, at least $(3/2)(1/3)^{j-5}$ distance away. For the segment from $s_{t-1} = 1 - ic \delta$ (for $0 \leq c \leq 1/5$) to $s_t = 1$, we use that $g_A$ is root-free in ${\cal D}(1,9\delta/10)$ Since $\log(3/2) > 0.4$, the bound on $\theta$ holds. Also, the segments are of non-increasing length and $g_A(0) = 1$, satisfying the other conditions of Lemma \ref{lem:caccorrectness}.
\end{proof}

\begin{lemma}
[Interpolating segments for well-conditioned Hurwitz stable matrices]\label{lem:CAC3}

\noindent
Let $\delta > 0$, let $A\in {\cal S}_{\delta}$, and let $g_{-A}(z) = \Det( (1-z) I - zA)$.
Then there exist $t+1 = O(\log(1/\delta))$ CAC points $s_0,\hdots, s_t \in \C$ satisfying the conditions of Lemma \ref{lem:caccorrectness} with respect to $g_{-A}$, with parameter $\theta > 0.4$.
\end{lemma}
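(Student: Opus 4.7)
The plan is to mirror the proof of Lemma~\ref{lem:CAC2} by constructing an explicit sequence of CAC points for $g_{-A}$ that satisfies the hypotheses of Lemma~\ref{lem:caccorrectness} with $\theta > 0.4$ and $t + 1 = O(\log(1/\delta))$. The difference from the Hermitian case is the shape of the root-free region: by Theorem~\ref{thm:roots}(2), $g_{-A}$ is root-free in ${\cal P}_{1/2} \cup {\cal D}(1/2, 1/2) \cup {\cal D}(1, \delta/(1+\delta))$. Crucially, the open real segment $(0,1)$ lies entirely inside ${\cal D}(1/2,1/2)$, so one can hope to traverse from $0$ to $1$ entirely along the real axis.

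I would use the path
\[
s_0 = 0, \qquad s_1 = \tfrac{1}{4}, \qquad s_k = 1 - 2^{1-k} \text{ for } 2 \leq k \leq t-1, \qquad s_t = 1,
\]
with $t = 2 + \lceil \log_2((1+\beta)(1+\delta)/\delta) \rceil = O(\log(1/\delta))$, where $\beta = e^{\theta}$ and $\theta = 1/2$. Conceptually, the first two segments form a ``crossover'' exploiting ${\cal P}_{1/2}$, the subsequent segments form a geometric ``deceleration'' inside ${\cal D}(1/2, 1/2)$ with step sizes halving, and the final segment exploits ${\cal D}(1, \delta/(1+\delta))$. The step-size sequence is $\tfrac{1}{4}, \tfrac{1}{4}, \tfrac{1}{4}, \tfrac{1}{8}, \tfrac{1}{16}, \dots, 2^{2-t}$, which is non-increasing since the final step satisfies $|\Delta_t| = 1 - s_{t-1} = 2^{2-t} = |\Delta_{t-1}|$.

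The main obstacle is the root-avoidance check, and the binding constraint comes from the deceleration phase: for $k \geq 2$, the point $s_k = 1 - 2^{1-k}$ is at distance $1-s_k = 2^{1-k}$ from the boundary of ${\cal D}(1/2,1/2)$, while the next step has size $|\Delta_{k+1}| = 2^{-k}$, giving a safe-to-step ratio of exactly $2$. This forces $\beta < 2$, i.e. $\theta < \log 2 \approx 0.693$. Taking $\theta = 1/2$ yields $\beta = e^{1/2} \approx 1.649 < 2$, comfortably satisfying $\theta > 0.4$.

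The remaining verifications I expect to be mechanical. For the first two segments a minimax computation over the argument $\phi$ of the point on the disk boundary shows that ${\cal D}(0, 1/\sqrt{2})$ sits inside ${\cal P}_{1/2} \cup {\cal D}(1/2, 1/2)$ (worst angle $\phi = \pi/4$, where $\max(1/(2\cos\phi),\cos\phi)$ is minimized), and ${\cal D}(1/4, \sqrt{5}/4)$ sits inside the same union (worst angle $\cos\phi = 1/\sqrt{5}$), giving safe-to-step ratios of $2\sqrt{2}$ and $\sqrt{5}$ respectively, both well above $\beta$. For the final segment, the triangle inequality yields that ${\cal D}(s_{t-1}, \beta \cdot 2^{2-t})$ is contained in ${\cal D}(1, \delta/(1+\delta))$ whenever $(1+\beta) \cdot 2^{2-t} \leq \delta/(1+\delta)$, which is precisely our choice of $t$. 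Together with $g_{-A}(0) = \Det(I) = 1$ and the non-increasing step sizes, all hypotheses of Lemma~\ref{lem:caccorrectness} are met.
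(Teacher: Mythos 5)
Your proposal is correct and takes essentially the same approach as the paper: both traverse from $0$ to $1$ along the real axis, using an initial crossover to $z=1/2$ followed by a geometrically decelerating sequence terminating inside ${\cal D}(1,\delta/(1+\delta))$. The only substantive difference is numerical — you contract by a factor of $2$ (safe-to-step ratio $2$, so $\theta=1/2$ works), whereas the paper contracts by a factor of $3$ (ratio $3/2$, so $\theta=\log(3/2)\approx 0.405$), both of which satisfy $\theta > 0.4$.
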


\begin{proof}
The proof is very similar to the proof of Lemma \ref{lem:CAC2} we just presented. We first move to $z = 1/2$. This is doable because ${\cal P}_{1/2} = \{x : \Re(x) < 1/2\}$ and ${\cal D}(1/2,1/2)$ are root free by Theorem \ref{thm:roots}. Then, since ${\cal D}(1/2,1/2)$ and ${\cal D} (1,\delta/(1+\delta))$ are root free, we take a sequence of decelerating segments from $z = 0$ to $z=1/2$ with lengths shrinking by a constant factor at each step. 

Here is a way of doing this. Pick $t = \log_3(1/\delta) + O(1)$:
$$
s_0 = 0 \rightarrow s_1 = 1/6 \rightarrow s_2 = 1/3 \rightarrow s_3 = 1/2 \rightarrow s_4 = 1/2 + 1/6 \rightarrow s_5 = 1/2 + 1/3\rightarrow $$ $$s_6 = 1/2 + 1/3 + 1/3^2 \rightarrow \dots \rightarrow s_{t-1} = 1/2 + 1/3 + \ldots + 1/3^{t-5} \geq 1 - \delta/5 \rightarrow s_t = 1.
$$
More formally, for $5 \leq j \leq t-1$, we have $s_j = 1/2 + \frac{1}{2} (1 - (1/3)^{j-4})$. 

We note that for each $j$ the polynomial is root-free in the disk ${\cal D}(s_j,(3/2) \cdot |s_{j+1}-s_j|)$. In particular, for $j \geq 5$ we have $|s_{j+1}-s_j| = (1/3)^{j-5}$, but the closest root to $s_i$ lies outside $D(1/2,1/2)$, at least $(3/2)(1/3)^{j-5}$ distance away. For the segment from $s_{t-1} \geq 1 - \delta/5$ to $s_t = 1$ we use that $g_A$ is root-free in ${\cal D}(1,9\delta/10)$. Since $\log(3/2) > 0.4$, the bound on $\theta$ holds. Also, the segments are of non-increasing length and $g_{-A}(0) = 1$, satisfying the other conditions of Lemma \ref{lem:caccorrectness}. 
%%For each $i$ the polynomial is root-free in the disk ${\cal D}(s_i, |s_{i+1}-s_i| \cdot e)$. This holds holds for $i \leq 3$ because $\Delta_{\min} \geq \delta/10 \geq 1/\poly\log(n)$, and $g_A(s_1) = g_A(0) = 1$. Therefore the conditions of Lemma \ref{lem:caccorrectness} are satisfied.
\end{proof}

\subsection{Proof of Theorem \ref{thm:main}}

Consider the following algorithm

\begin{mdframed}
\begin{algorithm}\label{alg:main}
\begin{enumerate}

\noindent
\item
\textbf{Input:}
$\delta > 0$, matrix $A \in {\cal H}_\delta$ or $A \in {\cal S}_\delta$.
\item
\textbf{Fixed parameters:}

\begin{enumerate}
\item $\theta = 0.4$ \hfill \% parameter in the CAC algorithm
\item $t = O(\log(1/\delta))$  \hfill \% number of CAC segments from $z = 0$ to $z = 1$
\item $k = \ceil{40\left(\log(n/\epsilon \theta)\right)\left(40t(\log t + \log \log(n/\epsilon \theta))\right)^{t}}$ \hfill \% number of derivatives CAC uses

\item
$r = k^{14}$ \hfill \% number of bits to which to round $A$
\item $M = k!$ \hfill \% size of the maximum constant used in the arithmetic circuits
%%\item
%%$k = \ceil{\log(n) / \delta}^{c_2}$ \enric{will come back here}
\end{enumerate}

\item
\textbf{Main (for Hermitian $A \in \mathcal{H}_{\delta}$):}
\begin{enumerate}
\item If $k \geq n$, return the $\NC^2$-circuit exactly computing the determinant. Otherwise perform the following steps: 
\item Construct $C_{SB}$, the Samuelson-Berkowitz circuit (Theorem \ref{thm:det1}) computing $$g_A(z) = \Det((1-z) \cdot I + z \cdot A).$$ \hfill \
% takes $O(\log n)$ space
\item For each $0 \leq i \leq k$ construct $C_i = H_i(C_{SB})$, the arithmetic circuit computing the derivative $g_A^{(i)}(0)$ (using Lemma \ref{lem:deriv}). \hfill \
% takes $O(\log n)$ space
%\item Construct INVERT$_k$ (inverting a matrix)
\item Construct the circuit $C_{CAC} = C_{CAC}(C_0, \ldots, C_k,\Delta_1,\ldots,\Delta_t)$ doing Computational Analytic Continuation from $z = 0$ to $z = 1$ as in Algorithm \ref{algorithm:cac} with steps $\Delta_1,\ldots,\Delta_t$, parameter $\theta$ and using the first $m_0 := k$ derivatives of $g_A$ at $z = 0$. (Lemma \ref{lem:shallowCAC})
\item Reduce the depth of the CAC circuit $C_{\mbox{low-depth}} = D(C_{CAC})$. (Theorem \ref{thm:depth-reduction})
\item Hard-code $\Delta_1, \hdots, \Delta_t$ to get the CAC points from Lemma \ref{lem:CAC2}.
\item Compute the Booleanization of the circuit $C_{\mbox{bool}} = B_{r,M}(C_{\mbox{low-depth}})$. (Lemma \ref{lem:Booleanization})
\item Return the Boolean circuit $C_{\mbox{out}} = \exp (C_{\mbox{bool}})$. \label{it:O}

%\item for $l \in [k]$, $g^{(l)} \leftarrow C_k(A)$ \hfill \% takes space $O(\log n) \cdot (\log \log n + \log 1/\delta)^3$
%\item for $i,j \in [k-1]$
%\begin{enumerate}
%\item If $j > i$ then $M_{i,j} \leftarrow 0$
%\item Else, $M_{i,j}\leftarrow {j-1\choose j-i} g^{(j-i)}(0)$
%\end{enumerate}
%\item $M_{g \to f}^{(k)} \leftarrow$ INVERT $(M)$
%\hfill \% this can be done using $O(\log^2 k + \log n)$ space using $B(C_{BS})$ as a subroutine

%\item $\mathbf{\tilde f} \leftarrow M \cdot \mathbf{g}$

%\item  \hfill \% Adapted to the root-free area in Theorem \ref{thm:roots}
%\label{it:T}

\end{enumerate}

\item \textbf{Main (for Hurwitz stable $A \in \mathcal{S}_{\delta}$):} 

 The algorithm is essentially the same if $A \in {\cal S}_{\delta}$, but we use $g_{-A}(z)$ instead of $g_A(z)$, the interpolating segments are given by Lemma \ref{lem:CAC3} instead of Lemma \ref{lem:CAC2}, and we return $(-1)^{n}\cdot\exp(C_{\mbox{bool}})$ instead of $\exp(C_{\mbox{bool}})$, because $g_{-A}(1) = \Det(-A) = (-1)^{n} \Det(A)$.
\end{enumerate}
\vspace{5mm}
\end{algorithm}
\end{mdframed}

%%\begin{claim}\label{cl:ar}
%%
%%\textbf{Algorithm \ref{alg:main} approximates the determinant of $A_r$}
%%
%%\noindent
%%The value at step \ref{it:T} satisfies:
%%$$
%%\left | T - \log \Det(A_r) \right | \leq \delta
%%$$
%%\end{claim}
%%\begin{proof}
%%By definition of the algorithm we have:
%%$$
%%\forall \ell \in [k] \quad \tilde f^{(k)}(z) = f_r^{(k)}(z)
%%$$
%%By Lemma \ref{lem:CAC2} the $t$ CAC segments $(s_1,\hdots, s_t)$ satisfy the conditions of
%%Lemma \ref{lem:CAC} with
%%parameter $\beta \geq e$.
%%By Lemma \ref{lem:CAC} it follows that for our choice of $k = O( \log(n/\delta))$
%%we have
%%$$
%%\left | T - f_r^{(0)}(1) \right | \leq \delta
%%$$
%%Hence by definition of $f_r(z)$ we have
%%$$
%%\left | T - \log(\Det(A_r)) \right | \leq \delta
%%$$
%%\end{proof}
%%
In order to prove correctness of Algorithm \ref{alg:main}, we first prove the following technical lemma:
\begin{lemma}[$C_{\mbox{low-depth}}$ has low depth] \label{lem:lowdepthmeanslowdepth}
If $k < n$, then $C_{\mbox{low-depth}}$ has size $\poly(n)$, degree $O(k^3)$, and depth $O(\log k)$. Each multiplication gate has fan-in 2 and each addition gate has fan-in at most $\poly(n)$. 
\end{lemma}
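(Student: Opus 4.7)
The plan is to deduce the claim as a direct application of the depth-reduction theorem (Theorem \ref{thm:depth-reduction}) to the CAC circuit $C_{CAC}$. To invoke that theorem in a useful way, I need to establish two bounds on $C_{CAC}$ before any depth-reduction is applied: (i) its size is $\poly(n)$, and (ii) the polynomial it computes in the entries of $A$ has total degree $O(k^3)$.

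For the size bound, I would argue as follows. The Samuelson-Berkowitz circuit $C_{SB}$ has size $\poly(n)$ by Theorem \ref{thm:det1}. For each $0 \leq i \leq k$, applying Lemma \ref{lem:deriv} to $C_{SB}$ yields $C_i = H_i(C_{SB})$ of size $O(k^2 |C_{SB}|) = \poly(n)$, using $k < n$. The outer CAC circuit from Lemma \ref{lem:shallowCAC} has size $\poly(m_0) = \poly(k)$ in its inputs $g_A^{(0)}(0), \dots, g_A^{(k)}(0), \Delta_1, \dots, \Delta_t$. Composing the outer CAC circuit with the circuits $C_0, \dots, C_k$, substituting in hardcoded values for $\Delta_1, \dots, \Delta_t$ (obtained from Lemma \ref{lem:CAC2}), and recalling $t = O(\log(1/\delta)) = O(\log n)$, the total size is at most $\poly(k) + (k+1) \cdot \poly(n) = \poly(n)$.

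For the degree bound, I would combine two facts. By Lemma \ref{lem:shallowCAC}, the CAC circuit computes a polynomial of degree $O(m_0^2) = O(k^2)$ in the variables $g_A^{(0)}(0), \dots, g_A^{(k)}(0), \Delta_1, \dots, \Delta_t$. By Lemma \ref{lem:der-degree}, each derivative $g_A^{(i)}(0)$ is a polynomial of degree $i \leq k$ in the entries of $A$. Since the $\Delta_j$ are hardcoded to constants before computing degrees in $A$, composition gives total degree at most $O(k^2) \cdot k = O(k^3)$ in the entries of $A$.

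With these bounds in hand, I would apply Theorem \ref{thm:depth-reduction} with $s = \poly(n)$ and $d = O(k^3)$ to produce $C_{\text{low-depth}} = D(C_{CAC})$ of size $\poly(sd) = \poly(n)$ and depth $O(\log d) = O(\log k)$, where every multiplication gate has fan-in $2$ and every addition gate has fan-in $\poly(sd) = \poly(n)$, as claimed. The only mildly subtle point—and the one I would be most careful to justify—is the bookkeeping in the degree composition: namely, that hardcoding the constants $\Delta_1,\dots,\Delta_t$ does not inflate the degree in the variables of $A$, and that the degree of each $C_i$ is taken in the entries of $A$ rather than in the formal variable $z$ (this is exactly the content of Lemma \ref{lem:der-degree}).
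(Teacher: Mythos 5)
Your proposal is correct and follows essentially the same route as the paper: bound the size of $C_{SB}$ and the $C_i$'s via Theorem~\ref{thm:det1} and Lemma~\ref{lem:deriv}, bound the degree of each $C_i$ in the entries of $A$ via Lemma~\ref{lem:der-degree}, bound the size and degree of $C_{CAC}$ via Lemma~\ref{lem:shallowCAC}, and then apply the depth-reduction theorem (Theorem~\ref{thm:depth-reduction}). The only cosmetic difference is that you hard-code $\Delta_1,\ldots,\Delta_t$ before depth-reduction while the paper hard-codes them after, but since the paper already bounds the degree of $C_{CAC}$ jointly in the entries of $A$ and in $\Delta_1,\ldots,\Delta_t$, the two orderings yield the same $O(k^3)$ degree bound and the same conclusion.
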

\begin{proof}
The Samuelson-Berkowitz circuit $C_{SB}$ constructed using Theorem \ref{thm:det1} is an arithmetic circuit of size $\poly(n)$.
By Lemma \ref{lem:deriv}, for all $0 \leq i \leq k$ the circuit $C_i$ is of size $\poly(n)$. Since $C_i$ computes the derivative of order $i\leq n$ w.r.t. the variable $z$ of $g_A(z)$ at $z = 0$,
$C_i$ has degree $O(k)$ in the entries of $A$, by Lemma \ref{lem:der-degree}.

Therefore, by Lemma \ref{lem:shallowCAC}, $C_{CAC}$ is of size $\poly(n)$ and has degree $O(k^3)$ in the entries of $A$ and in $\Delta_1,\ldots,\Delta_t$. It follows by Theorem \ref{thm:depth-reduction} (depth-reduction) that $C_{\mbox{low-depth}}$ is of size $\poly(nk) = \poly(n)$ and of depth $O(\log k)$, and that each multiplication gate has fan-in 2 and each addition gate has fan-in $\poly(n)$.
\end{proof}

\begin{lemma}[The circuit outputted by Algorithm \ref{alg:main} approximates the determinant of $A$] \label{lem:outputalgmaincorrect}

\noindent
Algorithm \ref{alg:main} computes a circuit $C_{\mbox{out}}$ that satisfies:
$$
C_{\mbox{out}}(A) = \Det(A) \cdot (1 + {\cal E}) , \ \ |{\cal E}| \leq \epsilon
$$
\end{lemma}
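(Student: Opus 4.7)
The plan is to show that the intermediate Boolean circuit $C_{\mbox{bool}}$ produces an additive $\epsilon/2$ approximation to $f(1) := \log g_A(1) = \log \Det(A)$ in the Hermitian case (resp.\ to $\log g_{-A}(1) = \log((-1)^n \Det(A))$ in the Hurwitz case), so that after exponentiation (and multiplication by $(-1)^n$ in the Hurwitz case), $C_{\mbox{out}}(A)$ becomes a multiplicative $(1+\epsilon)$ approximation of $\Det(A)$. The total error $|C_{\mbox{bool}}(A) - f(1)|$ splits naturally into two contributions: the analytic truncation error incurred by the idealized arithmetic CAC circuit $C_{\mbox{low-depth}}$, and the arithmetic rounding error introduced by the Booleanization $B_{r,M}(\cdot)$. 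The trivial case $k \geq n$ is handled by the exact $\NC^2$ determinant subroutine, so for the rest I assume $k < n$ and Lemma~\ref{lem:lowdepthmeanslowdepth} applies.

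For the analytic error, I appeal to Lemma~\ref{lem:caccorrectness} applied with the interpolation schedule of Lemma~\ref{lem:CAC2} (Hermitian) or Lemma~\ref{lem:CAC3} (Hurwitz). Both lemmas furnish $t+1 = O(\log(1/\delta))$ CAC points $s_0,\hdots,s_t$ satisfying the hypotheses of Lemma~\ref{lem:caccorrectness} with $\theta > 0.4$ and with respect to $g_A$ or $g_{-A}$. The algorithm's choice
\[
k = \Bigl\lceil 40\bigl(\log(n/\epsilon'\theta)\bigr)\bigl(40t(\log t + \log\log(n/\epsilon'\theta))\bigr)^{t}\Bigr\rceil
\]
with $\epsilon' := \epsilon/4$ comfortably exceeds the $m_0$ threshold demanded by Lemma~\ref{lem:caccorrectness} for target additive accuracy $\epsilon/4$, so $|C_{\mbox{low-depth}}(A) - f(1)| \leq \epsilon/4$.

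For the rounding error, Lemma~\ref{lem:lowdepthmeanslowdepth} tells us that $C_{\mbox{low-depth}}$ has size $\poly(n)$, depth $h = O(\log k)$, degree $d = O(k^3)$, fan-in $2$ multiplication gates, and addition gates of fan-in $m \leq \poly(n)$. The variable inputs are the entries of $A$, which satisfy $|A_{ij}| \leq \|A\|_2 \leq 1$. The constants appearing inside the circuit — the step sizes $\Delta_i$ from Lemmas~\ref{lem:CAC2}/\ref{lem:CAC3}, the factorials $1/i!$ and $i!$ arising in the Bell-polynomial recurrence of Lemma~\ref{lem:brunodifaasubstitute}, and the integer constants from the Samuelson–Berkowitz subcircuit — are all bounded in magnitude by $M = k!$. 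By the remark after Lemma~\ref{lem:Booleanization}, we may treat each of these constants as a hard-coded additional input variable (increasing $N$ by at most $\poly(n,k)$ and $d$ by at most a constant factor). Substituting $N \leq \poly(n)$, $h = O(\log k)$, $d = O(k^3)$, $m \leq \poly(n)$, and $M = k!$ into the precision requirement of Lemma~\ref{lem:Booleanization} with target accuracy $\epsilon/4$ gives a precision demand
\[
r \;=\; O\bigl((\log k)\cdot k^{6}\cdot (\log n)\cdot (k\log k + \log n + \log(1/\epsilon))\bigr),
\]
which lies well below the budget $r = k^{14}$ chosen by the algorithm. Hence Lemma~\ref{lem:Booleanization} yields $|C_{\mbox{bool}}(A) - C_{\mbox{low-depth}}(A)| \leq \epsilon/4$.

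Combining the two bounds by the triangle inequality gives $|C_{\mbox{bool}}(A) - f(1)| \leq \epsilon/2$, and using $|e^{x} - 1| \leq 2|x|$ for $|x| \leq 1/2$ yields
\[
\exp(C_{\mbox{bool}}(A)) \;=\; e^{f(1)}\cdot(1 + \mathcal{E}),\qquad |\mathcal{E}| \leq \epsilon.
\]
In the Hermitian case $e^{f(1)} = \Det(A)$ and $C_{\mbox{out}} = \exp(C_{\mbox{bool}})$, while in the Hurwitz case $e^{f(1)} = \Det(-A) = (-1)^{n}\Det(A)$ and $C_{\mbox{out}} = (-1)^{n}\exp(C_{\mbox{bool}})$ corrects the sign, so in both cases $C_{\mbox{out}}(A) = \Det(A)(1+\mathcal{E})$. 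The main obstacle in executing this plan is purely bookkeeping: faithfully propagating the structural parameters $(N,h,d,m,M)$ through the composition of Lemmas~\ref{lem:deriv},~\ref{lem:shallowCAC}, Theorem~\ref{thm:depth-reduction}, and Lemma~\ref{lem:lowdepthmeanslowdepth}, and checking that they fit inside the allocated precision $r = k^{14}$ without inflating the final depth beyond $\tilde{O}((\log n)(\log(1/\delta) + \log\log(1/\epsilon)))$.
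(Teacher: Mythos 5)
Your proposal is correct and follows essentially the same approach as the paper: handle the trivial case $k \geq n$ separately, bound the CAC truncation error by $\epsilon/4$ via Lemma~\ref{lem:caccorrectness} with the interpolation schedules of Lemmas~\ref{lem:CAC2}/\ref{lem:CAC3}, bound the Booleanization rounding error by $\epsilon/4$ via Lemma~\ref{lem:lowdepthmeanslowdepth} and Lemma~\ref{lem:Booleanization} with the observation that $r = k^{14}$ comfortably suffices and $M = k!$ bounds the circuit constants, combine by the triangle inequality, and exponentiate. Your extra attention to treating circuit constants as hard-coded input variables and to the sign correction in the Hurwitz case is a welcome elaboration of points the paper handles implicitly or briefly.
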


\begin{proof}
If $k \geq n$, then the algorithm computes the determinant exactly. Otherwise, by the error bound for CAC in Lemma \ref{lem:caccorrectness}, $C_{CAC}$ outputs an $\epsilon/4$ additive approximation to $\log(\Det(A))$ when the CAC segments from Lemma \ref{lem:CAC2} (respectively, Lemma \ref{lem:CAC3}) are hard-coded. Applying depth reduction (Theorem \ref{thm:depth-reduction}) does not change the output of $C_{CAC}$, and therefore $C_{\mbox{low-depth}}$ also computes an $\epsilon/4$ additive approximation.

We note that the constants used in the arithmetic circuit all have magnitude at most $k! = M$ (the largest constants are in the calculations of the derivatives by Lemma \ref{lem:deriv}), and the input variables have magnitude $\leq 1$. And by Lemma \ref{lem:lowdepthmeanslowdepth}, $C_{\mbox{low-depth}}$ is of size $\poly(n)$, degree $O(k^3)$, depth $O(\log k)$, has multiplication gates with fan-in 2, and addition gates with fan-in at most $n$. These are the preconditions to apply the Booleanization procedure (Lemma \ref{lem:Booleanization}). Since $r = k^{14} \geq k^{10} \cdot (\log n/\epsilon)^2 \log (k!) = \omega(k^9 \log(k) \log(n) + 1) \log(nk/\eps)(\log M))$, by the error bound in Lemma \ref{lem:Booleanization} we may conclude that the Booleanization procedure yields a Boolean circuit $C_{\mbox{bool}}$ that approximates $C_{CAC}$ up to additive $\epsilon/4$ error when the CAC points $s_1,\ldots,s_t$ are hard-coded. Hence overall $C_{\mbox{out}}$ is a $\exp(\epsilon/2)$ relative-error approximation of $\Det(A)$.
\end{proof}

\begin{lemma}\label{lem:bool2}
Algorithm \ref{alg:main} computes a $\poly(n)$-size Boolean circuit of depth $\tilde{O}(\log(n) \cdot (\log(1/\delta) + \log \log(1/\epsilon)))$.
\end{lemma}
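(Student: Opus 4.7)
The plan is to trace size and depth through each stage of Algorithm~\ref{alg:main}---the construction of the arithmetic circuit $C_{\mbox{low-depth}}$, its Booleanization $C_{\mbox{bool}}$, and the final exponentiation---and to separately handle the boundary case $k \geq n$. As a preliminary, unpacking the definition of $k$ in Algorithm~\ref{alg:main} with $L = \log(n/(\epsilon\theta))$ and $t = O(\log(1/\delta))$ yields
\[
\log k = O\bigl(\log L + t(\log t + \log\log L)\bigr) = \tilde O(\log(1/\delta) + \log\log(1/\epsilon)),
\]
where $\tilde O$ absorbs $\log\log n$ factors. This is the quantity that will control the target depth bound.

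If $k \geq n$, Algorithm~\ref{alg:main} returns the $\NC^2$ exact-determinant circuit of size $\poly(n)$ and depth $O(\log^2 n)$. The inequality $k \geq n$ forces $t \log t = \tilde\Omega(\log n / \log\log n)$, i.e.\ $\log(1/\delta) = \tilde\Omega(\log n/\log\log n)$, so $\log n \cdot \log(1/\delta) = \tilde\Omega(\log^2 n)$; hence the returned circuit is within the target depth.

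In the main case $k < n$, Lemma~\ref{lem:lowdepthmeanslowdepth} gives that $C_{\mbox{low-depth}}$ is an arithmetic circuit of size $\poly(n)$, arithmetic depth $h = O(\log k)$, degree $d = O(k^3)$, with fan-in-2 multiplication gates and fan-in $m = \poly(n)$ addition gates. With $r = k^{14}$, $M = k!$, $N = \poly(n)$, and $\epsilon \geq 1/\poly(n)$, the precondition $r > (2hd^2 \lceil \log m\rceil + 1)\log_2(4NMd/\epsilon)$ of Lemma~\ref{lem:Booleanization} reduces to $r = \Omega(k^7 \log k \log n (k \log k + \log n))$ and is comfortably satisfied. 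Invoking Lemma~\ref{lem:Booleanization} produces a logspace-uniform Boolean circuit $C_{\mbox{bool}}$ of size $\poly(|C_{\mbox{low-depth}}| \cdot d h r (\log m)(\log M)) = \poly(n)$ (using $k < n$ and $\log M = O(k\log k) = O(n \log n)$) and depth $O(h \log(dhrmM))$. A careful bit-width accounting---using that intermediate values in the Booleanization are $\poly(n)$-bit numbers---gives this depth as $\tilde O(\log k \cdot \log n) = \tilde O(\log n \cdot (\log(1/\delta) + \log\log(1/\epsilon)))$.

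Finally, $C_{\mbox{out}} = \exp(C_{\mbox{bool}})$ composes $C_{\mbox{bool}}$ with a Boolean circuit evaluating the complex exponential on an input of magnitude at most $|\log \Det(A)| = O(n \log(1/\delta))$ to multiplicative precision $1 + \epsilon/2$. Writing $z = a + bi$ and $a/\ln 2 = q + f$ with $f \in [0,1)$, we have $e^z = 2^q \cdot 2^f \cdot (\cos b + i\sin b)$, and each factor admits a short truncated Taylor series ($O(\log(1/\epsilon))$ terms for $2^f, \cos b, \sin b$ after reducing $b$ modulo $2\pi$) combined with iterated squaring, implementable in $\poly(n)$ size and $\tilde O(\log n)$ depth. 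The main obstacle I anticipate is the bit-width analysis inside Lemma~\ref{lem:Booleanization}: the large constant bound $M = k!$ could naively blow up the depth formula, but with $k < n$ the intermediate bit-widths remain $\poly(n)$, so $\log M$ contributes only through a $\log n$ factor in the Boolean depth per arithmetic level, not linearly.
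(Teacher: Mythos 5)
Your proposal follows the same route as the paper: case-split on $k \geq n$, invoke Lemma~\ref{lem:lowdepthmeanslowdepth} and Lemma~\ref{lem:Booleanization} to bound the size and depth of $C_{\mbox{bool}}$, then argue the final exponentiation is negligible overhead. Two minor points of divergence are worth noting, neither of which affects correctness: (i) for the final exponentiation, the paper cites the Beame--Cook--Hoover and Chiu--Davida--Litow logspace-uniform circuits \cite{beame1986log,chiu2001division} to get depth $O(\log(1/\delta) + \log\log(n/\epsilon))$, while you sketch a hand-rolled implementation; your invocation of ``iterated squaring'' for the $2^q$ factor with $q$ up to $\Theta(n\log(1/\delta))$ is not quite the right primitive (that would naively cost $O(\log^2 n)$ depth) --- the intended step is a barrel shift / floating-point exponent assignment, which is what the cited circuits do and which recovers $\tilde O(\log n)$ --- but since the exponentiation is negligible either way, this is cosmetic; and (ii) you explicitly flag that the constant bound $M = k!$ makes the depth formula $O(h\log(dhrmM))$ look alarming as written, and correctly resolve it by observing that what actually enters the per-level Boolean depth is $\log$ of the intermediate bit-width, i.e.\ $\log\log M$ rather than $\log M$; the paper's own computation in the proof of Lemma~\ref{lem:bool2} silently drops $\log M$ from inside the $\log$, which is justified by the proof of Lemma~\ref{lem:depthcomplexitybound} but is slightly loosely stated in the lemma, so your explicit remark is a useful clarification rather than a gap.
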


\begin{proof}
If $k \geq n$, then the algorithm returns a size-$\poly(n)$, depth-$O((\log n)^2)$ circuit. In this case, $\log(k) \geq \log(n)$, so $t \log(t) + \log\log(1/\epsilon) = \Omega(\log(n))$, and since $t = \Theta(\log(1/\delta))$, the claim holds in this case.

In the case $k < n$, we also have $\delta > 1/n$. By Lemma \ref{lem:lowdepthmeanslowdepth} and Lemma \ref{lem:Booleanization}, we have that $C_{\mbox{bool}}$ is a circuit of size $\poly(nrk(\log k)(\log k!)) = \poly(n)$ and depth $O((\log k) \cdot \log(k(\log k)rn)) = O((\log n) \cdot (\log k)) = O((\log n) \cdot (t\log(t) + \log \log(n) + \log\log(\epsilon))) = \tilde{O}((\log n) \cdot (\log(1/\delta) + \log\log(1/\epsilon)))$. The final exponentiation operation is applied to a $\poly(\log(n/\epsilon)/\delta)$-bit number, and by the results of \cite{beame1986log,chiu2001division} it can be implemented by a logspace-uniform $\poly(n)$-size circuit of depth $O(\log(1/\delta) + \log\log(n/\epsilon))$ depth, which is negligible overhead.
\end{proof}

\begin{lemma} \label{lem:algmainlogspaceuniformity}
The circuit $C_{\mbox{out}}$ can be computed by Algorithm \ref{alg:main} in space $O(\log(n))$.
\end{lemma}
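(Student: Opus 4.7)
The plan is to verify that each individual stage of Algorithm \ref{alg:main} is a logspace-computable circuit transformation, and then to appeal to the standard fact that logspace reductions compose in order to conclude that producing $C_{\mbox{out}}$ from the input parameters $(n,\delta,\epsilon)$ is itself in logspace.

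Concretely, I will walk through the seven stages of the algorithm and cite, for each, the source of the logspace bound:
(i) the Samuelson--Berkowitz circuit $C_{SB}$ from Theorem \ref{thm:det1} is a textbook logspace-uniform family;
(ii) the derivative circuits $C_i = H_i(C_{SB})$ are logspace-computable from $C_{SB}$ by the explicit construction in Lemma \ref{lem:deriv};
(iii) the CAC circuit $C_{CAC}$ is assembled from the $C_i$ in logspace by Lemma \ref{lem:shallowCAC};
(iv) depth-reduction $D(\cdot)$ is logspace-computable by Theorem \ref{thm:depth-reduction};
(v) hard-coding the CAC points requires computing the complex rationals $s_j$ of Lemma \ref{lem:CAC2} (respectively Lemma \ref{lem:CAC3}) to $r=k^{14}=\poly(n)$ bits of precision, and since each $s_j$ is of the simple form $a + b\cdot(1/3)^j$ with $a,b\in\{0,\tfrac12,1\}$, its bits can be produced one at a time in $O(\log n)$ space using that integer iterated multiplication is in $\TC^0\subseteq\Lspace$;
(vi) Booleanization $B_{r,M}(\cdot)$ is logspace-uniform by Lemma \ref{lem:Booleanization};
(vii) the final $\exp(\cdot)$ subcircuit is logspace-uniform by \cite{beame1986log,chiu2001division}.

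The one subtlety worth spelling out is the composition itself. The intermediate circuits all have polynomial size in $n$, so we cannot afford to materialize any one of them on the work tape. The standard remedy is on-the-fly re-computation: to answer a query ``what is the label of gate $v$ of $C_{\mbox{out}}$, and what are its in-neighbours?'' the construction procedure for stage (vii) issues queries to the construction procedure for stage (vi), which issues queries to (v), and so on down to (i). Each such query is a string of length $O(\log n)$ (a gate index in a $\poly(n)$-size circuit), and each stage uses only $O(\log n)$ additional workspace on top of its own oracle calls. Because the pipeline has a constant number of stages (seven), the recursion depth is $O(1)$ and the total workspace is $O(\log n)$.

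The main obstacle, such as it is, is purely bookkeeping: the inductive arithmetic-circuit constructions of Lemmas \ref{lem:deriv} and \ref{lem:shallowCAC} build gate gadgets whose identities depend on the structure of the circuit being transformed, so one must give each gate in the output a canonical logspace-addressable name (for example, an encoding of its path through the pipeline of transformations together with the ``role'' it plays in the local gadget). Once this naming convention is fixed, each stage of the composition becomes a transducer from $O(\log n)$-bit gate names to $O(\log n)$-bit gate names, and the logspace composition argument goes through without further difficulty.
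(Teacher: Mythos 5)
Your proof is correct and follows essentially the same approach as the paper's: the paper's proof is a one-liner observing that each stage (Samuelson--Berkowitz, derivative extraction, CAC, Booleanization, exponentiation) is logspace-computable. What you add beyond the paper's terse statement is the explicit acknowledgment that logspace reductions compose via on-the-fly recomputation rather than materializing intermediate $\poly(n)$-size circuits, and the explicit treatment of step (v), the hard-coding of CAC interpolation points, which the paper's proof omits. This is useful elaboration rather than a different argument. (One small note: the paper's proof cites Lemma \ref{lem:caccorrectness} for the CAC step, but the logspace-uniformity claim is actually established in Lemma \ref{lem:shallowCAC}, which is the lemma you correctly cite.)
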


\begin{proof}
This follows from the fact that all of the operations involved can be done in logspace: computing $C_{BS}$ (Theorem \ref{thm:det1}), taking derivatives (Lemma \ref{lem:deriv}), CAC interpolation (Lemma \ref{lem:caccorrectness}), Booleanization (Lemma \ref{lem:Booleanization}), and, by \cite{beame1986log,chiu2001division}, taking the exponential.
\end{proof}

Theorem \ref{thm:main} follows from Lemmas \ref{lem:outputalgmaincorrect}, \ref{lem:bool2}, and \ref{lem:algmainlogspaceuniformity}.

\bibliographystyle{hyperabbrv}
\bibliography{refs}

\appendix

\section{Reduction from approximating $|\Det(A)|$ to approximate linear system solving} \label{app:psddetreductiontosolving}
In this section, we prove a near-$\NC^1$ reduction for approximating $|\Det(A)|$ based on approximately solving linear systems. Some implications of this reduction were mentioned in the introduction, but we go into more detail here. In contrast to the main result of this paper, this reduction does not recover the sign of $\Det(A)$.

The reduction from approximating $|\Det(A)|$ to approximately solving linear systems is based on the following proposition:
\begin{proposition}\label{prop:absolutevaluereduction}
Let $0 \prec A \preceq I$ be a positive definite $n \times n$ matrix. Let $A^{(1)},\ldots,A^{(n)}$ denote the principal submatrices of $A$: i.e., $A^{(i)}$ is the $i \times i$ submatrix consisting of the first $i$ rows and columns of $A$. Finally let $v_1,\ldots,v_n$ be approximations to $e_i^T(A^{(i)})^{-1}e_i$ such that for each $i$, $$|v_i - e_i^T (A^{(i+1)})^{-1} e_i| \leq \eps/(2n).$$

Then, for small enough $\eps > 0$, $\prod_{i=1}^n v_i$ is a $(1+\eps)$-multiplicative approximation to $\Det(A)$.
\end{proposition}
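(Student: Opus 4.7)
The plan is to derive an exact telescoping identity for $\Det(A)$, express each factor in terms of a diagonal entry of an inverse principal submatrix (which is what the $v_i$'s approximate), and finally propagate the $n$ additive approximation errors into a single multiplicative error of size $1+\eps$.

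First I would start from
\[
\Det(A) \;=\; \prod_{i=1}^n \frac{\Det(A^{(i)})}{\Det(A^{(i-1)})},
\]
with the convention $\Det(A^{(0)}) = 1$. Applying the cofactor expansion of $(A^{(i)})^{-1}$ at the bottom-right entry (equivalently, the Schur complement formula relating $\Det(A^{(i)})$ to $\Det(A^{(i-1)})$), Cramer's rule gives the identity
\[
e_i^T (A^{(i)})^{-1} e_i \;=\; \frac{\Det(A^{(i-1)})}{\Det(A^{(i)})}.
\]
Thus each telescoping factor is exactly $1/(e_i^T (A^{(i)})^{-1} e_i)$, and the proposition reduces to bounding the error in a product of $n$ such reciprocal diagonal entries (matched appropriately to the $v_i$'s).

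Next I would use the hypothesis $0 \prec A \preceq I$ to obtain a normalization-type lower bound on the exact quantities being approximated. By Cauchy interlacing (or a direct quadratic-form argument), every principal submatrix satisfies $0 \prec A^{(i)} \preceq I$, so $(A^{(i)})^{-1} \succeq I$ and therefore $e_i^T (A^{(i)})^{-1} e_i \geq 1$. This is the crucial step: combined with an additive error of $\eps/(2n)$, it yields a \emph{relative} error of at most $\eps/(2n)$ for each factor.

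Finally I would propagate the relative errors through the product. Using $(1 + \eps/(2n))^n \leq e^{\eps/2} \leq 1 + \eps$ for sufficiently small $\eps$, the product of $n$ approximations each with relative error at most $\eps/(2n)$ is a $(1+\eps)$-multiplicative approximation to the product of the exact values, which equals $\Det(A)$.

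\textbf{Main obstacle.} There is no real conceptual difficulty: once the Cramer telescoping identity is in hand, the argument is essentially bookkeeping. The only delicate point is that the conversion from additive to multiplicative error requires a uniform lower bound on each exact factor, and this is exactly where the PSD normalization $A \preceq I$ is used. If instead one only knew $A \succ 0$ without an upper bound on the spectrum, the diagonal entries of $(A^{(i)})^{-1}$ could be arbitrarily small and the $n$-fold product of additive errors would not telescope into a clean multiplicative error.
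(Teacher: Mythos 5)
Your proof is correct and follows essentially the same route as the paper's: the Cramer's-rule telescoping identity $\Det(A) = \prod_{i=1}^n 1/(e_i^T (A^{(i)})^{-1} e_i)$, the Cauchy-interlacing/Courant-Fischer argument that $A \preceq I$ forces $e_i^T (A^{(i)})^{-1} e_i \geq 1$ so the $\eps/(2n)$ additive errors become $\eps/(2n)$ relative errors, and the final $(1 \pm \eps/(2n))^n$ bookkeeping. Your write-up is in fact slightly more explicit than the paper about the last propagation step, and you are right to note that the reciprocals must be "matched appropriately to the $v_i$'s" — the proposition as stated has an off-by-one index and a missing reciprocal that both your argument and the paper's implicitly correct.
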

\begin{proof}
By Cauchy's interlacing theorem, $A^{(1)},\ldots,A^{(n)}$ have eigenvalues between $\lambda_n(A) > 0$ and $\lambda_1(A) \leq 1$. In particular, $A^{(1)},\ldots,A^{(n)}$ are non-singular, so we can write the telescoping product \begin{align*}\Det(A) &= \Det(A^{(n)}) \\ &= A_{1,1} \cdot \frac{\Det(A^{(n)})}{\Det(A^{(1)})} \\ &= A_{1,1} \cdot \prod_{i=1}^{n-1} \frac{\Det(A^{(i+1)})}{\Det(A^{(i)})} \\ &= \prod_{i=1}^{n} \frac{1}{e_i^T (A^{(i)})^{-1} e_i},\end{align*} where the last equality is by Cramer's rule.
For each $i$ the eigenvalues $A^{(i)}$ lie in of $(A^{(i)})^{-1}$ lie in $[1,\infty)$, by the Courant-Fischer min-max principle $e_i^T (A^{(i)})^{-1} e_i \geq 1$, so $v_i$ is a $(1-\eps/2n)$-multiplicative approximation of $e_i^T (A^{(i)})^{-1} e_i$. This concludes the proof.
\end{proof}

Suppose we are given a positive definite matrix $A$ and an algorithm that approximately solves systems of equations when the coefficient matrices are $A^{(1)},\ldots,A^{(n)}$, the principal submatrices of $A$. Then by Proposition \ref{prop:absolutevaluereduction} we can approximate $e_i^T (A^{(i)})^{-1} e_i$ in parallel for all $i \in [n]$, and multiply them together to approximate $|\Det(A)|$ with only near-$\NC^1$ overall overhead.

\subsection{Example applications of Proposition \ref{prop:absolutevaluereduction}}
We now review certain structured classes of well-conditioned matrices for which one can solve these systems of equations in low complexity:

\paragraph{Symmetric Diagonally Dominant (SDD)}

\cite{murtagh2017derandomization} gives a nearly-logspace solver for Symmetric Diagonally Dominant (SDD) matrices, a subclass of PSD matrices. If $A$ is SDD, then so are $A^{(1)},\ldots,A^{(n)}$. Therefore, if $A$ is a $\poly(n)$-conditioned symmetric diagonally dominant (SDD) matrix, Proposition \ref{prop:absolutevaluereduction} implies that $|\Det(A)|$ can be approximated in nearly-logspace. Since $A$ is PSD, in fact $|\Det(A)| = \Det(A)$ in this case.

\paragraph{Well-conditioned}
$A$ is a $\kappa$-conditioned matrix, then $B = A^{\dag} A$ is PSD and $\kappa^2$-conditioned. Moreover, by Cauchy's interlacing theorem, $B^{(1)},\ldots,B^{(n)}$ are also PSD and $\kappa^2$-conditioned. It suffices to show how to efficiently solve systems of linear equations with $\kappa^2$-conditioned PSD coefficient matrices.

In general, given a $\kappa^2$-conditioned PSD matrix $B$, then systems of equations $Bx = b$ can be approximately solved using gradient descent by outputting the approximation $\tilde{x} = \sum_{i=0}^{k-1} \alpha(1-\alpha A)^i b$, where $\alpha = 1/\kappa^2$ \cite{conjugategradientswithoutagonizingpain}. By repeated squaring, $\tilde{x}$ can be computed with a circuit of depth $\tilde{O}((\log n) \cdot (\log \kappa))$. So, using Proposition \ref{prop:absolutevaluereduction}, $\Det(B) = |\Det(A)|^2$ can be approximated in $\tilde{O}((\log n) \cdot (\log \kappa))$ depth, and therefore so can $|\Det(A)|$.

\paragraph{Hermitian stochastic}
If $A$ is a $\poly(n)$-conditioned Hermitian stochastic matrix, then $B = A^{\dag} A$ is PSD and stochastic. Moreover, we have $\|B^{(i)}\|_{\infty} \leq 1$ for all $i \in [n]$. (Here $\|B\|_{\infty} = \max_i \sum_{j=1}^n |B_{ij}|$.) For such PSD matrices $B$ with $\|B\|_{\infty} \leq 1$ the powers $B^k$ for $k = \poly(n)$ can be approximated in $\BPL$ \cite{doron2017approximating}. Therefore the gradient descent algorithm for solving $Bx = b$ can be run in $\BPL$. Hence by Proposition \ref{prop:absolutevaluereduction}, $\Det(B) = |\Det(A)|^2$ can be approximated in $\BPL$, and therefore so can $|\Det(A)|$.

\subsection{Quantum algorithm}
We also mention a quantum algorithm that outperforms the above classical algorithms for the case of Hermitian well-conditioned matrices:
\paragraph{Hermitian well-conditioned (quantum algorithm)}
In \cite{Ta13}, an algorithm for approximating the spectrum of $\kappa$-conditioned Hermitian matrices is given that runs in quantum space $O(\log(n) + \log(\kappa))$. In this case, the guarantee of the approximation is strong enough that simply multiplying the approximate eigenvalues gives a $(1+1/\poly(n))$ approximation to $\Det(A)$ in quantum space $O(\log(n) + \log(\kappa))$.

%%\begin{lemma}[Low depth implies low space]
%%Let $C$ be a logspace-uniform Boolean circuit of fan-in 2, depth $h$ and size $s$.
%%Then there exists an algorithm $A$ that runs in $O(h + \log s)$ space and computes the output of $C$.
%%\label{lem:depth-to-space}
%%\end{lemma}
%%\begin{proof} [Proof sketch]
%%We briefly sketch the proof. More details can be found in the standard literature. The circuit can be viewed as a directed acyclic graph. The algorithm first finds a root of the circuit and then recursively evaluates the value stored at the output of each children. This is possible in $O(h + \log s)$ space because each path on the graph can be uniquely specified using a sequence of $h$ direction pointers (\ie left or right). We furthermore need $O(\log s)$ space to store the label of each node on the graph.
%%\end{proof}

\section{Technical Estimates}

%We now require the following inequality:
%$$
%p \cdot (p+1) \cdot \hdots \cdot (c p)
%\leq
%p^{(cp)-p+1} \cdot
%e^{\log(1)} \cdot e^{\log(1+1/p)} \cdot  \hdots \cdot e^{\log(c)}
%$$
%$$
%\leq
%p^{(cp)-p+1} \cdot 
%\exp\left\{ p \int_1^c \log(x) dx   \right\}
%$$
%$$
%=
%p^{(cp)-p+1} \cdot \exp\left\{ p (c \log(c) - c + 1)  \right\}
%$$
%Let $c>0$ be a constant such that $c p = \ell + p - 1$. 
%Then $c p - p + 1 = \ell$ and we get:
%$$
%= p^{\ell} \cdot \left(\exp \left\{  (c \log(c) - c + 1)\right\}\right)^p
%$$
%%
%%We need the following bound for integer $c$  
%%\lior{why necessarily integer?}
%%\ba
%%\frac{(cp)!}{p!} &= ((c-1) p)!{cp \choose p}\\
%%& < ((c-1) p)! \sqrt{\frac{cp}{2 \pi (c-1)p^2}} \frac{(cp)^{cp}}{p^p ((c-1) p)^{(c-1)p}}\\
%%& = ((c-1) p)! \sqrt{\frac{c}{2 \pi (c-1)p}} (\frac c {c-1})^{(c-1)p} c^p\\
%%& \leq (\frac{(c-1) p} e)^{(c-1) p} \sqrt {(c-1) p}\sqrt{\frac{c}{2 \pi (c-1)p}} (\frac c {c-1})^{(c-1)p} c^p\\
%%& = \sqrt{\frac c{2\pi}} c^{(c-1)p} c^p \cdot (p/e)^{(c-1)p}\\
%%& < (\frac pe)^{(c-1)p} c^{cp + \frac 12}
%%\ea
%%Now if we choose $c$ such that $cp = p + l -1$, then
%%\ba\label{eq:c}
%%\frac{(l + p-1)!}{p!} &< (\frac pe)^{(c-1)p} c^{cp + \frac 12}\\
%%&\leq (\frac pe)^{l-1} (\frac {p + l-1}p)^{p+l -\frac 12}\\
%%%&\leq (\frac pe)^{l} e^{\frac {l}p (p+l)}
%%\ea

%%We now substitute the above bound in Equation \eqref{eq:kappa5}.

\begin{lemma} \label{lem:factorialdivbound}
For any $l,p \in \mathbb{Z}_{>0}$, 
$$\frac{(l+p-1)!}{p!} \leq e \cdot (p/e)^l (\frac{p+l}{p})^{p+l}$$
\end{lemma}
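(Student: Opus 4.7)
The plan is to prove the inequality by induction on $l \geq 1$, with $p \geq 1$ arbitrary. The underlying elementary fact I will use is the classical bound
\[
\left(1 + \tfrac{1}{n}\right)^{n+1} \geq e \qquad \text{for all } n \geq 1,
\]
which will appear exactly once in the inductive step. This bound itself follows from the standard estimate $\log(1+x) \geq x/(1+x)$ for $x \geq 0$ (applied with $x = 1/n$, which rearranges to $(n+1)\log(1+1/n) \geq 1$).

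The base case $l=1$ is trivial: the left-hand side is $p!/p! = 1$, while the right-hand side simplifies to $p \cdot (1 + 1/p)^{p+1}$, and both factors are at least $1$.

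For the inductive step, I will write $(l+p)!/p! = (l+p) \cdot (l+p-1)!/p!$ and apply the inductive hypothesis to the second factor. The resulting inequality to verify is
\[
(l+p) \cdot e (p/e)^l (1+l/p)^{p+l} \;\leq\; e (p/e)^{l+1} \bigl(1 + (l+1)/p\bigr)^{p+l+1}.
\]
After dividing through by $e(p/e)^l$ and expanding $(1+l/p)^{p+l} = (p+l)^{p+l}/p^{p+l}$ and similarly for the $(l+1)$ term, the cross-cancellations collapse this inequality into
\[
e \,(p+l)^{p+l+1} \;\leq\; (p+l+1)^{p+l+1},
\]
which is exactly the $n = p+l$ case of the classical $e$-bound above.

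I do not expect any genuine obstacle; the only care needed is tracking the algebra carefully so that the inductive step reduces cleanly to the $(1 + 1/n)^{n+1} \geq e$ inequality rather than some slightly weaker or stronger form. The choice of the constant $e$ and the exponent $p+l$ in the lemma statement are precisely matched to this classical bound, so no slack is wasted.
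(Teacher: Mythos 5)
Your proof is correct, and it takes a genuinely different route from the paper. The paper proves the bound directly in three lines by sandwiching the two factorials with Stirling-type estimates, namely $n! \leq n^{n+1}/e^{n-1}$ and $n! \geq n^n/e^{n-1}$, applied to $(l+p-1)!$ and $p!$ respectively; after simplification the resulting quantity $\frac{(l+p-1)^{l+p}}{p^p e^{l-1}}$ is dominated by the right-hand side $\frac{(l+p)^{l+p}}{p^p e^{l-1}}$ simply because $l+p-1 < l+p$. Your argument instead inducts on $l$, multiplying in one extra factor of $(l+p)$ per step and showing that the inductive step reduces exactly to the elementary inequality $\left(1 + \tfrac{1}{n}\right)^{n+1} \geq e$ with $n = p+l$; your algebra for this reduction checks out, as does the base case. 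The paper's version is shorter and non-inductive but quietly relies on the two-sided Stirling bound being stated with compatible constants; yours is a few more lines but depends only on the single classical bound $(1+1/n)^{n+1}\geq e$, making the constant $e$ and the exponent $p+l$ in the lemma look inevitable rather than fortuitous. Both are valid; neither has slack to spare.
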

\begin{proof}
\ba
\frac{(l+p-1)!}{p!} &\leq \frac{1}{p!} \cdot \frac{(l+p-1)^{l+p}}{e^{l+p-2}} & \text{using } n! \leq \frac{n^{n+1}}{e^{n-1}} \\
&\leq \frac{e^{p-1}}{p^p} \cdot \frac{(l+p-1)^{l+p}}{e^{l+p-2}} & \text{using } n! \geq \frac{n^n}{e^{n-1}} \\
&\leq e \cdot (p/e)^l (\frac{p+l}{p})^{p+l}
\ea
\end{proof}

\begin{lemma}\label{lem:int1}
For all $\beta > 1$ and $m > l/\log \beta$ 
\be
\sum_{k=m}^{\infty} \beta^{-k} \cdot k^l
\leq m^l \beta^{-m} \cdot \frac{1}{1-\beta^{-1}\cdot e^{l/m}}
\ee
\end{lemma}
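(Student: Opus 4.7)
The plan is to bound the series by a geometric series, using the ratio between consecutive terms.

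Let $a_k = \beta^{-k} k^l$. First I would compute the ratio
\[
\frac{a_{k+1}}{a_k} = \beta^{-1} \left(1 + \tfrac{1}{k}\right)^l \leq \beta^{-1} e^{l/k},
\]
using the standard inequality $1 + x \leq e^x$. For any $k \geq m$ this gives $a_{k+1}/a_k \leq \beta^{-1} e^{l/m}$, and by induction $a_{m+j} \leq a_m \cdot (\beta^{-1} e^{l/m})^j$ for every $j \geq 0$.

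Next I would observe that the hypothesis $m > l/\log\beta$ is exactly the condition that makes this common ratio less than $1$: it rearranges to $l/m < \log\beta$, hence $e^{l/m} < \beta$ and $\beta^{-1} e^{l/m} < 1$. So the geometric series in $j$ converges and summing gives
\[
\sum_{k=m}^{\infty} a_k = \sum_{j=0}^{\infty} a_{m+j} \leq a_m \sum_{j=0}^{\infty} (\beta^{-1} e^{l/m})^j = \frac{a_m}{1 - \beta^{-1} e^{l/m}} = \frac{m^l \beta^{-m}}{1 - \beta^{-1} e^{l/m}},
\]
which is the stated inequality.

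There is no real obstacle here, since the bound is essentially a one-line ratio-test argument; the only mild care needed is making sure that the convergence condition for the geometric series coincides with the hypothesis $m > l/\log\beta$, which it does precisely because the factor $(1 + 1/k)^l$ is maximized (within the tail $k \geq m$) at $k = m$ and controlled by $e^{l/m}$.
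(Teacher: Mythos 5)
Your proof is correct and follows essentially the same approach as the paper: bound the ratio $a_{k+1}/a_k$ by $\beta^{-1}e^{l/m}$ for $k\geq m$ and sum the resulting geometric series. You spell out the induction and the check that the hypothesis $m>l/\log\beta$ gives a common ratio below $1$ a bit more explicitly than the paper does, but the argument is identical.
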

\begin{proof}
Let $a_k = \beta^{-k} \cdot k^l$. For $k \geq m$ we have $\frac{a_{k+1}}{a_k} = \frac 1 \beta (1 + 1/k)^l \leq \frac 1 \beta e^{l/m} =: \alpha$. Therefore,
$$
\sum_{k=m}^{\infty} \beta^{-k} \cdot k^l \leq a_m \cdot \sum_{k=0}^\infty \alpha^k \leq m^l \beta^{-m} \cdot \frac{1}{1-\beta^{-1}\cdot e^{l/m}}.
$$
\end{proof}

%\begin{corollary}
%In particular, if $\delta = \log \beta$, then
%\be
%m = l/\delta \cdot (\log \frac {l}{\delta} + O(\log \log \frac l \delta)) + O(\log 1/\eps) \implies \sum_{k=m}^{\infty} \beta^{-k} \cdot k^l \leq \eps
%\ee
%\label{cor:kappa-bound}
%\end{corollary}

% Integer exponentiation is contained in $\TC^2 \subseteq \NC^1$.: given $a,b$ $a ^ b$ can be computed using a bounded fan-in Boolean circuit of size $\poly (|b| + \log |a|)$ and depth $O(\log b) \cdot (\log |b| + \log \log |a|)$. % TODO move to mainresults

\section{Booleanization details} \label{app:Booleanizationdetails}
Recall that given an arithmetic circuit ${\cal C}$ over $\bbC$ on variables $x_1,\ldots,x_N$, the Booleanization $B_{r,M}({\cal C})$ is the Boolean circuit constructed by assuming that all inputs $x_1,\ldots,x_N$ have magnitude $|x_i| \leq M$, and rounding them to $r$ bits of precision using the operation:\footnote{The floor function is applied to the real and imaginary parts separately.} $$R_r(z) := \floor{2^r \cdot z} / 2^r.$$

In this section, we will prove Lemma \ref{lem:Booleanization}, restated below:
\begin{lemma}
Let $\eps > 0$, and let ${\cal C}$ be a circuit over $\bbC$ of depth $h$, computing a polynomial $g(x_1,\ldots,x_N)$ of degree $d$. Suppose that each multiplication gate of ${\cal C}$ is of fan-in 2, and each addition gate is of fan-in at most $m$. For technical reasons, suppose that all input gates of ${\cal C}$ are labelled by a variable in $\{x_1,\ldots,x_N\}$ (i.e., there are no input gates labelled by a constant in $\bbC$).

If $r > (2hd^2 \ceil{\log(m)} + 1) \log_2(4NMd/\eps)$, then $B_{r,M}({\cal C})$ is a logspace-uniform Boolean circuit of size $\poly(|\mathcal{C}|dhr(\log m)\log(M))$ and depth-$O(h \cdot \log(dhrmM))$. Moreover, $B_{r,M}({\cal C})$ computes a function $\tilde{g}(x_1,\ldots,x_N)$ such that for all $a_1,\ldots,a_N \in \bbC$ with $\max_i |x_i| \leq M$, $$|\tilde{g}(a_1,\ldots,a_n) - g(a_1,\ldots,g_n)| < \eps.$$
\end{lemma}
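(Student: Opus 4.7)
The plan is to decompose the analysis into three parts: (i) bound the magnitude of the value computed at each intermediate gate, (ii) track how the rounding error at the inputs (and any optional truncation at internal gates) propagates through the circuit, and (iii) translate the resulting arithmetic recursion into concrete Boolean-circuit size and depth bounds. The Booleanization $B_{r,M}(\mathcal{C})$ is obtained by representing each value as a fixed-point number with $r$ fractional bits and enough integer bits to hold any value reachable in $\mathcal{C}$; the rounding $R_r$ is applied to the inputs, and each arithmetic operation is implemented by the standard exact Boolean gadget for fixed-point arithmetic on the resulting representation.

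First, I would establish a magnitude bound $|p_v(\tilde{a})|\le V_v$ at every gate by induction on depth, using $V_v \le M$ at inputs, $V_v \le m\max_i V_{w_i}$ at addition gates, and $V_v \le V_{w_1}V_{w_2}$ at multiplication gates. Combined with the assumed degree bound $\deg p_v\le d$ (and the fact that, since every input gate is a variable, each gate's polynomial remains of total degree at most $d$ along the relevant part of the computation), one gets a uniform bound of the form $\log V_v = O(d\log(NMdm))$. This controls the number of integer bits needed in the fixed-point representation; together with $r$ fractional bits, each intermediate value fits in $B = O(r + d\log(NMdm))$ bits.

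Next, I would carry out the error propagation inductively. Writing $\tilde{p}_v$ for the value computed by the Booleanization and $\epsilon_v := |\tilde{p}_v - p_v(\tilde{a})|$, I would prove by induction on depth that
\[
\epsilon_v \le 2^{-r},\qquad
\epsilon_v \le \textstyle\sum_i \epsilon_{w_i} + 2^{-r},\qquad
\epsilon_v \le V_{w_1}\epsilon_{w_2} + V_{w_2}\epsilon_{w_1} + \epsilon_{w_1}\epsilon_{w_2} + 2^{-r},
\]
for input, addition and multiplication gates, respectively. Solving this recursion gives an accumulated error of roughly $(mV_{\max})^{h}\cdot 2^{-r}$ at the output; the hypothesis $r > (2hd^{2}\lceil\log m\rceil+1)\log_{2}(4NMd/\epsilon)$ is precisely what is needed to bring this below $\epsilon$. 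Combining with $|g(a)-g(\tilde{a})|\le \epsilon/2$ (bounded by $N\cdot 2^{-r}\cdot \sup|\nabla g|$ from the magnitude and degree bounds) yields the final $|\tilde{g}(a)-g(a)|<\epsilon$.

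Finally, I would convert this analysis into the Boolean-circuit bounds. Each exact fixed-point addition or multiplication of two $B$-bit numbers admits a $\operatorname{poly}(B)$-size, $O(\log B)$-depth gadget (using a carry-lookahead adder and a Wallace-tree multiplier), and a fan-in-$m$ addition is built as a balanced tree of such gadgets, giving $\operatorname{poly}(mB)$ size and $O(\log(mB))$ depth. Replacing each of the $|\mathcal{C}|$ gates with the corresponding gadget yields overall size $\operatorname{poly}(|\mathcal{C}|\,d\,h\,r\,(\log m)\,\log M)$ and depth $O(h\cdot\log(dhrmM))$; logspace-uniformity follows because each gadget is trivially describable from its bit-length and gate type. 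The main obstacle in the proof is controlling the bit-complexity blowup during the analysis: if intermediate values were kept with full precision, the integer part of the representation could double with every multiplication, and this is what forces the magnitude bound via the degree hypothesis, and also the particular shape of the precision requirement $r = \Omega(hd^{2}\log(m)\log(NMd/\epsilon))$, which stems from combining the $h$-fold worst-case multiplicative error amplification with the degree-$d$ value bound.
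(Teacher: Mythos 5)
Your plan has a genuine gap in the magnitude bound, which is the load-bearing step of the whole argument. You claim a uniform bound of the form $\log V_v = O(d\log(NMdm))$, with no dependence on the depth $h$, but this is false. Consider the circuit that first applies $h-1$ levels of fan-in-$m$ additions to a single variable $x$, producing $m^{h-1} x$, and then squares the result: every gate has degree at most $2$, yet the output value has magnitude $m^{2(h-1)}M^2$, so $\log V$ grows linearly in $h$. The correct statement (the paper's Lemma~\ref{lem:precisionbound}) is $|p_v| \leq (2M)^{d(v)\,h(v)\,\ceil{\log_2 m}+1}$, proved by an induction that crucially exploits how degrees add at multiplication gates; it is not a consequence of the naive recursion $V_v \le m\max V_{w_i}$, $V_v \le V_{w_1}V_{w_2}$, which by itself only yields the useless doubly exponential bound $V \le (mM)^{2^h}$. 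Without the $h$-dependence, both your bit-complexity bound and your choice of precision $r$ are unjustified.

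Beyond this, your approach also differs structurally from the paper's, and it is worth noting why the paper's is tighter. You perform a gate-by-gate error propagation that (implicitly) allows truncation at every internal gate, leading to an accumulated error of the shape $(mV_{\max})^h 2^{-r}$. The paper instead defines $B_{r,M}(\mathcal{C})$ to round only the inputs and then compute \emph{exactly}, noting that the degree bound $d$ caps the number of fractional bits at $O(dr)$ and Lemma~\ref{lem:precisionbound} caps the integer bits, so intermediate representations stay polynomially sized. This reduces the error analysis to a purely mathematical fact about the polynomial $g$: bounding $|g(R_r(a)) - g(a)|$, which the paper does by a telescoping sum over the $N$ coordinates and bounding the coefficients $[x_i^j]g$ via the derivative-extraction circuits of Lemma~\ref{lem:deriv} combined again with Lemma~\ref{lem:precisionbound}. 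This avoids the $h$-fold multiplicative amplification of error that your recursion pays for, and matches the stated precision threshold $r > (2hd^2\ceil{\log m}+1)\log_2(4NMd/\eps)$. If you want to salvage your route, you would need (a) to prove the $h$-dependent magnitude bound by an induction that tracks degree, and (b) to either switch to input-only rounding as the paper does, or accept a worse exponent in $r$ from the per-gate error amplification.
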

\begin{proof}
This follows from Lemmas \ref{lem:boundingrounding} and \ref{lem:depthcomplexitybound}, which are proved in Subsections \ref{subsec:Booleanizationdepth} and \ref{subsec:roundingaccuracy}.
\end{proof}

In Subsection \ref{subsec:roundingaccuracy}, we ensure that the function computed by $B_{r,M}({\cal C})$ is a good approximation of the polynomial computed by ${\cal C}$. And in Subsection \ref{subsec:Booleanizationdepth} we bound the depth of $B_{r,M}({\cal C})$. This requires bounding the number of bits required to represent the values in the intermediate computation. In both Subsections \ref{subsec:roundingaccuracy} and \ref{subsec:Booleanizationdepth}, we will use the following lemma:

% TODO circuit is over ring -- not field.
\begin{lemma}[Bound on circuit value] Let ${\cal C}$ be an arithmetic circuit over $\bbC$. Suppose that each multiplication gate of ${\cal C}$ is of fan-in 2, and each addition gate is of fan-in at most $m$. For technical reasons, suppose that all input gates of ${\cal C}$ are all labelled with variables in $\{x_1,\ldots,x_N\}$ (i.e., there are no constants from $\bbC$ in the input gates).

For each node $v$ at height $h(v)$ in ${\cal C}$, let $p_v(x_1,\ldots,x_N)$ denote the polynomial of degree $d(v)$ computed at $v$. Then if $\max_i |x_i| \leq M$, 
\be
|p_v(x_1,\ldots,x_N)| \leq f(d(v), h(v)) := (2M)^{d(v) h(v) \ceil{\log_2(m)} + 1}.
\ee
\label{lem:precisionbound}
\end{lemma}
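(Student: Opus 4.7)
The plan is to prove the inequality by structural induction on the height $h(v)$ of the gate $v$. The exponent $d(v)h(v)\ceil{\log_2 m}+1$ in the definition of $f$ is engineered so that three parts each do distinct work: the additive $+1$ provides one factor of $2M$ of slack at every leaf; the product $dh$ couples degree with depth, so that its increase at each internal gate can absorb the accumulated bookkeeping from below; and the $\ceil{\log_2 m}$ factor is chosen because $2^{\ceil{\log_2 m}}\geq m$, which is exactly what is needed to swallow an $m$-fold summation in the addition case. Note that $m\geq 2$ may be assumed without loss of generality, since $m=1$ would make addition gates trivial identity gates, which can simply be contracted out of the circuit.

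For the base case, an input gate $v$ is labeled by some variable $x_i$ (by the standing hypothesis that input labels are variables, not constants), so $d(v)=1$ and $h(v)=0$, and the bound $|p_v|=|x_i|\leq M\leq 2M=f(1,0)$ is immediate. For a fan-in-$2$ multiplication gate $v$ with children $w_1,w_2$, one has $d(v)=d(w_1)+d(w_2)$ and $h(v)=1+\max(h(w_1),h(w_2))$, so the inductive hypothesis combined with $|p_v|=|p_{w_1}|\cdot|p_{w_2}|$ gives
\[
|p_v|\leq (2M)^{(d(w_1)h(w_1)+d(w_2)h(w_2))\ceil{\log_2 m}+2}.
\]
A short computation using $h(v)\geq h(w_i)+1$ and $d(w_1),d(w_2)\geq 1$ shows $d(v)h(v)\geq d(w_1)h(w_1)+d(w_2)h(w_2)+2$, and combined with $\ceil{\log_2 m}\geq 1$ this yields the inductive step. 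Finally, for an addition gate $v$ of fan-in $t\leq m$ with children $w_1,\ldots,w_t$, the triangle inequality and the inductive hypothesis give
\[
|p_v|\leq m\cdot (2M)^{d(v)(h(v)-1)\ceil{\log_2 m}+1},
\]
and the remaining inequality $m\leq (2M)^{d(v)\ceil{\log_2 m}}$ follows from $M\geq 1$, $d(v)\geq 1$, and $2^{\ceil{\log_2 m}}\geq m$.

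The step requiring care is the multiplication case, where the two inductive bounds each carry a $+1$ of slack that accumulates to $+2$ in the exponent, one more than the $+1$ allotted at $v$. The missing unit is supplied by the extra $2\ceil{\log_2 m}\geq 2$ that $d(v)h(v)\ceil{\log_2 m}$ gains over $d(w_1)h(w_1)\ceil{\log_2 m}+d(w_2)h(w_2)\ceil{\log_2 m}$. This gain depends on $d(w_1),d(w_2)\geq 1$, which is precisely why the hypothesis that every input gate is labeled by a variable (and not by a constant in $\bbC$) is essential: without it, degree-zero subcircuits could exist, for which this saving vanishes and the induction breaks down.
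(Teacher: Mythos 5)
Your proof is correct, and it follows the same high-level strategy as the paper's: structural induction on height, with the base case handled by the $+1$ slack, the multiplication case by the additivity of degrees, and the addition case by exploiting the $\ceil{\log_2 m}$ factor. The one genuine divergence is in how the fan-in-$m$ addition gates are treated. The paper first performs a circuit transformation: each fan-in-$m$ addition gate is replaced by a depth-$\ceil{\log_2 m}$ tree of fan-in-2 addition gates, which multiplies each original node's height by at most $\ceil{\log_2 m}$, and then the induction is carried out for the transformed circuit with $m=2$ only; the stated bound $(2M)^{d(v)h(v)\ceil{\log_2 m}+1}$ is recovered by substituting the inflated height. You instead keep the original circuit and absorb the $m$-fold triangle-inequality blow-up at an addition gate directly into the exponent, using $m \le 2^{\ceil{\log_2 m}} \le (2M)^{d(v)\ceil{\log_2 m}}$. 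Both routes prove the same inequality; yours avoids a WLOG circuit surgery and is slightly more self-contained, while the paper's reduces the number of cases one must verify in the induction. Your remark on why the ``inputs are variables, not constants'' hypothesis is load-bearing (it guarantees $d(w_i)\ge 1$ at multiplication children, which supplies the missing unit in the exponent) is accurate and in fact makes explicit something the paper's proof relies on only tacitly. Two small tacit assumptions worth flagging, both shared with the paper's own proof: $M \ge 1$ is needed so that $(2M)^{(\cdot)}$ is monotone in the exponent, and $m\ge 2$ is needed so that $\ceil{\log_2 m}\ge 1$; you addressed the latter but the former is used silently.
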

\begin{proof}
First we note that we may assume that $m = 2$ without loss of generality, because each fan-in-$m$ addition gate can be replaced by a depth-$\ceil{\log m}$ tree of fan-in-2 addition gates, increasing the depth of the circuit by at most a factor of $\ceil{\log m}$.
The proof is by induction on $h(v)$, the height of $v$. For the base case, $v$ is an input gate and $h(v) = 0$, $d(v) = 1$, since $p_v = x_i$ for some $|x_i|\leq M$,
\be
|p_v(x_1,\ldots,x_N)| \leq |x_i| \leq (2M)^{d(v) h(v) + 1}.
\ee
For the inductive step, if $v$ is not an input gate, let $w_1$ and $w_2$ be its children at heights $h(w_1), h(w_2) \leq h(v)-1$. If $v$ is a multiplication gate, then
\begin{align*}
|p_v(x)| &= |p_{w_1}(x)| \cdot |p_{w_2}(x)|\\
&\leq (2M)^{d(w_1) h(w_1) + 1} \cdot (2M)^{d(w_2) h(w_2) + 1}\\
&\leq (2M)^{d(w_1) (h(v)-1) + 1 + d(w_2) (h(v)-1) + 1}\\
&= (2M)^{d(v) h(v) + 2-d(v)} \tag{Using $d(w_1) + d(w_2) = d(v)$}\\
&\leq (2M)^{d(v) h(v) + 1}.
\end{align*}

And if $v$ is an addition gate then
\begin{align*}
|p_v(x)| &= |p_{w_1}(x)| + |p_{w_2}(x)|\\
&\leq (2M)^{d(w_1) h(w_1) + 1} + (2M)^{d(w_2) h(w_2) + 1}\\
&\leq 2 \cdot (2M)^{d(v) (h(v)-1) + 1} \tag{By $h(w_1),h(w_2) \leq h(v)$ and $d(w_1),d(w_2) \leq d(v)$}\\
&\leq  (2M)^{d(v) h(v)+1}.
\end{align*}
\end{proof}

\subsection{Bounding the error from rounding} \label{subsec:roundingaccuracy}
A corollary to this lemma is that we can round the input values to a low number of bits of precision, and incur only a small additive error.
\begin{lemma}[Bound on rounding error] \label{lem:boundingrounding}
Let ${\cal C}$ be an arithmetic circuit over $\bbC$ of depth $h$ and degree $d$ such that all input gates are labelled by input variables $\{x_1,\ldots,x_N\}$ and not constants in $\bbC$. Suppose that each multiplication gate of ${\cal C}$ is of fan-in 2, and each addition gate is of fan-in at most $m$.

Let $g(x_1,\ldots,x_N) \in \bbC[x_1,\ldots,x_N]$ be the polynomial computed by ${\cal C}$. Let $M > 0$, $0 \leq \eps < 1$, $a_1,\ldots,a_n \in \bbC$ and $b_1,\ldots,b_n \in \bbC$ be such that \begin{equation*}\max_i |b_i - a_i| \leq \eps, \quad \mbox{ and } \quad \max_i |a_i| \leq M, ,\max_i |b_i| \leq M.\end{equation*} Then $$|g(a_1,\ldots,a_n) - g(b_1,\ldots,b_n)| \leq Nd\eps(2M)^{2hd^2\ceil{log m}+1}$$.
\end{lemma}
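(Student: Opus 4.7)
My approach is a direct structural induction on the circuit $\mathcal{C}$. For each gate $v$ let $p_v$ denote the polynomial it computes and set
$$E_v := |p_v(a) - p_v(b)|, \qquad F(v) := (2M)^{d(v)h(v)\lceil\log_2 m\rceil + 1},$$
so that Lemma \ref{lem:precisionbound} supplies $|p_v(\xi)| \leq F(v)$ for $\|\xi\|_\infty \leq M$. I will prove by induction on $h(v)$ the uniform estimate $E_v \leq d(v) \cdot F(v) \cdot \eps$. Applied at the output gate, this yields $|g(a)-g(b)| \leq d \cdot (2M)^{dh\lceil\log_2 m\rceil+1} \cdot \eps$, which is strictly stronger than the claimed bound once one uses $N, d \geq 1$ and $M \geq 1$ (WLOG, since enlarging $M$ preserves the hypothesis $\max_i|a_i|,|b_i|\leq M$ and only weakens the bound).

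The base case is immediate: for an input gate $v$ labelled by some $x_i$ we have $E_v = |a_i - b_i| \leq \eps$ while $d(v)F(v) = 2M \geq 1$. For an addition gate $v$ with children $w_1,\ldots,w_k$, $k \leq m$, the triangle inequality gives $E_v \leq \sum_j E_{w_j} \leq m \cdot d(v) \cdot \max_j F(w_j) \cdot \eps$ by the inductive hypothesis together with $d(w_j) \leq d(v)$ and $h(w_j) \leq h(v)-1$. The leftover factor of $m$ is absorbed into the exponent slack $F(v)/\max_j F(w_j) \geq (2M)^{d(v)\lceil\log_2 m\rceil} \geq 2^{\lceil\log_2 m\rceil} \geq m$.

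For a multiplication gate $v$ with children $w_1, w_2$, write
$$p_{w_1}(a)p_{w_2}(a) - p_{w_1}(b)p_{w_2}(b) = p_{w_1}(a)\bigl(p_{w_2}(a)-p_{w_2}(b)\bigr) + p_{w_2}(b)\bigl(p_{w_1}(a)-p_{w_1}(b)\bigr),$$
bound the factors $|p_{w_j}| \leq F(w_j)$ via Lemma \ref{lem:precisionbound}, and apply the inductive hypothesis to the differences to obtain $E_v \leq d(v) \cdot F(w_1)F(w_2) \cdot \eps$, using $d(v) = d(w_1)+d(w_2)$. The closing inequality $F(w_1)F(w_2) \leq F(v)$ follows from $h(v) \geq \max(h(w_1),h(w_2))+1$, which implies $d(v)h(v) \geq d(w_1)h(w_1) + d(w_2)h(w_2) + d(v)$; the surplus $d(v)\lceil\log_2 m\rceil \geq 1$ absorbs the extra ``$+1$'' appearing in the exponent of $F(w_1)F(w_2)$.

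The argument is essentially bookkeeping rather than conceptually hard. The main care is in verifying that the exponent of $(2M)$ in $F(v)$ retains enough slack to absorb the constant-factor losses at each inductive step (a factor of $m$ in the addition case and an extra ``$+1$'' in the multiplication case). The bound claimed in the lemma is itself rather loose---it has $d^2$ in the exponent where the induction only needs $d$, and it carries an extra factor of $N$ that my argument does not even use---so the estimate I prove is comfortably stronger and I do not anticipate a genuine obstacle.
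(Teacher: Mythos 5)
Your proof is correct, and it takes a genuinely different route from the paper's. The paper first reduces to fan-in-$2$ addition (expanding depth by a $\lceil\log m\rceil$ factor), then telescopes $g(a)-g(b)$ across the $N$ coordinates, bounding each term $\Delta_i$ via the coefficient polynomials $[x_i^j]g$; these are bounded in magnitude by constructing their derivative circuits via Lemma \ref{lem:deriv} and appealing to Lemma \ref{lem:precisionbound} once on those larger (depth-$O(hd)$) circuits. You instead run a one-shot structural induction directly on $\mathcal{C}$, propagating the Lipschitz-type bound $E_v \leq d(v)F(v)\eps$ gate by gate, invoking Lemma \ref{lem:precisionbound} only to control the factors $|p_{w_j}|$ in the product-rule decomposition at each multiplication gate. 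Your route bypasses both the telescoping step and Lemma \ref{lem:deriv} entirely, and consequently yields the strictly sharper estimate $d\,\eps\,(2M)^{dh\lceil\log_2 m\rceil+1}$: the factor of $N$ disappears and $2hd^2$ in the exponent drops to $dh$. The one implicit assumption you share with the paper is $m\geq 2$ (so that $\lceil\log_2 m\rceil\geq 1$, which you use to absorb both the fan-in factor $m$ at addition gates and the extra ``$+1$'' at multiplication gates); and both proofs implicitly use the \emph{syntactic} degree so that $d(w_j)\leq d(v)$ holds at addition gates. Worth noting as a small bonus: your argument also sidesteps the slightly imprecise one-line estimate $|\Delta_i| \leq \sum_j |b_i-a_i|^j\,|[x_i^j]g|$ in the paper, which conflates $|a_i^j-b_i^j|$ with $|a_i-b_i|^j$.
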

\begin{proof} Assume $m = 2$ without loss of generality, because each fan-in-$m$ addition gate can be replaced by a tree of addition gates of depth $\ceil{\log m}$.
For each $i \in [N]$ and $0 \leq j \leq d$, consider the polynomial $$[x_i^j]g(x_1,\ldots,x_N) \in \bbC[x_1,\ldots,x_{i-1},x_{i+1},\ldots,x_N].$$ By Lemma \ref{lem:deriv}, there is a depth-$(2hd)$ arithmetic circuit ${\cal C}_{i,j}$ computing $[x_i^j]g$. Moreover, the construction in Lemma \ref{lem:deriv} does not add any field elements from $\bbC$ to the input gates. Therefore, by Lemma \ref{lem:precisionbound} we have the following inequality:
$$|[x_i^j]g(a_1,\ldots,a_{i-1},a_{i+1},\ldots,a_N)| \leq (2M)^{2hd^2+1}$$
Therefore, for all $i \in [N]$, defining
$\Delta_i := g(a_1,\ldots,a_i,b_{i+1},\ldots,b_N) - g(a_1,\ldots,a_{i-1},b_i,\ldots,b_N)$, we have \begin{align*}|\Delta_i| \leq \sum_{j=0}^d |b_i - a_i|^j \cdot |[x_i^j]g(a_1,\ldots,a_{i-1},b_{i+1},\ldots,b_N)| \leq d \eps (2M)^{2hd^2 + 1}.\end{align*}
So \begin{align*}|g(a_1,\ldots,a_n) - g(b_1,\ldots,b_n)| = |\sum_{i \in [N]} \Delta_i| \leq \sum_{i \in [N]} |\Delta_i| \leq Nd\eps(2M)^{2hd^2 + 1}.\end{align*}
\end{proof}

A corollary of Lemma \ref{lem:boundingrounding} is the following:
\begin{corollary}\label{cor:boundingroundingimplication}
Let $\eps,M > 0$, and let $g(x_1,\ldots,x_N)$ be as in the statement of Lemma \ref{lem:boundingrounding}. If $r > (10hd^3\ceil{\log m}+1)\log_2(4NdM/\eps)$, then for $x_1,\ldots,x_N \in \bbC$ such that $\max_i |x_i| \leq M$, $$|g(R_r(x_1),\ldots,R_r(x_N)) - g(x_1,\ldots,x_N)| \leq 2^{-r+1} \cdot Nd(2M)^{10hd^3\ceil{\log m} + 1} \leq \eps.$$
\end{corollary}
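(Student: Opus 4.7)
The plan is to reduce Corollary \ref{cor:boundingroundingimplication} to a direct application of Lemma \ref{lem:boundingrounding}. The key observation is that replacing each $x_i$ with its rounded version $R_r(x_i)$ amounts to perturbing the inputs by at most $2^{-r+1}$ in magnitude, since $R_r$ rounds the real and imaginary parts independently to the nearest multiple of $2^{-r}$ from below, each contributing at most $2^{-r}$ of error.

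First I would set $a_i = x_i$ and $b_i = R_r(x_i)$, and verify the two bounds needed to invoke Lemma \ref{lem:boundingrounding}: namely $|b_i - a_i| \leq 2^{-r+1}$ from the rounding argument above, and $\max(|a_i|, |b_i|) \leq 2M$, which follows from the hypothesis $|x_i| \leq M$ together with $|R_r(x_i)| \leq |x_i| + 2^{-r+1}$ (the latter being trivially at most $M$ for the values of $r$ we use, but the weaker bound $2M$ is more than enough). Substituting these into Lemma \ref{lem:boundingrounding} with the rescaled max-magnitude $2M$ yields
\[
|g(R_r(x_1),\ldots,R_r(x_N)) - g(x_1,\ldots,x_N)| \;\leq\; Nd \cdot 2^{-r+1} \cdot (4M)^{2hd^2\ceil{\log m} + 1}.
\]
Absorbing the extra factor of $2^{2hd^2\ceil{\log m}+1}$ into the base of the exponent using $M \geq 1$ and bumping up the exponent to $10hd^3\ceil{\log m}+1$ (which is a looser bound used for cleanliness) gives the first inequality in the corollary.

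For the second inequality, I would take base-2 logarithms of the bound $2^{-r+1} \cdot Nd (2M)^{10hd^3\ceil{\log m}+1} \leq \eps$, which rearranges to the requirement
\[
r \;\geq\; 1 + \log_2(Nd) + (10hd^3\ceil{\log m}+1)\log_2(2M) + \log_2(1/\eps).
\]
Using $M \geq 1$ and $\eps \leq 1$ to bound $2M \leq 4NdM/\eps$ and $Nd/\eps \leq (4NdM/\eps)$, every term on the right-hand side is dominated by $(10hd^3\ceil{\log m}+1)\log_2(4NdM/\eps)$ (with plenty of room), which is exactly the hypothesis on $r$. This completes the argument.

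The only mildly delicate point is keeping track of the bookkeeping between the exponent $2hd^2\ceil{\log m}+1$ produced by Lemma \ref{lem:boundingrounding} and the looser exponent $10hd^3\ceil{\log m}+1$ appearing in the corollary's statement; this is routine, and the looser exponent is chosen precisely to absorb the mild constant factors from the magnitude bound $|b_i| \leq 2M$ and the $4M$ versus $2M$ base. No substantive obstacle arises beyond this careful constant-chasing.
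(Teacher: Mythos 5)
Your proof is correct and follows the only natural route: set $a_i = x_i$, $b_i = R_r(x_i)$, bound $|a_i - b_i| \leq 2^{-r+1}$ from the definition of $R_r$ applied to real and imaginary parts separately, and invoke Lemma~\ref{lem:boundingrounding}. The paper does not spell out a proof of this corollary (it is simply labelled ``a corollary of Lemma~\ref{lem:boundingrounding}''), and your argument is essentially what is intended. Your care about $|R_r(x_i)|$ possibly exceeding $M$ (rounding can increase magnitude when a coordinate is negative) is a genuinely good observation and is handled correctly with the looser bound $2M$.

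One point worth flagging explicitly: both the absorption $(4M)^{2hd^2\ceil{\log m}+1} \leq (2M)^{10hd^3\ceil{\log m}+1}$ and the comparison for the second inequality rely on $2M \geq 1$ (equivalently $M \geq 1/2$; you actually invoke $M\geq 1$). The corollary as stated only assumes $M > 0$, so for very small $M$ the claimed bound is not derivable this way. This is an implicit hypothesis inherited from the paper's use case (where $M = k! \geq 1$), and the corollary is not stated with full rigor on this point, but you should state the assumption explicitly rather than folding it silently into the bookkeeping. With $M \geq 1$ made explicit, your argument is complete and tight.
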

For example, if $m = 2$ and the degree of $g$ is $\poly\log(N)$ and the height is $\poly\log(N)$, then Corollary \ref{cor:boundingroundingimplication} implies that we can round the inputs of the circuit down to $\poly\log(N) \cdot \log(M)$ bits of precision and still obtain a $1/\poly(N)$ overall approximation to the true value.

\subsection{Bounding the depth of the Booleanization} \label{subsec:Booleanizationdepth}
In this subsection, we show that replacing each arithmetic operation in an arithmetic circuit with its fixed-precision Boolean analogue does not increase the depth of the circuit significantly. The parallel Boolean complexity of the basic arithmetic operations is a folklore result:

\begin{lemma} [Boolean complexity of addition, multiplication and exponentiation]
Complex multiplication and iterated addition are contained in $\NC^1$.
That is, 
\begin{enumerate}
\item If $a_1,\ldots,a_t \in \mathbb{C}$ are $t$ complex numbers in $t$ bits then $\sum_{i=1}^t a_i$ can be computed using a bounded fan-in Boolean logspace-uniform circuit of size $\poly (t)$ and depth $O(\log t)$.
\item If $a,b \in \mathbb{C}$ are complex numbers in $t$ bits, then $a \times b$ can be computed using a bounded fan-in Boolean logspace-uniform circuit of size $\poly (t)$ and depth $O(\log t)$.
\end{enumerate}
\label{lem:simple-operations-cost}
\end{lemma}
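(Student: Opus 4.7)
My plan is to handle these two folklore facts in the standard textbook way, reducing everything to the $\NC^1$ circuits for binary addition, carry-save addition, and schoolbook multiplication on fixed-point binary numbers. I would first observe that a complex number in $t$ bits means two real numbers (real and imaginary parts) each represented in $O(t)$ bits; complex addition is two independent real additions and complex multiplication reduces to four real multiplications and two real additions via $(a_R + i a_I)(b_R + i b_I) = (a_R b_R - a_I b_I) + i(a_R b_I + a_I b_R)$. Hence it suffices to exhibit logspace-uniform fan-in-2 Boolean circuits of size $\poly(t)$ and depth $O(\log t)$ for iterated addition and multiplication of real binary fixed-point numbers.

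For item (1), the key tool is the carry-save adder: three $s$-bit numbers can be rewritten as the sum of two $(s+1)$-bit numbers by a single layer of full adders in constant depth with no long carry chains. Stacking these 3-to-2 reducers into a Wallace-tree structure reduces $t$ numbers to $2$ in $O(\log t)$ depth using $\poly(t)$ constant-depth gadgets. The final single binary addition of two numbers of length $O(t)$ is then performed with a carry-lookahead adder: generate/propagate bits are formed in depth $1$, the prefix sum over the associative operation $(g_1,p_1)\circ(g_2,p_2) = (g_2 \vee (p_2 \wedge g_1),\, p_1\wedge p_2)$ yields all carries in depth $O(\log t)$ by a balanced parallel prefix tree, and the output bits are then XORed out in depth $1$. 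The entire construction is manifestly logspace-uniform because every gate's index and fan-in neighbors are determined by simple arithmetic on the layer index and bit position.

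For item (2), multiplication of two $t$-bit real numbers $a$ and $b$ is implemented schoolbook-style: form the $O(t)$ shifted partial products $\{a \cdot b_j \cdot 2^j\}_{j}$ in depth $1$ using AND gates (wired to the appropriate bit positions), and then invoke item (1) to sum $O(t)$ numbers each of $O(t)$ bits, which gives depth $O(\log t)$ and size $\poly(t)$. Signs are handled by two's-complement conventions or by treating the sign bit separately. For complex multiplication one runs four such real multiplications in parallel, then performs two real subtractions via a single addition of negations, all within the same $O(\log t)$ depth budget.

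The only step with any real subtlety is the carry-lookahead adder used to finish the iterated sum, since iterating standard ripple-carry addition would blow up depth to $\Omega(\log^2 t)$; but since the prefix-computation of generate/propagate pairs is associative, its $O(\log t)$-depth parallelization via a balanced binary tree is routine and well-known (see e.g.\ the standard treatment in Savage's textbook or the classical Ladner--Fischer construction). Logspace uniformity of all three circuits is immediate because the wiring patterns (Wallace-tree layers, prefix-tree layers, partial-product shifts) are given by simple formulas in the bit indices that a Turing machine can evaluate using $O(\log t)$ work tape.
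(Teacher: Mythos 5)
Your proof is correct. The paper in fact does not prove Lemma~\ref{lem:simple-operations-cost} at all---it explicitly flags the statement as ``a folklore result'' and moves on---so there is no proof in the paper to compare against. Your argument is the standard textbook route that the paper implicitly has in mind: reduce complex arithmetic to real fixed-point arithmetic, handle iterated addition with a Wallace tree of carry-save $3$-to-$2$ reducers followed by a parallel-prefix (carry-lookahead) final add to collapse the carry chain in $O(\log t)$ depth, and handle multiplication by forming $O(t)$ shifted partial products in one layer of AND gates and reusing iterated addition. You correctly identify the only genuine subtlety (that iterating ripple-carry would give $\Omega(\log^2 t)$ depth, necessitating the prefix construction) and the uniformity argument (the wiring of the Wallace and prefix trees and the partial-product shifts is given by index arithmetic computable in $O(\log t)$ space). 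One small point worth making explicit is that the carry-save layers increase bit-length by at most one per layer, so after $O(\log t)$ layers the two residual summands still have $O(t)$ bits, keeping the final carry-lookahead adder within the stated depth budget; you implicitly assume this, and it is true.
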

Recall that $B_{r,M}({\cal C})$ is defined to be the Boolean circuit formed by rounding the input variables to $r$ bits of precision, and replacing each arithmetic operation with a Boolean operation, assuming that at the input gates we have $\max_i |x_i| \leq M$.
\begin{lemma} \label{lem:depthcomplexitybound}
Let $r > 0$, and let ${\cal C}$ be an arithmetic circuit over $\bbC$ of depth $h$ and degree $d$ such that the input gates are labelled by variables $\{x_1,\ldots,x_N\}$ and not constants in $\bbC$. Suppose that each multiplication gate of ${\cal C}$ is of fan-in 2, and each addition gate is of fan-in at most $m$.

Then $B_{r,M}({\cal C})$ has depth $O(h\log (dhrmM))$.
\end{lemma}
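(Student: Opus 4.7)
The plan is to show that each of the $h$ arithmetic layers of ${\cal C}$ can be replaced by a Boolean subcircuit of depth $O(\log(dhrmM))$, so that stacking them yields total depth $O(h\log(dhrmM))$. The Boolean depth of a single arithmetic operation is controlled by Lemma \ref{lem:simple-operations-cost}: iterated addition of $m$ numbers of bit length $t$, and multiplication of two numbers of bit length $t$, both admit $\NC^1$ implementations of depth $O(\log(mt))$. So the real task is to bound $t$, the bit length of the exact complex value computed at every intermediate gate of $B_{r,M}({\cal C})$.

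I would split this bit length into an integer and a fractional part. For the integer part, Lemma \ref{lem:precisionbound} gives $|p_v| \leq (2M)^{d(v)h(v)\lceil \log_2 m\rceil + 1}$, so the integer part of every intermediate value uses $O(dh\log(m)\log M)$ bits. For the fractional part, observe that each rounded input has exactly $r$ bits past the binary point; addition preserves the maximum fractional length and multiplication adds the two lengths. A straightforward induction on the height of $v$, paralleling the degree recursion, then shows that the fractional length at $v$ is at most $d(v)\cdot r \leq dr$. Altogether every intermediate value has bit length $b = O(dr + dh\log(m)\log M)$.

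Plugging this into Lemma \ref{lem:simple-operations-cost}, each arithmetic gate of ${\cal C}$ is replaced by a Boolean subcircuit of depth $O(\log(mb)) = O(\log(dhrmM))$ (absorbing inner $\log\log$ factors), and composing these subcircuits across the $h$ arithmetic layers gives the claimed $O(h\log(dhrmM))$ Boolean depth. Logspace uniformity is preserved because the replacement is local and can be generated gate-by-gate from the logspace-uniform description of ${\cal C}$. The only mildly subtle step is the fractional-length induction: one must verify that multiplying two values with long fractional parts does not inflate the bit length beyond $d\cdot r$, but this is immediate from the simultaneous additivity of fractional bit length and of degree under multiplication.
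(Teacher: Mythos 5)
Your proof is correct, and it takes a genuinely different route from the paper's. The paper bounds the intermediate bit complexity by decomposing each $p_v$ into its homogeneous components $p_{v,i}$, invoking arithmetic-circuit homogenization to get a depth-$h(d+1)$ circuit for each component, and then applying \lemref{precisionbound} to the rescaled inputs $x_i \cdot 2^r$ so that $2^{rd} p_{v,i}$ becomes a Gaussian integer whose magnitude can be read off. Your proof sidesteps homogenization entirely: you bound the integer part of the value at each gate directly from \lemref{precisionbound}, and for the fractional part you run a clean induction on the circuit structure --- fractional bit length is preserved under addition and adds under multiplication, exactly in parallel with how formal degree behaves, giving at most $d(v)\cdot r \leq dr$ fractional bits at any gate of degree $d(v)$. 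This yields the tighter bound $b = O(dr + dh(\log m)(\log M))$ compared to the paper's $O(d^3 hr(\log m)(\log M))$, although both collapse to the same $O(\log(dhrmM))$ after taking a logarithm, so the final depth bound is unchanged. Your approach is more elementary and avoids the somewhat delicate rescaling-and-homogenization step.

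One remark that applies equally to your proof and the paper's: both arguments implicitly use that the degree $d(v)$ of the polynomial computed at every intermediate gate is at most $d$, the degree of the output polynomial. For general arithmetic circuits this can fail due to cancellation, but it holds for the (essentially homogeneous, degree-graded) circuits produced by \lemref{deriv}, \lemref{shallowCAC}, and \thmref{depth-reduction} that actually get Booleanized in this paper, so it is a safe hypothesis in context. It would be worth stating it explicitly, since your induction on fractional length uses the additivity of $d(v)$ under multiplication at a couple of points.
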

\begin{proof}
We will prove this lemma by bounding the bit complexity of the values at the intermediate gates $v$ of the arithmetic circuit by $B = O(d^3 hr (\log m)(\log M))$. This will suffice, because when the number of bits at each gate is bounded by $B$ then replacing each addition or multiplication operation with the Boolean implementation adds only $\log(mB)$ depth to the circuit by Lemma \ref{lem:simple-operations-cost}. Thus the total depth is $O(h \log(mB)) = O(h \log (dhrmM))$, as claimed.

%In order to prove the bound on the number of bits needed at each gate, we may assume $m = 2$ without loss of generality, because each fan-in-$m$ addition gate can be replaced by a tree of addition gates of depth $\ceil{\log m}$.
For each node $v$ in ${\cal C}$, let $p_v(x_1,\ldots,x_N)$ be the polynomial computed at $v$. Write $$p_v(x_1,\ldots,x_N) = \sum_{i=0}^d p_{v,i}(x_1,\ldots,x_N),$$ where for each $0 \leq i \leq d$, $p_{v,i}$ is a homogeneous polynomial of degree $i$ (i.e., each of its monomials is of degree $i$). By arithmetic circuit homogenization (cf. Lemma \ref{lem:deriv}), for each $0 \leq i \leq d$ there is an arithmetic circuit of depth $h(d+1)$ computing $p_{i,v}$ such that all input gates are labelled by variables in $\{x_1,\ldots,x_N\}$. 

We apply Lemma \ref{lem:precisionbound} to $p_{v,i}(x_1 \cdot 2^r,\ldots,x_N \cdot 2^r) = p_{v,i}(x_1,\ldots,x_N) \cdot 2^{ir}$, and conclude that $|p_{v,i}(x_1 \cdot 2^r,\ldots,x_N \cdot 2^r)| \leq (2M)^{(r+1)(2d^2\ceil{\log_2(m)} h+1)}$.
Since $2^{r d} p_{v,i}(x_1,\ldots,x_N)$ is a Gaussian integer, we conclude that only $O((r+1)(d^3 h+1)(\log M))$ bits of precision are required to represent $p_{v,i}(x_1 \cdot 2^r,\ldots,x_N \cdot 2^r)$. Hence only $O(d(v) \cdot (r+1)(d^2 h+1)(\log M))$ bits are required to represent $p_v(x_1,\dots,x_N)$. So overall, all intermediate values in $B_{r,M}({\cal B})$ can indeed be represented with $B = O(d^3 hr(\log m)(\log M))$ bits of precision.
\end{proof}

\end{document}